\title{Structured Conformal Inference for Matrix Completion with Applications to Group Recommender Systems}
\newcommand{\mat}[1]{\bm{#1}}
\DeclarePairedDelimiter\norm{\lVert}{\rVert}
\DeclarePairedDelimiter\abs{\lvert}{\rvert}
\DeclarePairedDelimiter\paren{(}{)}
\DeclarePairedDelimiter\brac{[}{]}
\DeclarePairedDelimiter\cbrac{\{}{\}}
\newcommand\lcbrac[1]{\left\{ #1 \right\}}
\def\tt{\ensuremath\text}
\def\cD{\ensuremath\mathcal{D}}
\DeclareMathOperator*{\argmax}{arg\,max}
\newcommand*\bigcdot{\mathpalette\bigcdot@{.5}}
\newcommand*\bigcdot@[2]{\mathbin{\vcenter{\hbox{\scalebox{#2}{$\m@th#1\bullet$}}}}}
\newcommand{\I}[1]{\mathbb{I}\left[#1\right]}
\renewcommand{\P}[1]{\mathbb{P}\left[#1\right]}
\newcommand{\E}[1]{\mathbb{E}\left[#1\right]}
\newcommand{\R}{\mathbb{R}}
\newcommand{\EE}[1]{\mathbb{E}\left[{#1}\right]}
\newcommand{\calD}{{\mathcal{D}}}
\newcommand{\calI}{{\mathcal{I}}}
\newcommand{\calS}{{\mathcal{S}}}
\def\hat{\widehat}
\def\tilde{\widetilde}
\newtheorem{theorem}{Theorem}
\newtheorem{proposition}{Proposition}
\newtheorem{lemma}{Lemma}
\author{Ziyi Liang\textsuperscript{1}, Tianmin Xie\textsuperscript{2}\textsuperscript{*}, Xin Tong\textsuperscript{2,3}, and Matteo Sesia\textsuperscript{2,4}}
\begin{document}

\maketitle
\let\thefootnote\relax
\footnotetext{\hspace{-0.62cm}\textsuperscript{1}Department of Statistics, University of California Irvine, Irvine, CA, USA.\\
  \textsuperscript{2}Department of Data Sciences and Operations, University of Southern California, Los Angeles, CA, USA.\\
  \textsuperscript{3}Business School, University of Hong Kong, Hong Kong, China. \\
  \textsuperscript{4}Department of Computer Science, University of Southern California, Los Angeles, CA, USA. \\
  \textsuperscript{*}The first two authors contributed equally to this work.}

\maketitle

\begin{abstract}
We develop a conformal inference method to construct a joint confidence region for a given group of missing entries within a sparsely observed matrix, focusing primarily on entries from the same column. Our method is model-agnostic and can be combined with any ``black-box'' matrix completion algorithm to provide reliable uncertainty estimation for group-level recommendations. For example, in the context of movie recommendations, it is useful to quantify the uncertainty in the ratings assigned by all members of a group to the same movie, enabling more informed decision-making when individual preferences may conflict. Unlike existing conformal techniques, which estimate uncertainty for one individual at a time, our method provides stronger group-level guarantees by assembling a structured calibration dataset that mimics the dependencies expected in the test group. To achieve this, we introduce a generalized weighted conformalization framework that addresses the lack of exchangeability arising from structured calibration, introducing several innovations to overcome associated computational challenges. We demonstrate the practicality and effectiveness of our approach through extensive numerical experiments and an analysis of the MovieLens 100K dataset.  


 \end{abstract}
\textbf{Keywords}: Collaborative filtering; confidence regions; conformal inference; exchangeability;\\Laplace's method; simultaneous inference.

\section{Introduction} \label{sec:intro}

\subsection{Uncertainty Estimation for Group Recommender Systems}\label{sec:motivation}

Group recommender systems generate product suggestions collectively for groups of users \citep{dara2020survey}, extending traditional algorithms for individual recommendations.
As digital interactions increasingly emphasize shared experiences, these systems are becoming more significant across various domains.
For example, in the context of movies, they can help friends select a film for a movie night \citep{quijano2011happy} and enhance user experiences on {\em Teleparty}—a platform that enables people to watch films together in sync—by suggesting content that aligns with the interests of all participants.
Group recommender systems also have many other applications, including assisting in planning travel itineraries, making dining reservations, and even recommending job candidates.
As digital collaboration expands, the relevance of group recommender systems is likely to grow.

Uncertainty estimation is a key challenge for group recommender systems \citep{xiao2017fairness}, particularly due to the need to reconcile potentially conflicting preferences.
For instance, consider three friends planning a movie night: Anna enjoys sci-fi, Ben prefers action, and Dan favors historical films.
These differing tastes must be carefully balanced to find a mutually satisfactory choice.
This challenge is further compounded by the fact that individual preferences are typically inferred from past viewing histories rather than explicitly known, introducing inherent uncertainty \citep{coscrato2023estimating}.

Suppose the system predicts all three users would rate ``Star Wars'' 5/5 but has high confidence only in Anna’s and Ben’s ratings, while Dan’s rating is uncertain and could be as low as 3/5 due to limited data on his sci-fi preferences.
Alternatively, it may confidently predict all three would rate ``Titanic'' 4/5.
Without accounting for uncertainty, the system might recommend ``Star Wars'' solely based on higher predicted ratings.
However, incorporating uncertainty enables more informed decisions.
For instance, an uncertainty-aware system might recommend ``Titanic'' under the {\em least misery} principle, which prioritizes the experience of the least satisfied group member \citep{polylens-grs-2001}, or ``Star Wars'' under the opposite {\em most pleasure} principle.
Thus, uncertainty estimation is essential, and since group recommendations must account for the preferences of all group members, it is crucial that the uncertainty estimates are valid for all group members simultaneously.


\subsection{Preview of Our Contributions} \label{sec:intro-preview}

\subsubsection{Joint Uncertainty Estimation for Matrix Completion}

We develop a conformal inference \citep{vovk2005algorithmic} method for uncertainty-aware matrix completion, which can be applied to make group recommender systems more reliable.
Recommender systems traditionally estimate user preferences via {\em matrix completion}, predicting missing values in a rating matrix where rows represent users and columns denote products \citep{ramlatchan2018survey}.
This makes it possible to find and leverage meaningful patterns in the data, such as low-rank structures (e.g., similar users like similar movies).
However, the complexity of these models makes uncertainty estimation challenging.
First, traditional theoretical analyses are difficult without strong assumptions on the data distribution, which may not hold in practice.
Second, as machine learning models evolve, statistical methods that rely on their internal mechanisms risk rapid obsolescence.
Conformal inference circumvents these issues by providing rigorous, distribution-free uncertainty estimation while treating the model as a ``black box.''

Group recommender systems require uncertainty estimation {\em jointly} for several users in a group.
While existing methods can provide individual-level uncertainty estimates for matrix completion, their extension to the group setting is nontrivial due to complex dependencies across users.
We address this challenge by developing a method for constructing {\em joint confidence intervals} for $K$ missing entries {\em within the same column} of a partially observed matrix, where $K$ is small relative to the number of rows and columns.
In particular, this method can simultaneously estimates the unobserved ratings given by a given group of $K$ users to the same product, ensuring all true ratings are covered with high probability.

Beyond group recommendations, our method can also be applied to address different matrix completion tasks.
For instance, by transposing the rows and columns of the matrix, the same procedure seamlessly extends to provide uncertainty estimates for {\em bundle recommendations} \citep{zhu2014bundle}, where a system recommends multiple products as a package to a single user.
More broadly, our approach can be adapted to joint inference problems involving different 
subsets of missing entries in a partially observed matrix.
However, for concreteness, we focus on predicting $K$ missing entries within the same column.

\subsubsection{Conformal Inference for Structured Joint Predictions}

Extending the conformal inference framework, we construct confidence (or prediction) intervals using point estimates from a ``black-box'' matrix completion model.
The width of these intervals is tuned using a separate {\em calibration} dataset—a disjoint subset of observed ratings not used for training the matrix completion model.
Since the calibration ratings are observed, we can empirically evaluate the coverage of any confidence interval as a function of the calibration parameter and select the most liberal value (yielding the smallest possible intervals) that still ensures the desired coverage level.
This approach is supported by theoretical guarantees stating that the desired coverage holds at test time, a result typically established in conformal inference via the assumption of {\em exchangeability} between the test and calibration data.
However, in our case these guarantees are less straightforward.

Departing from the standard conformal inference approach, our target of inference is not a scalar quantity (e.g., the rating assigned by a single user) but a vector representing the ratings assigned by all $K$ members of a group.  
These $K$ estimation tasks are inherently dependent, as they share the same data and matrix completion model.  
Moreover, group formation is often not completely random: for example, users with similar tastes may be more likely to belong to the same group, and the matrix completion model may systematically overestimate or underestimate their ratings for a given product.  
As we will see empirically, accounting for these complex dependencies is crucial for obtaining joint confidence intervals that are both valid and informative (i.e., not overly conservative).  
To address this, we calibrate joint confidence intervals using a {\em structured calibration dataset} composed of {\em synthetic groups} of user-product observations.  
These groups are constructed by sampling without replacement from the observed matrix entries to approximate the dependencies expected at test time.  
While this approach is conceptually intuitive, translating it into a concrete method requires overcoming significant challenges.  

\subsubsection{Overcoming Non-Exchangeability via Weighted Conformal Inference }

The structured calibration dataset used by our method does not satisfy the typical exchangeability assumption with the test data, a key requirement in standard conformal inference methods.
This lack of exchangeability seems unavoidable in our context, as calibration can only rely on groups sampled from the observed portion of the ratings matrix, whereas at test time, we seek to estimate missing ratings in the unobserved portion, leading to a distribution shift.
To address this challenge, we extend the {\em weighted exchangeability} framework of \citet{tibshirani-covariate-shift-2019}, originally developed to handle a different {\em covariate shift} problem, as reviewed in Appendix~\ref{app:cov-shift}.
Similar to \citet{tibshirani-covariate-shift-2019}, our solution involves defining and computing suitable weights for each group in our structured calibration set.
Since computing these weights exactly is computationally expensive, we approximate them by combining the {\em Gumbel-max trick} \citep{gumbel1954statistical} with an extension of the classical {\em Laplace method}.
The Gumbel-max trick transforms an intractable summation in the definition of the weights into a more manageable, though still analytically complex, integral, while the Laplace method provides an accurate approximation.
We prove the asymptotic consistency of this approach and demonstrate its effectiveness empirically, showing that it yields valid and informative confidence intervals.

\subsection{Related Works} \label{sec:related}

Despite its deep historical roots in the statistics literature \citep{scheffe1953method}, joint inference for multiple unknowns remains relatively underdeveloped in conformal inference \citep{vovk2005algorithmic,lei2018distribution}.
A few recent works have introduced conformal prediction methods for multi-dimensional responses \citep{diquigiovanni2022conformal, Xu2024ConformalPF, park2024semiparametricconformalprediction}, but these settings differ from ours as they assume observed data with multi-dimensional responses that are exchangeable with the test data.
In the context of matrix completion, \citet{gui2023conformalized} and \citet{shao2023distribution} recently proposed conformal inference methods that compute {\em individual} confidence intervals for single user/movie pairs.
This paper is more closely aligned with the approach of \citet{gui2023conformalized}, reviewed in Appendix~\ref{app:cmc}, but diverges from their work as we seek {\em simultaneous} confidence regions for {\em structured groups} of missing entries---a more challenging problem.

As demonstrated by our numerical experiments on the MovieLens 100K dataset \citep{movielens100k}, previewed in Figure~\ref{fig:movielens-preview} and detailed in Section~\ref{sec:experiments-data}, obtaining informative joint confidence intervals for groups of size $K$ when $K > 1$ is nontrivial.
For instance, applying a {\em Bonferroni} adjustment of order $1/K$ to individual-level conformal inferences \citep{gui2023conformalized} can easily achieve simultaneous validity but typically results in unnecessarily wide intervals, as it fails to account for the complex dependencies among the $K$ tasks based on the same data and model.
In contrast, our method automatically adapts to these dependencies, achieving the desired coverage guarantees with significantly narrower and more informative confidence intervals, particularly for larger values of $K$.

\begin{figure}[!htb]
    \centering
    \includegraphics[width=0.8\linewidth]{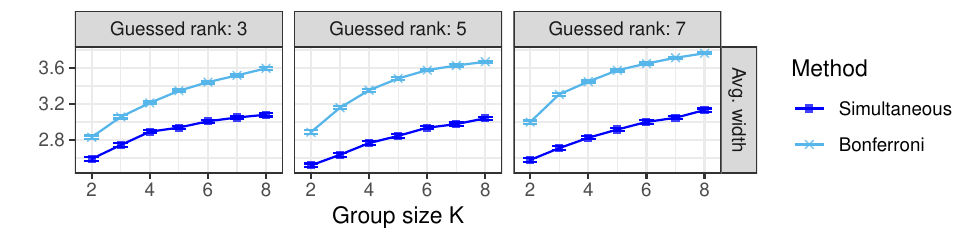}
    \caption{Average width of conformal confidence intervals for groups of $K$ missing entries within the same column of a partially observed ratings matrix from the MovieLens 100K dataset, as a function of $K$.
The results in different columns correspond to a low-rank matrix completion algorithm with varying hypothesized ranks.
Our {\em simultaneous} method ensures 90\% coverage of all true ratings within each group while avoiding unnecessarily wide intervals.
For comparison, we include naive joint intervals obtained via a {\em Bonferroni} correction (of order $1/K$) applied to individual conformal confidence intervals, which is too conservative due to its inability to account for dependencies among the $K$ tasks. }
    \label{fig:movielens-preview}
\end{figure}

More broadly, there exist conformal methods that can handle non-exchangeable data in different contexts \citep{tibshirani-covariate-shift-2019,barber_beyond_2022,dobriban2023symmpi}.
Our solution builds on the framework of \citet{tibshirani-covariate-shift-2019}, originally focused on covariate shift, adapting it to address the unique challenges of our problem.

While we build on conformal inference to avoid parametric assumptions and algorithmic simplifications, there exist alternative approaches for uncertainty estimation in matrix completion.
For instance, \citet{chen_inference_2019}, \citet{xia_statistical_2021}, and \citet{farias_uncertainty_2022} developed asymptotic confidence intervals for missing entries under parametric assumptions, while \citet{yuchi2022bayesian} addressed a similar problem using Bayesian techniques.
Future work could explore integrating these approaches within the ``black-box'' model utilized by our method to obtain even more informative conformal inferences.

\subsection{Outline of the Paper}

This paper is organized as follows: Section~\ref{sec:preliminaries} states our assumptions and goals,
Section~\ref{sec:methodology} presents our method, Section~\ref{sec:implementation} delves into essential implementation aspects,
Section~\ref{sec:empirical} describes empirical demonstrations, and Section~\ref{sec:conclusion} concludes with a discussion.
The appendices are in the Supplementary Material.
Appendix~\ref{app:baselines-background} summarizes some technical background.
Appendix~\ref{app:implementation} explains additional implementation details, Appendix~\ref{app:numerical-results} summarizes further empirical results, and
Appendix~\ref{app:proofs} contains all mathematical proofs.

\section{Setup and Problem Statement}  \label{sec:preliminaries}

Let $\mat{M} \in \mathbb{R}^{n_r \times n_c}$ be a {\em fixed} matrix with $n_r$ rows and $n_c$ columns.
Each entry $M_{r,c}$ is the rating given by user $r \in [n_r] := \{1,\ldots,n_r\}$ to product $c \in [n_c] := \{1,\ldots,n_c\}$.
While it may be common to assume the matrix $\mat{M}$ has a specific structure, such as low-rank, possibly with independent random noise, we treat $\mat{M}$ as deterministic, modeling instead the randomness in the data observation (or missingness) process.

Consider a fixed number of observations, $n_{\mathrm{obs}}$, and let $\mat{M}_{\mat{X}_{\mathrm{obs}}}$ denote the observed portion of $\mat{M}$, indexed by $\mat{X}_{\mathrm{obs}} = (X_1, \ldots, X_{n_{\mathrm{obs}}})$ with each $X_i \in [n_r] \times [n_c]$.
We assume that $\mat{X}_{\mathrm{obs}}$ is randomly sampled {\em without replacement} from $[n_r] \times [n_c]$.
For increased flexibility, we allow this sampling process to be {\em weighted} according to parameters $\mat{w} = (w_{r,c})_{(r,c) \in [n_r] \times [n_c]}$, which we assume to be known for now. See Appendix~\ref{app:missingness-estimation} for details on how these weights may be estimated in practice.
In a compact notation, we write our sampling model as:
\begin{align}\label{model:sampling_wo_replacement}
\mat{X}_{\mathrm{obs}} & \sim \Psi(n_{\mathrm{obs}}, [n_r] \times [n_c], \mat{w}).
\end{align}
In general, $\Psi(m, \mathcal{X}, \mat{w})$ denotes {\em weighted} random sampling without replacement of $m \geq 1$ distinct elements (``balls") from a finite dictionary (``urn") denoted as $\mathcal{X}$, with $|\mathcal{X}| \geq m$.
More precisely, $\mat{X} \sim \Psi(m, \mathcal{X}, \mat{w})$ corresponds to saying that
$\P{\mat{X} = \bm{x}} = (w_{x_1}) / (\sum_{x \in \mathcal{X}} w_x) \cdot (w_{x_2}) / (\sum_{x \in \mathcal{X}} w_x - w_{x_1}) \cdot \ldots \cdot (w_{x_{m}}) / (\sum_{x \in \mathcal{X}} w_x - \sum_{i=1}^{m-1} w_{x_i})$ for any $\mat{x} \in \mathcal{X}^{m}$.
Therefore, the weights $\mat{w}$ do not need to be normalized.
This model is a special case of the multivariate Wallenius' noncentral hypergeometric distribution \citep{wallenius1963biased}, and it reduces to random sampling without replacement if all $w_{r,c}$ are the same.

Let $\cD_{\mathrm{obs}}  := \{X_1, \ldots, X_{n_{\mathrm{obs}}}\} \subset [n_r] \times [n_c]$ be the unordered set of observed indices, with complement $\cD_{\mathrm{miss}} \coloneqq [n_r] \times [n_c] \setminus \cD_{\mathrm{obs}}$.
Our goal is to construct a joint confidence region for $K \geq 1$ unobserved entries from the same column, indexed by $\mat{X}^* = (X^*_{1}, X^*_{2}, \ldots, X^*_{K})$, where each $X^*_{k} \in \cD_{\mathrm{miss}}$ and $K$ is a fixed parameter, much smaller than both $n_r$ and $n_c$.
The indices $\mat{X}^*$ are assumed to be sampled without replacement from $\cD_{\mathrm{miss}}$, subject to the constraint that they belong to the same column.
The sampling of $\mat{X}^*$ from $\cD_{\mathrm{miss}}$ is guided by distinct weights $\mat{w}^* = (w^*_{r,c})_{(r,c) \in [n_r] \times [n_c]}$, also assumed to be known for now.
We will discuss later the implications of the choice of $\mat{w}^*$.

Then, we assume $\mat{X}^*$ is sampled as
\begin{align}\label{model:weighted_test_generation}
\begin{split}
\mat{X}^* \mid \cD_{\mathrm{obs}}, \cD_{\mathrm{miss}} & \sim \Psi^{\text{col}}(K, \cD_{\mathrm{miss}}, \mat{w}^*),
\end{split}
\end{align}
where $\Psi^{\text{col}}$ is a constrained version of $\Psi$ that samples indices belonging to the same column.
This distribution is equivalent to the following sequential sampling procedure:
\begin{align}\label{model:weighted_test_generation_sequential}
  \begin{split}
    X^*_{1} \mid \cD_{\mathrm{obs}}, \cD_{\mathrm{miss}} & \sim \Psi(1, \cD_{\mathrm{miss}}, \mat{w}^*) \\
    X^*_{2} \mid \cD_{\mathrm{obs}}, \cD_{\mathrm{miss}}, X^*_{1} & \sim \Psi(1, \cD_{\mathrm{miss}} \setminus \{X^*_{1}\}, \tilde{\mat{w}}^*), \\
    & \; \; \vdots \\
    X^*_{K} \mid \cD_{\mathrm{obs}}, \cD_{\mathrm{miss}}, X^*_{1}, \ldots, X^*_{K-1} & \sim \Psi(1, \cD_{\mathrm{miss}} \setminus \{X^*_{1}, \ldots, X^*_{K-1}\}, \tilde{\mat{w}}^*),
  \end{split}
\end{align}
where the weights $\tilde{\mat{w}}^*$ are given by $\tilde{w}^*_{r,c} = w^*_{r,c} \I{ c = X^*_{1,2}}$ and $X^*_{1,2}$ is the column of $X^*_{1}$.

To ensure \eqref{model:weighted_test_generation} and \eqref{model:weighted_test_generation_sequential} are well-defined, we assume for simplicity that the number of missing entries in all columns is much larger than $K$; i.e., $\abs{\cbrac{(r',c') \in \cD_{\mathrm{miss}}\mid c'=c}} \gg K$, for all $c \in [n_c]$. This is almost always the case when $\mat{M}$ is sparsely observed. For instance, this is clearly the case in the MovieLens dataset, where approximately $94\%$ of ratings are missing. However, in rare cases where some columns have fewer than
$K$ missing entries, these columns should be omitted from the sampling process, as explained in Appendix~\ref{app:corner-cases}.

This premise allows us to state our goal formally.
For a given coverage level $\alpha \in (0,1)$, we seek a {\em joint confidence region}, denoted as $\hat{\mat{C}}_{}(\mat{X}^*; \mat{M}_{\mat{X}_{\mathrm{obs}}}, \alpha) \subseteq \mathbb{R}^K$, for the $K$ missing matrix entries indexed by $\mat{X}^*$.
This confidence region should be informative (i.e., not too wide) and guarantee finite-sample simultaneous coverage, in the sense that
\begin{align} \label{eq:K-coverage}
  \P{ M_{r,c} \in \hat{C}_{r,c}(\mat{X}^*; \mat{M}_{\mat{X}_{\mathrm{obs}}}, \alpha), \; \forall (r,c) \in  \{X^*_{1},\ldots, X^*_{K} \} } \geq 1-\alpha.
\end{align}
Note that the probability in~\eqref{eq:K-coverage} is taken with respect to both $\mat{X}_{\mathrm{obs}}$, sampled according to~\eqref{model:sampling_wo_replacement}, and $\mat{X}^*$, sampled according to~\eqref{model:weighted_test_generation_sequential}, while $\mat{M}$ is fixed.

The sampling weights $\mat{w}$ in~\eqref{model:sampling_wo_replacement} and the test weights $\mat{w}^*$ in~\eqref{model:weighted_test_generation} serve distinct purposes.
The weights $\mat{w}$ model situations where the matrix is not observed uniformly at random. For instance, in collaborative filtering, certain users may be more active, and some movies may receive more ratings. Such patterns can be captured using heterogeneous weights in \eqref{model:sampling_wo_replacement}.
Non-uniform missingness patterns are often evident from the observed data, allowing for empirical estimation of $\mat{w}$, as explained in Appendix~\ref{app:missingness-estimation}.

In contrast, the role of $\mat{w}^*$ is to shape the interpretation of~\eqref{eq:K-coverage}.
If all $w^*_{r,c} = 1$, Equation~\eqref{eq:K-coverage} offers only {\em marginal} coverage, for $\mat{X}^*$ drawn {\em uniformly} from the missing portion of the matrix.
If $w^*_{r,c} = 1$ only when $c \in \mathcal{A}$, for the subset $\mathcal{A} \subset [n_c]$ corresponding to ``action movies'', Equation~\eqref{eq:K-coverage} ensures coverage specifically for action movies, offering a stronger {\em conditional} guarantee.
Thus, $\mat{w}^*$ generally allows interpolation between marginal and conditional perspectives~\citep{guan2023localized}.
However, there is a trade-off: more concentrated weights $\mat{w}^*$ yield stronger guarantees but result in wider (less informative) confidence regions \citep{foygel2021limits}, which is why our flexibility in the choice of $\mat{w}^*$ is desirable.

\section{Methods} \label{sec:methodology}

This section describes the key components of our method, which we call Structured Conformalized Matrix Completion (SCMC).
Section~\ref{sec:method-outline} gives a high-level overview of SCMC.
Section~\ref{sec:calibration-groups} details the construction of the structured calibration set.
Section~\ref{sec:quantile-lemma} recalls a quantile inflation lemma underlying our simultaneous coverage results.
Section~\ref{sec:weights} characterizes the conformalization weights needed to apply our quantile inflation lemma in the context of SCMC.
Section~\ref{sec:bounds} establishes lower and upper simultaneous coverage bounds.
Important computational shortcuts are presented in Section~\ref{sec:implementation}.

\subsection{Method Outline} \label{sec:method-outline}

Having observed the matrix entries indexed by $\cD_{\mathrm{obs}} \subset [n_r] \times [n_c]$, SCMC partitions $\cD_{\mathrm{obs}}$ into two disjoint subsets: a {\em training} set $\cD_{\mathrm{train}}$ and a {\em calibration} set $\cD_{\mathrm{cal}}$, so that $\cD_{\mathrm{obs}} = \cD_{\mathrm{train}} \cup \cD_{\mathrm{cal}}$. Unlike the typical random partitioning in (split) conformal inference, we employ a more sophisticated method to form $\cD_{\mathrm{train}}$ and $\cD_{\mathrm{cal}}$. This approach, detailed in Section~\ref{sec:calibration-groups}, yields a calibration set approximately mirroring the structure of $\mat{X}^*$.

The data in $\cD_{\mathrm{train}}$ are used to train a matrix completion model, producing a point estimate $\hat{\mat{M}}(\cD_{\mathrm{train}})$ of the full matrix $\mat{M}$.
{\em Any} algorithm can be applied for this purpose.
For example, if $\mat{M}$ is suspected to have a low-rank structure, one may employ nuclear norm minimization \citep{candes_exact_2008}, computing
$\hat{\mat{M}}(\cD_{\mathrm{train}}) = \underset{\mat{Z} \in \mathbb{R}^{n_r \times n_c}}{\arg\min} \norm{\mat{Z}}_*$ subject to $\mathcal{P}_{\cD_{\mathrm{train}}}(\mat{Z}) = \mathcal{P}_{\cD_{\mathrm{train}}}(\mat{M})$,
where $\norm{\cdot}_*$ denotes the nuclear norm and $\mathcal{P}_{\cD_{\mathrm{train}}}(\mat{M})$ is the orthogonal projection of $\mat{M}$ onto the subspace of matrices that vanish outside $\cD_{\mathrm{train}}$.
Beyond convex optimization, our method can be combined with any matrix completion algorithm, including deep learning approaches \citep{fan2017deep}, or methods incorporating side information~\citep{natarajan2014inductive}.
While SCMC tends to produce more informative confidence regions if $\hat{\mat{M}}(\cD_{\mathrm{train}})$ estimates $\mat{M}$ more accurately, its coverage guarantee requires no assumptions on how $\hat{\mat{M}}$ is derived from $\cD_{\mathrm{train}}$.

Our method translates any {\em black-box} estimate $\hat{\mat{M}}(\cD_{\mathrm{train}})$ into confidence regions for the missing entries as follows.
Let $\mathcal{C}$ be a pre-specified set-valued function, termed {\em prediction rule}, that takes as input $\hat{\mat{M}}$, a list of $K$ target indices $\mat{x}^*=(x^*_1,\ldots,x^*_K)$, and a parameter $\tau \in [0,1]$, and outputs $\mathcal{C}(\mat{x}^*, \tau, \hat{\mat{M}}) \subseteq \mathbb{R}^K$.
(We will often make the dependence of $\hat{\mat{M}}(\cD_{\mathrm{train}})$ on $\cD_{\mathrm{train}}$ implicit.)
Our method is flexible in the choice of the prediction rule, but we generally require that this function be monotone increasing in $\tau$, in the sense that
 \begin{align} \label{eq:pred-rule-monotone}
  \mathcal{C}(\mat{x}^*, \tau_1, \hat{\mat{M}}) \subseteq \mathcal{C}(\mat{x}^*, \tau_2, \hat{\mat{M}}), \quad \text{ almost surely if } \tau_1 < \tau_2,
\end{align}
and satisfies the following boundary conditions almost surely:
\begin{align} \label{eq:pred-rule-boundary}
  & \mathcal{C}(\mat{x}^*, 0, \hat{\mat{M}}) =   \left\{ \left( \hat{\mat{M}}_{x^*_1} , \ldots, \hat{\mat{M}}_{x^*_K} \right) \right\},
  & \lim_{\tau \to 1^-} \mathcal{C}(\mat{x}^*, \tau, \hat{\mat{M}}) = \mathbb{R}^K.
\end{align}
Intuitively, $\tau=0$ corresponds to placing absolute confidence in the accuracy of $\hat{\mat{M}}$, while approaching $\tau=1$ suggests that the point estimate carries no information about $\mat{M}$.
For example, one may choose
$\mathcal{C}(\mat{x}^*, \tau, \hat{\mat{M}}) =  \left( \hat{\mat{M}}_{x^*_1} \pm \tau/(1-\tau), \ldots, \hat{\mat{M}}_{x^*_K} \pm \tau/(1-\tau) \right)$,
which produces regions in the shape of a hyper-cube.
We will use this approach in our numerical experiments due to its ease of interpretation, though it is not the only option available.
See Appendix~\ref{sec:scores} for further details and examples of alternative prediction rules.

The purpose of the observations indexed by $\cD_{\mathrm{cal}}$, which are not used to train $\hat{\mat{M}}$, is to find the smallest possible $\tau$ needed to achieve~\eqref{eq:K-coverage}.
As detailed in Section~\ref{sec:calibration-groups}, SCMC carefully constructs $n$ {\em calibration groups}
$\{ \mat{X}^{\mathrm{cal}}_{1}, \dots, \mat{X}^{\mathrm{cal}}_{n} \}$ from $\cD_{\mathrm{cal}}$, where each $\mat{X}^{\mathrm{cal}}_i$ consists of $K$ observed matrix entries within the same column; i.e., $\mat{X}^{\mathrm{cal}}_i = (X^{\mathrm{cal}}_{i,1}, \ldots, X^{\mathrm{cal}}_{i,K})$.
As explained in the next section, $n$ can be fixed arbitrarily, although it should be small compared to the total number of observed entries and typically at least greater than 100 to avoid excessive variability in the results \citep{vovk2012conditional}.
Intuitively, these calibration groups are constructed in such a way as to (approximately) simulate the structure of $\mat{X}^*$.

For each calibration group, we compute a {\em conformity score} $S_i = S\paren{\mat{X}^{\mathrm{cal}}_i}$, defined as the smallest value of $\tau$ for which the candidate confidence region covers all $K$ entries of $\mat{M}_{\mat{X}^{\mathrm{cal}}_i}$:
\begin{align} \label{eq:scores}
    S_i := \inf \cbrac*{\tau \in \mathbb{R}: \mat{M}_{\mat{X}^{\mathrm{cal}}_i} \in \mathcal{C}(\mat{X}^{\mathrm{cal}}_i, \tau, \hat{\mat{M}}) }.
\end{align}
Then, the calibrated value $\hat{\tau}_{\alpha,K}$ of $\tau$ is obtained by evaluating the following {\em weighted quantile} \citep{tibshirani-covariate-shift-2019} of the empirical distribution of the calibration scores:
\begin{align}\label{eq:weighted-quantile}
  \hat{\tau}_{\alpha,K} = Q\paren[\Big]{1-\alpha; \sum_{i=1}^n p_i\delta_{S_i} + p_{n+1}\delta_{\infty}}.
\end{align}
Above, $Q(1-\alpha;F)$ denotes the $1-\alpha$ quantile of a distribution $F$ on $\mathbb{R} \cup \cbrac{\infty}$; that is, for $S \sim F$,
$Q(\beta; F) = \inf{\cbrac*{s \in \mathbb{R}: \mathbb{P}\paren*{S \leq s} \geq \beta}}$.
The distribution in~\eqref{eq:weighted-quantile} places a point mass $p_i$ on each observed value of $S_i$ and an additional point mass $p_{n+1}$ at $+\infty$.
The expression of the weights $p_i$ and $p_{n+1}$ will be given in Section~\ref{sec:weights}.
These weights generally depend on $\mat{X}^*$ and on all $\mat{X}^{\mathrm{cal}}_i$, although this dependence is kept implicit here for simplicity.

Finally, $\hat{\tau}_{\alpha,K}$ is utilized to construct a joint confidence region
$\hat{\mat{C}}_{}(\mat{X}^*; \mat{M}_{\mat{X}_{\mathrm{obs}}},\alpha) = \mathcal{C}(\mat{X}^*, \hat{\tau}_{\alpha,K}, \hat{\mat{M}})$.
This is proved in Section~\ref{sec:bounds} to achieve~\eqref{eq:K-coverage}, as long as $\mat{X}_{\mathrm{obs}}$ is sampled from~\eqref{model:sampling_wo_replacement} and $\mat{X}^*$ from~\eqref{model:weighted_test_generation_sequential}.
The procedure is outlined by Algorithm~\ref{alg:simultaneous-confidence-region}, while the details are explained in subsequent sections.
Crucially, SCMC can be implemented efficiently and is scalable to large matrices; see Appendix~\ref{app:complexity} for an analysis of its computational costs.

\begin{algorithm}[!htb]
\caption{Simultaneous Conformalized Matrix Completion (SCMC)}
    \label{alg:simultaneous-confidence-region}
    \begin{algorithmic}[1]
        \STATE \textbf{Input}: partially observed matrix $\mat{M}_{\mat{X}_{\mathrm{obs}}}$, with unordered list of observed indices $\cD_{\mathrm{obs}}$;
        \STATE \textcolor{white}{\textbf{Input}:} test group $\mat{X}^*$; nominal coverage level $\alpha \in (0,1)$;
        \STATE \textcolor{white}{\textbf{Input}:} any matrix completion algorithm producing point estimates;
        \STATE \textcolor{white}{\textbf{Input}:} any prediction rule $\mathcal{C}$ satisfying~\eqref{eq:pred-rule-monotone} and~\eqref{eq:pred-rule-boundary};
        \STATE \textcolor{white}{\textbf{Input}:} desired number $n$ of calibration groups.

        \STATE Apply Algorithm~\ref{alg:calibration-group} to obtain $\cD_{\mathrm{train}}$ and the calibration groups $(\mat{X}^{\mathrm{cal}}_{1}, \dots, \mat{X}^{\mathrm{cal}}_{n})$ in $\cD_{\mathrm{cal}}$.
        \STATE
        Compute a point estimate $\hat{\mat{M}}$ by applying the matrix completion algorithm to $\cD_{\mathrm{train}}$.
        \STATE
        Compute the conformity scores $S_i$, for each calibration group $i \in [n]$, by Equation \eqref{eq:scores}.
        \STATE Compute $\hat{\tau}_{\alpha,K}$ in \eqref{eq:weighted-quantile}, based on the weights $p_i$ given by \eqref{eq:gwpm-weights-k} in Section~\ref{sec:weights}.
        \STATE \textbf{Output}: Joint confidence region $\hat{\mat{C}}_{}(\mat{X}^*; \mat{M}_{\mat{X}_{\mathrm{obs}}},\alpha) = \mathcal{C}(\mat{X}^*, \hat{\tau}_{\alpha,K}, \hat{\mat{M}})$.
    \end{algorithmic}
\end{algorithm}


\subsection{Assembling the Structured Calibration Set}  \label{sec:calibration-groups}

This section explains how to partition $\cD_{\mathrm{obs}}$ into a training set $\cD_{\mathrm{train}}$ and $n$ {\em calibration groups} $\mat{X}^{\mathrm{cal}}_1, \ldots, \mat{X}^{\mathrm{cal}}_n$ that mimic the structure of $\mat{X}^*$.
To begin, note that $n$ cannot exceed $\lfloor n_{\mathrm{obs}}/K \rfloor$ and
$n \leq \sum_{c=1}^{n_c} \left\lfloor n^c_{\mathrm{obs}} / K \right\rfloor \eqqcolon \xi_{\mathrm{obs}}$,
where $n^c_{\mathrm{obs}}$ is the number of observed entries in column $c \in [n_c]$, which is a function of $\mat{X}_{\mathrm{obs}}$ in~\eqref{model:sampling_wo_replacement}.
A practical rule of thumb is to set $n = \min\{1000, \lfloor \xi_{\mathrm{obs}}/2 \rfloor\}$.
In the following, we assume $n$ is parameter (e.g., $n = 1000$) and satisfies $n \leq \xi_{\mathrm{obs}}$; this streamlines the analysis of SCMC without much loss of generality.


For any feasible $n$, Algorithm~\ref{alg:calibration-group} partitions $\cD_{\mathrm{obs}}$ into a training set $\cD_{\mathrm{train}}$ and $n$ calibration groups $\mat{X}^{\mathrm{cal}}_1, \ldots, \mat{X}^{\mathrm{cal}}_n$.
Initially, we set $\cD_{\mathrm{train}} = \emptyset$. We then iterate over each column $c$, adding a random subset of $m^c:=n^c_{\mathrm{obs}} \bmod K$ observations from that column to $\cD_{\mathrm{train}}$, where $n^c_{\mathrm{obs}}$ is the total number of observations in column $c$.
This ensures the remaining number of observations in column $c$ is a multiple of $K$ (possibly zero).
Then, for each $i \in [n]$, $\mat{X}^{\mathrm{cal}}_i$ is obtained by sampling $K$ observations uniformly without replacement from a randomly chosen column.
Finally, all remaining observations are assigned to $\cD_{\mathrm{train}}$.

\begin{algorithm}
\caption{Assembling the structured calibration set for Algorithm~\ref{alg:simultaneous-confidence-region}}
\label{alg:calibration-group}
\begin{algorithmic}[1] 
\STATE \textbf{Input}: Set $\cD_{\mathrm{obs}}$ of $n_{\mathrm{obs}}$ observed entries; calibration size $n$; group size $K$.

\STATE Initialize an empty set of matrix indices, $\cD_{\mathrm{prune}}=\emptyset$.
\FORALL{columns $c \in [n_c]$}
    \STATE  Define $m^c:=n^c_{\mathrm{obs}} \bmod K$.
    \IF{$m^c \neq 0$}
        \STATE Sample $m^c$ indices  $\paren*{I_1, \dots, I_{m^c}} \sim \Psi(m^c, \cD_{\mathrm{obs}} \cap ([n_r] \times \{c\}), \mat{1})$.
        \STATE Add the entry indices $\cbrac*{I_1, \dots, I_{m^c}}$ to $\cD_{\mathrm{prune}}$.
    \ENDIF
\ENDFOR

\STATE Initialize a set of available observed matrix indices, $\cD_{\mathrm{avail}} = \cD_{\mathrm{obs}}\setminus \cD_{\mathrm{prune}}$.
\STATE Initialize an empty set of matrix index groups, $\mathcal{D}_{\mathrm{cal}}=\emptyset$.
\FOR{$i \in [n]$}
\STATE Sample $\mat{X}^{\mathrm{cal}}_i = (X^{\mathrm{cal}}_{i,1}, \dots,X^{\mathrm{cal}}_{i,K}) \sim \Psi^{\text{col}}(K, \cD_{\mathrm{avail}}, \mat{1})$, with $\Psi^{\text{col}}$ defined as in~\eqref{model:weighted_test_generation}.
\STATE Insert $\mat{X}^{\mathrm{cal}}_i$ in $\mathcal{D}_{\mathrm{cal}}$. Remove $\cbrac{X^{\mathrm{cal}}_{i,1}, \dots, X^{\mathrm{cal}}_{i,K}}$ from $\cD_{\mathrm{avail}}$.
\ENDFOR

\STATE Define: $\cD_{\mathrm{train}} = \cD_{\mathrm{prune}} \cup \mathcal{D}_{\mathrm{avail}}$.
\STATE \textbf{Output}: Set of calibration groups $\cD_{\mathrm{cal}} = \{ \mat{X}^{\mathrm{cal}}_{1}, \dots, \mat{X}^{\mathrm{cal}}_{n} \}$; training set $\cD_{\mathrm{train}} \subset \cD_{\mathrm{obs}}$;
\end{algorithmic}
\end{algorithm}

Algorithm~\ref{alg:calibration-group} is designed to mimic the model for $\mat{X}^*$ in~\eqref{model:weighted_test_generation}, with the key difference that it samples the calibration groups from $\cD_{\mathrm{obs}}$ instead of $\cD_{\mathrm{miss}}$.
This discrepancy is unavoidable, and it results in $\mat{X}^{\mathrm{cal}}_1, \ldots, \mat{X}^{\mathrm{cal}}_n$ being neither exchangeable nor {\em weighted exchangeable} \citep{tibshirani-covariate-shift-2019} with $\mat{X}^*$.
Consequently, an innovative approach is needed to obtain valid simultaneous confidence regions, as explained in the next section.

\subsection{A General Quantile Inflation Lemma} \label{sec:quantile-lemma}

Consider a conformity score  $S^*$, defined similarly to the scores $S_i$ in~\eqref{eq:scores}:
\begin{align} \label{eq:def-score-test}
S^* := \inf \cbrac*{\tau \in \mathbb{R}: \mat{M}_{\mat{X}^*} \in \mathcal{C}(\mat{X}^*, \tau, \hat{\mat{M}}) }.
\end{align}
In words, $S^*$ is the smallest $\tau$ for which $\mathcal{C}(\mat{X}^*, \tau, \hat{\mat{M}})$ covers all $K$ entries of $\mat{M}_{\mat{X}^*}$. Although $S^*$ is latent, it is a well-defined and useful quantity because it allows us to write the probability that the confidence region output by Algorithm~\ref{alg:simultaneous-confidence-region} simultaneously covers all elements of $\mat{M}_{\mat{X}^*}$ as: $\P{ \mat{M}_{\mat{X}^{*}} \in \mathcal{C}(\mat{X}^*, \hat{\tau}_{\alpha,K}, \hat{\mat{M}})} = \P{ S^* \leq \hat{\tau}_{\alpha,K}}$.


To establish that Algorithm~\ref{alg:simultaneous-confidence-region} achieves simultaneous coverage~\eqref{eq:K-coverage}, we need to bound $\P{ S^* \leq \hat{\tau}_{\alpha,K}}$ from below by $1-\alpha$, for a suitable (and practical) choice of the weights $p_i$ and $p_{n+1}$ used to compute $\hat{\tau}_{\alpha,K}$ in~\eqref{eq:gwpm-weights-k}.
This is not straightforward because the scores $S_1, \ldots, S_n, S^*$ are neither exchangeable nor weighted exchangeable, as they respectively depend on $\mat{X}^{\mathrm{cal}}_1, \ldots, \mat{X}^{\mathrm{cal}}_n$ and $\mat{X}^*$.
A solution is provided by the following lemma, which extends a result originally presented by \citet{tibshirani-covariate-shift-2019} to address {\em covariate shift}.
For a discussion on the distinction between covariate shift and our problem, see Appendix~\ref{app:cov-shift}.

\begin{lemma} \label{lem:general_quant}
Let $Z_1, \ldots, Z_{n+1}$ be random variables with joint law $f$.
For any fixed function $s$ and $i \in [n+1]$, define $V_i=s( Z_i, Z_{-i} )$, where $Z_{-i}=\{Z_1, \ldots, Z_{n+1}\} \setminus\{Z_i\}$.
Assume that $V_1,\ldots,V_{n+1}$ are distinct almost surely.
Define also
\begin{equation} \label{eq:general_prob}
p^f_i(Z_1,\ldots,Z_{n+1}) :=
\frac{\sum_{\sigma \in \mathcal{S} : \sigma(n+1)=i} f(Z_{\sigma(1)},\ldots, Z_{\sigma(n+1)})}
{\sum_{\sigma \in \mathcal{S}} f(Z_{\sigma(1)},\ldots, Z_{\sigma(n+1)})}, \quad \forall i \in [n+1],
\end{equation}
where $\mathcal{S}$ is the set of all permutations of $[n+1]$.
Then, for any $\beta\in(0,1)$,
$$
\P{ V_{n+1} \leq Q\bigg(\beta; \, \sum_{i=1}^n p^f_i(Z_1,\ldots,Z_{n+1}) \delta_{V_i} + p^f_{n+1}(Z_1,\ldots,Z_{n+1}) \delta_\infty\bigg) } \geq \beta.
$$
\end{lemma}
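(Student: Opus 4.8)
The plan is to follow the weighted-exchangeability argument of Tibshirani et al., but to replace their i.i.d./independence structure with a purely combinatorial symmetrization over permutations, which is all that the joint law $f$ gives us. The key observation is that the object $V_i = s(Z_i, Z_{-i})$ depends on $Z_i$ only through its value and on the remaining variables only through their \emph{unordered} set $Z_{-i}$; consequently $V_i$ is invariant under any permutation of the coordinates of $(Z_1,\ldots,Z_{n+1})$ that fixes position $i$ setwise in the obvious sense. So the natural approach is to condition on the unordered multiset $E := \{Z_1,\ldots,Z_{n+1}\}$ (equivalently, on the set of values $\{z_1,\ldots,z_{n+1}\}$), and to analyze the conditional distribution of the ``identity of the last coordinate'' given $E$.

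\textbf{Step 1: Conditioning on the unordered set.} Fix a generic realization with distinct values $z_1,\ldots,z_{n+1}$ and let $E=\{z_1,\ldots,z_{n+1}\}$. Conditionally on the event $\{Z_1,\ldots,Z_{n+1}\}=E$, the ordered tuple $(Z_1,\ldots,Z_{n+1})$ takes each of the $(n+1)!$ orderings $(z_{\sigma(1)},\ldots,z_{\sigma(n+1)})$ with probability proportional to $f(z_{\sigma(1)},\ldots,z_{\sigma(n+1)})$. (If values can repeat we work with multisets and weight by multiplicities; the almost-sure-distinctness of the $V_i$ lets us ignore this on a probability-one event when we finally take expectations.) In particular, the conditional probability that $Z_{n+1}=z_j$ given $E$ equals exactly $p^f_j$ as defined in~\eqref{eq:general_prob}, evaluated at $(z_1,\ldots,z_{n+1})$ — the numerator sums $f$ over all orderings placing $z_j$ last, the denominator over all orderings.

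\textbf{Step 2: Reducing to a deterministic quantile inequality given $E$.} Now note that the \emph{multiset} of scores $\{V_1,\ldots,V_{n+1}\}$ is a function of $E$ alone: $V_i$ evaluated ``at the ordering that puts $z_j$ in position $i$'' equals $s(z_j, E\setminus\{z_j\})$, so relabeling, the collection of score values is $\{s(z_j, E\setminus\{z_j\}) : j\}$, with $z_j$ sitting in position $n+1$ precisely when the realized score $V_{n+1}$ equals $s(z_j,E\setminus\{z_j\})$. Likewise the weights $p^f_i(Z_1,\ldots,Z_{n+1})$, viewed as attached to the score value $s(z_j,E\setminus\{z_j\})$, are functions of $E$ only. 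Thus, conditionally on $E$, the random weighted empirical distribution $\sum_{i=1}^n p^f_i \delta_{V_i} + p^f_{n+1}\delta_\infty$ is \emph{not} random except through the single random relabeling of which value landed last; and the deterministic claim we must verify is: for the weighted distribution $F_E := \sum_j \pi_j \delta_{v_j}$ on $E \cup \{\infty\}$ (where $v_j = s(z_j, E\setminus\{z_j\})$, with the convention handling the $+\infty$ slot via a dummy value, and $\pi_j = p^f$ attached to $v_j$), if $J$ is drawn with $\P{J=j}=\pi_j$, then $\P{v_J \le Q(\beta; F_E)} \ge \beta$. This is the standard and essentially tautological fact that a random variable is at most its own $\beta$-quantile with probability at least $\beta$: indeed $\P{v_J \le Q(\beta;F_E)} = F_E(Q(\beta;F_E)) \ge \beta$ by the definition of $Q(\beta;\cdot)$ as the infimum of points with CDF value $\ge\beta$, using right-continuity of the CDF. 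The only subtlety is bookkeeping the $+\infty$ atom $p^f_{n+1}\delta_\infty$: it corresponds to ``$v_J = \infty > Q(\beta;F_E)$,'' which simply means those draws are not counted in the favorable event, exactly matching the definition of the quantile of the mixture including $\delta_\infty$.

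\textbf{Step 3: De-conditioning.} Having shown $\P{V_{n+1} \le Q(\beta; \sum_i p^f_i \delta_{V_i} + p^f_{n+1}\delta_\infty) \mid E} \ge \beta$ for almost every $E$ (the null set where the $V_i$ are not distinct being excluded by hypothesis, which is what guarantees the matching between ``position $n+1$ holds value $v_j$'' and ``$J=j$'' is a genuine bijection and no ties inflate the quantile), take expectations over $E$ to conclude. The main obstacle — and the one point deserving care in the writeup — is Step 2's handling of the infinite atom and of potential ties: one must argue that the $p^f_i$ are genuinely a probability vector summing to one over $i\in[n+1]$ (clear from~\eqref{eq:general_prob} since the denominators agree), that attaching weight $p^f_{n+1}$ to the symbol $\infty$ rather than to a finite $V_{n+1}$ value is legitimate in the quantile definition over $\mathbb{R}\cup\{\infty\}$, and that the ``random variable $\le$ its $\beta$-quantile'' inequality is applied to the correct (conditional-on-$E$) distribution rather than the marginal one. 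Everything else is routine permutation bookkeeping.
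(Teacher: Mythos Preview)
Your approach is correct and essentially identical to the paper's: both condition on the unordered set $E=\{Z_1,\ldots,Z_{n+1}\}$, identify the conditional law of $V_{n+1}$ as the weighted discrete distribution $\sum_i p^f_i\delta_{v_i}$, apply the tautological quantile inequality, and then de-condition. The paper isolates the $\delta_\infty$ substitution into a separate lemma (showing $V_{n+1}\le Q(\beta;\sum_{i\le n} p^f_i\delta_{V_i}+p^f_{n+1}\delta_{V_{n+1}})$ iff the same holds with $\delta_{V_{n+1}}$ replaced by $\delta_\infty$), which is precisely the content of your Step~2 remark on the infinite atom.
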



Translating Lemma~\ref{lem:general_quant} into a practical method requires evaluating the weights $p^f_i$ in~\eqref{eq:general_prob}, which generally involve an unfeasible sum over an exponential number of permutations.
If the distribution $f$ satisfies a symmetry condition called ``weighted exchangeability'', it was shown by \citet{tibshirani-covariate-shift-2019} that the expression in~\eqref{eq:general_prob} simplifies greatly, but this is not helpful in our case because $(\mat{X}^{\mathrm{cal}}_{1}, \dots, \mat{X}^{\mathrm{cal}}_{n}, \mat{X}^*)$ do not enjoy such a property.
Further, it is unclear how Algorithm~\ref{alg:calibration-group} may be modified to achieve weighted exchangeability.

Fortunately, our groups satisfy a ``leave-one-out exchangeability'' property that still enables an efficient computation of the {\em conformalization} weights in~\eqref{eq:general_prob}. Intuitively, the joint distribution of $\mat{X}^{\mathrm{cal}}_{1}, \dots, \mat{X}^{\mathrm{cal}}_{n}, \mat{X}^*$ is invariant to the reordering of the first $n$ variables.

\begin{proposition} \label{prop:paired-calibration-partial-exch}
Let $\cD_{\mathrm{obs}}$ and $\cD_{\mathrm{miss}}$ be subsets of observed and missing entries, respectively, sampled according to~\eqref{model:sampling_wo_replacement}.
Consider $\mat{X}^*$ sampled according to~\eqref{model:weighted_test_generation} conditional on $\cD_{\mathrm{obs}}$.
Suppose $\mat{X}^{\mathrm{cal}}_{1}, \dots, \mat{X}^{\mathrm{cal}}_{n}$ are the calibration groups output by Algorithm~\ref{alg:calibration-group}, while $\cD_{\mathrm{prune}}$ and $\cD_{\mathrm{train}}$ are the corresponding training and pruned subsets.
Conditional on $\cD_{\mathrm{prune}}$ and $\cD_{\mathrm{train}}$, $(\mat{X}^{\mathrm{cal}}_{1}, \mat{X}^{\mathrm{cal}}_{2}, \ldots, \mat{X}^{\mathrm{cal}}_{n}, \mat{X}^*) \overset{d}{=} (\mat{X}^{\mathrm{cal}}_{\sigma(1)}, \mat{X}^{\mathrm{cal}}_{\sigma(2)}, \ldots, \mat{X}^{\mathrm{cal}}_{\sigma(n)}, \mat{X}^*)$ for any permutation $\sigma$ of $[n]$.
\end{proposition}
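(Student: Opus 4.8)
The plan is to compute the joint law of $(\mat{X}^{\mathrm{cal}}_{1},\ldots,\mat{X}^{\mathrm{cal}}_{n},\mat{X}^*)$ conditional on $\cD_{\mathrm{prune}}$ and $\cD_{\mathrm{train}}$ explicitly, and observe that it is symmetric in its first $n$ arguments. First I would note that conditioning on $\cD_{\mathrm{prune}}$ and $\cD_{\mathrm{train}}$ is equivalent to conditioning on the pair of disjoint sets $(\cD_{\mathrm{prune}},\cD_{\mathrm{avail}}^{(0)})$, where $\cD_{\mathrm{avail}}^{(0)} = \cD_{\mathrm{obs}} \setminus \cD_{\mathrm{prune}}$ is the initial pool in Algorithm~\ref{alg:calibration-group} (since $\cD_{\mathrm{train}} = \cD_{\mathrm{prune}} \cup (\cD_{\mathrm{avail}}^{(0)} \setminus \bigcup_i \mat{X}^{\mathrm{cal}}_i)$, the triple $(\cD_{\mathrm{prune}},\cD_{\mathrm{train}})$ together with the unordered union of the calibration groups recovers $\cD_{\mathrm{avail}}^{(0)}$; and in fact the union of the calibration groups is a deterministic part of the conditioning once we know the groups, so I should argue that conditioning on $(\cD_{\mathrm{prune}},\cD_{\mathrm{train}})$ is the same as conditioning on $\cD_{\mathrm{avail}}^{(0)}$ plus the randomness used to pick the $n$ groups out of it — I will phrase this carefully). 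Also note that $\cD_{\mathrm{miss}} = [n_r]\times[n_c]\setminus(\cD_{\mathrm{prune}}\cup\cD_{\mathrm{avail}}^{(0)})$ is determined by this conditioning, so $\mat{X}^*$'s conditional law~\eqref{model:weighted_test_generation} is pinned down and, crucially, $\mat{X}^*$ is independent of how the calibration groups are assembled from $\cD_{\mathrm{avail}}^{(0)}$ given these sets. Hence it suffices to show the first $n$ coordinates are exchangeable, with $\mat{X}^*$ just coming along as an independent factor.

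Next I would write the probability that Algorithm~\ref{alg:calibration-group}'s sequential sampling produces a specific ordered tuple of groups $(\bm{x}_1,\ldots,\bm{x}_n)$, where each $\bm{x}_i$ is itself an ordered $K$-tuple of distinct entries in a common column and the $\bm{x}_i$ are pairwise disjoint subsets of $\cD_{\mathrm{avail}}^{(0)}$. In step $i$ of the loop, $\mat{X}^{\mathrm{cal}}_i \sim \Psi^{\text{col}}(K, \cD_{\mathrm{avail}}^{(i-1)}, \mat{1})$, i.e. a uniform column is chosen among those columns of $\cD_{\mathrm{avail}}^{(i-1)}$ containing at least $K$ available entries, weighted appropriately by the $\Psi^{\text{col}}$ mechanism, then $K$ entries are drawn uniformly without replacement from that column. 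Because after the pruning step every column of $\cD_{\mathrm{avail}}^{(0)}$ has a number of entries that is a multiple of $K$, removing a full group of $K$ entries from a column keeps that property, so the set of "eligible" columns at each stage behaves cleanly. The probability of the ordered tuple factors as a telescoping product; I would show that, for a fixed \emph{unordered} collection of $n$ disjoint groups, every ordering of those $n$ groups yields the \emph{same} product value. Concretely, the "choose the column" factor contributes, for each column $c$ used by $g_c$ of the groups, a factor that depends only on the sequence of available-entry counts in the eligible columns, and summing/reorganizing shows the dependence on the order of groups cancels; the "choose $K$ entries within the column" factors are each $1/\binom{\text{(entries then available in that column)}}{K}$ in the without-replacement case, and the product over a column's groups of these is again order-independent (it is $\prod_{j=0}^{g_c-1} 1/\binom{n^c_{\mathrm{obs}} - jK}{K}$ regardless of which group is drawn when). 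This is the computational heart of the argument.

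Therefore the conditional law of $(\mat{X}^{\mathrm{cal}}_1,\ldots,\mat{X}^{\mathrm{cal}}_n)$ given $(\cD_{\mathrm{prune}},\cD_{\mathrm{avail}}^{(0)})$ assigns equal probability to all $n!$ orderings of any realizable unordered family of groups, which is exactly the claimed exchangeability; tensoring with the conditional law of $\mat{X}^*$ (independent of the ordering, as argued above) gives the full statement. I expect the main obstacle to be bookkeeping: making rigorous the claim that conditioning on $(\cD_{\mathrm{prune}},\cD_{\mathrm{train}})$ is the "right" conditioning (as opposed to, say, conditioning on the ordered sequence of groups, which would trivially break exchangeability) and handling the $\Psi^{\text{col}}$ column-selection weights carefully — in particular verifying that the eligible-column set and its entry counts, viewed as a multiset, evolve in an order-insensitive way as groups are removed. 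A clean way to sidestep part of this is to argue via a symmetry/coupling: for any permutation $\sigma$ of $[n]$, exhibit a measure-preserving bijection on the sample space of Algorithm~\ref{alg:calibration-group}'s internal randomness that swaps the roles of the groups according to $\sigma$ while leaving $\cD_{\mathrm{prune}}$, $\cD_{\mathrm{avail}}^{(0)}$, the final $\cD_{\mathrm{train}}$, and the law of $\mat{X}^*$ unchanged; I would likely present the explicit-product computation as the main line and mention the coupling view as intuition.
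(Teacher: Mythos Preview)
Your overall strategy---compute the conditional joint law and observe it is symmetric in the first $n$ arguments---is exactly what the paper does (via its Proposition~\ref{prop:jointdist-K} and Lemma~\ref{lemma:cal-distribution}), and your product computation for the calibration groups is on the right track.

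However, the conditioning step has a real error, not just bookkeeping. You claim that conditioning on $(\cD_{\mathrm{prune}},\cD_{\mathrm{train}})$ determines $\cD_{\mathrm{avail}}^{(0)}$ and hence $\cD_{\mathrm{miss}}$, and conclude that ``$\mat{X}^*$ is independent of how the calibration groups are assembled from $\cD_{\mathrm{avail}}^{(0)}$ given these sets.'' This is false: $\cD_{\mathrm{train}}=\cD_{\mathrm{prune}}\cup(\cD_{\mathrm{avail}}^{(0)}\setminus\bigcup_i\mat{X}^{\mathrm{cal}}_i)$, so knowing $(\cD_{\mathrm{prune}},\cD_{\mathrm{train}})$ alone does \emph{not} pin down $\cD_{\mathrm{obs}}$ or $\cD_{\mathrm{miss}}$; many different unordered calibration sets are compatible with the same $(\cD_{\mathrm{prune}},\cD_{\mathrm{train}})$. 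Consequently, the law of $\mat{X}^*$ given $(\cD_{\mathrm{prune}},\cD_{\mathrm{train}})$ \emph{does} depend on the calibration groups---but only through their unordered union, which is exactly what you need for exchangeability. The correct statement is: once you also fix a specific realization $(\mat{x}_1,\ldots,\mat{x}_n)$, then $\cD_{\mathrm{obs}}=D_1\cup\{\mat{x}_{i,k}\}$ is determined, and the resulting $\mathbb{P}(\mat{X}^*=\mat{x}_{n+1}\mid\cD_{\mathrm{obs}})$ is a function of the unordered set $\{\mat{x}_1,\ldots,\mat{x}_n\}$ only.

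The paper handles this cleanly by writing
\[
\mathbb{P}\bigl(\mat{X}^{\mathrm{cal}}_{1}=\mat{x}_1,\ldots,\mat{X}^{\mathrm{cal}}_{n}=\mat{x}_n\mid\cD_{\mathrm{prune}},\cD_{\mathrm{train}}\bigr)
=\mathbb{P}\bigl(\mat{X}^{\mathrm{cal}}_{1}=\mat{x}_1,\ldots\mid\cD_{\mathrm{prune}},\cD_{\mathrm{obs}}\bigr)\cdot\frac{\mathbb{P}(\cD_{\mathrm{prune}},\cD_{\mathrm{obs}})}{\mathbb{P}(\cD_{\mathrm{prune}},\cD_{\mathrm{train}})},
\]
using that $\cD_{\mathrm{obs}}$ is determined by $(\mat{x}_1,\ldots,\mat{x}_n,\cD_{\mathrm{train}})$ and that $\cD_{\mathrm{train}}$ is in turn determined by $(\mat{x}_1,\ldots,\mat{x}_n,\cD_{\mathrm{prune}},\cD_{\mathrm{obs}})$. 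The first factor is then shown (your product argument, made precise in the paper's Lemma~\ref{lemma:cal-distribution}) to depend only on the column counts $N_n^c$; the second factor and the $\mat{X}^*$ piece depend only on $\cD_{\mathrm{obs}}=D_1\cup D_2$, i.e.\ on the unordered calibration entries. Replace your independence claim with this decomposition and the proof goes through.
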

This is established by deriving the distribution of $(\mat{X}^{\mathrm{cal}}_{1}, \dots, \mat{X}^{\mathrm{cal}}_{n}, \mat{X}^*)$ conditional on $\cD_{\mathrm{prune}}$ and $\cD_{\mathrm{train}}$.
The significance of this result becomes evident in the following specialized version of Lemma~\ref{lem:general_quant}, which suggests a useful expression for the conformalization weights.

\begin{lemma} \label{lem:partial_quant}
Let $Z_1, \ldots, Z_{n+1}$ be leave-one-out exchangeable random variables, so that there exists a permutation-invariant function $g$ such that their joint law $f$ can be factorized as $f(Z_1,\ldots,Z_{n+1}) = g(\{Z_1,\ldots,Z_{n+1}\}) \cdot \bar{h}(\{Z_1,\ldots,Z_n\}, Z_{n+1})$,
for some function $\bar{h}$ taking as first input an unordered set of $n$ elements.
For any fixed function $s: \mathbb{R}^{n+1} \mapsto \mathbb{R}$ and any $i \in [n+1]$, define $V_i=s( Z_i, Z_{-i} )$, where $Z_{-i}=\{Z_1, \ldots, Z_{n+1}\} \setminus\{Z_i\}$.
Assume that $V_1,\ldots,V_{n+1}$ are almost surely distinct.
Then, $\forall \beta \in (0,1)$,
\begin{align} \label{eq:partial_quant}
  \mathbb{P}\bigg\{ V_{n+1} \leq Q\bigg(\beta; \, \sum_{i=1}^n
  p_i(Z_1,\ldots,Z_{n+1}) \delta_{V_i} + p_{n+1}(Z_1,\ldots,Z_{n+1})
  \delta_\infty\bigg) \bigg\} \geq \beta,
\end{align}
where $p_i(Z_1,\ldots,Z_{n+1}) = \bar h(Z_{-i},Z_i) / \sum_{j=1}^{n+1} \bar h(Z_{-j},Z_j)$.
\end{lemma}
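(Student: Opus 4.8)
The plan is to derive Lemma~\ref{lem:partial_quant} as a corollary of the general quantile inflation Lemma~\ref{lem:general_quant}, by showing that the leave-one-out exchangeability factorization $f(Z_1,\ldots,Z_{n+1}) = g(\{Z_1,\ldots,Z_{n+1}\}) \cdot \bar h(\{Z_1,\ldots,Z_n\}, Z_{n+1})$ makes the generic weights $p^f_i$ in~\eqref{eq:general_prob} collapse to the simple ratio $p_i = \bar h(Z_{-i},Z_i) / \sum_{j=1}^{n+1}\bar h(Z_{-j},Z_j)$. Once that algebraic identity is in hand, the conclusion~\eqref{eq:partial_quant} is just Lemma~\ref{lem:general_quant} applied with $\beta$ unchanged, since the weights appearing there are exactly these $p_i$ evaluated at the realized $Z_1,\ldots,Z_{n+1}$.

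The core computation is to evaluate the numerator and denominator of~\eqref{eq:general_prob} under the factorization. For the numerator, I would split the sum over all permutations $\sigma$ with $\sigma(n+1)=i$: writing $\sigma' = \sigma|_{[n]}$ for the induced bijection onto $\{1,\ldots,n+1\}\setminus\{i\}$, each term factors as $f(Z_{\sigma(1)},\ldots,Z_{\sigma(n+1)}) = g(\{Z_1,\ldots,Z_{n+1}\}) \cdot \bar h(\{Z_{\sigma(1)},\ldots,Z_{\sigma(n)}\}, Z_i)$. Here is the key point: because $g$ takes the full unordered set it is constant across all $\sigma$, and because $\bar h$ takes its first argument as an unordered set, $\bar h(\{Z_{\sigma(1)},\ldots,Z_{\sigma(n)}\}, Z_i) = \bar h(Z_{-i},Z_i)$ does not depend on which $\sigma'$ we chose — there are exactly $n!$ such permutations, so the numerator equals $n!\, g(\{Z_1,\ldots,Z_{n+1}\})\, \bar h(Z_{-i},Z_i)$. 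For the denominator, I would group the sum over all $\sigma \in \mathcal{S}$ according to the value $j=\sigma(n+1)$; by the same argument each group contributes $n!\, g(\{Z_1,\ldots,Z_{n+1}\})\, \bar h(Z_{-j},Z_j)$, so the denominator is $n!\, g(\{Z_1,\ldots,Z_{n+1}\}) \sum_{j=1}^{n+1} \bar h(Z_{-j},Z_j)$. Dividing, the common factors $n!$ and $g$ cancel, leaving $p^f_i = \bar h(Z_{-i},Z_i)/\sum_{j=1}^{n+1}\bar h(Z_{-j},Z_j) = p_i$, as claimed.

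The remaining step is purely bookkeeping: substitute this identity into the statement of Lemma~\ref{lem:general_quant}. The hypotheses of Lemma~\ref{lem:general_quant} — that $Z_1,\ldots,Z_{n+1}$ have a joint law $f$, that $V_i = s(Z_i,Z_{-i})$ for a fixed function $s$, and that $V_1,\ldots,V_{n+1}$ are almost surely distinct — are all inherited directly from the hypotheses of Lemma~\ref{lem:partial_quant}, so no extra verification is needed. One subtlety worth a sentence is the possibility that the denominator $\sum_{j=1}^{n+1}\bar h(Z_{-j},Z_j)$ vanishes; this happens with probability zero (or on an event where $f$ itself vanishes and hence carries no probability mass), since whenever $f(Z_1,\ldots,Z_{n+1})>0$ we have $\bar h(\{Z_1,\ldots,Z_n\},Z_{n+1})>0$ and thus at least one term of the sum is strictly positive, so the weights $p_i$ are well-defined on the support of $f$.

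I do not anticipate a serious obstacle; the only place requiring care is making the interchange ``$\bar h$ only sees an unordered set'' fully rigorous, i.e.\ being explicit that for a permutation $\sigma$ with $\sigma(n+1)=i$ the multiset $\{Z_{\sigma(1)},\ldots,Z_{\sigma(n)}\}$ equals $\{Z_1,\ldots,Z_{n+1}\}\setminus\{Z_i\} = Z_{-i}$ regardless of the internal ordering $\sigma'$, which is what produces the clean factor of $n!$ and lets everything cancel. Everything else is a one-line application of the already-established Lemma~\ref{lem:general_quant}.
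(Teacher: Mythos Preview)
Your proposal is correct and follows essentially the same approach as the paper: both arguments show that under the leave-one-out factorization the permutation sums in~\eqref{eq:general_prob} reduce to $n!\, g(\{Z_1,\ldots,Z_{n+1}\})\, \bar h(Z_{-i},Z_i)$ in the numerator and $n!\, g(\{Z_1,\ldots,Z_{n+1}\}) \sum_j \bar h(Z_{-j},Z_j)$ in the denominator, then conclude via Lemma~\ref{lem:general_quant}. The only presentational difference is that the paper restates the conditioning-on-$E_z$ step explicitly before simplifying, whereas you invoke Lemma~\ref{lem:general_quant} as a black box after the simplification; these are equivalent.
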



\subsection{Characterization of the Conformalization Weights} \label{sec:weights}

Before we characterize explicitly the conformalization weights needed to apply Lemma~\ref{lem:partial_quant}, we need to introduce some intuitions and notations.
Define \( \bar{\cD}_{\mathrm{obs}} \) as an augmented version of \( \cD_{\mathrm{obs}} \) that includes the indices of the test group \( \mat{X}^* = (X^*_1, X^*_2, \ldots, X^*_K) \); i.e.,
$\bar{\cD}_{\mathrm{obs}} \coloneqq \cD_{\mathrm{obs}} \cup \left( \cup_{k=1}^K \{ X^*_k \} \right).$
Let \( D_{\mathrm{obs}} \) and \( \bar{D}_{\mathrm{obs}} \) denote possible realizations of \( \cD_{\mathrm{obs}} \) and \( \bar{\cD}_{\mathrm{obs}} \), respectively. Let \( \mat{x}_i = (x_{i,1}, \ldots, x_{i,K}) \) be a realization of \( \mat{X}^{\mathrm{cal}}_i \) and $\mat{x}_{n+1}$ a realization of $\mat{X}^*$.

Essentially, Lemma~\ref{lem:partial_quant} indicates that for $i \in [n]$, the conformalization weight $p_i$ is proportional to the joint distribution of the calibration groups and the test group when $\mat{x}_{n+1}$ is swapped with $\mat{x}_{i}$.
This swapping can be conceptualized as follows. In an alternate scenario, $\mat{x}_i$ is the realization of the test group $\mat{X}_{n+1}$, while $\{ \mat{x}_1,\ldots,\mat{x}_{n+1}\} \setminus \{\mat{x}_{i}\}$ becomes the realization of the calibration groups $\{ \mat{X}^{\text{cal}}_1,\ldots,\mat{X}^{\text{cal}}_{n}\}$. Intuitively, if the probability of observing this swapped realization is high, it suggests $\mat{x}_i$ resembles a test group, leading to a large $p_i$. The notation below is introduced to define relevant quantities after this hypothetical swapping.
For any $i \in [n]$, define \( D_{\mathrm{obs};i} \coloneqq \bar{D}_{\mathrm{obs}} \setminus \left( \bigcup_{k=1}^K \{ x_{i,k} \} \right) \) as the swapped observation set.
Further, let \( D_{\mathrm{obs};n+1} \coloneqq D_{\mathrm{obs}} \) represent the original observed data.

With the hypothetical swapping, the pruning process might also be affected. We define the following notations to quantify changes in the pruning process. Let \( n_{\mathrm{obs}}^c \) denote the number of observed entries in column \( c \) from \( D_{\mathrm{obs}} \). Define $\bar{n}_{\mathrm{obs}}^c \coloneqq n_{\mathrm{obs}}^c - \left( n_{\mathrm{obs}}^c \bmod K \right)$ as the number of observed entries remaining in column \( c \) after the random pruning step of Algorithm~\ref{alg:calibration-group}. For any \( i \in [n+1] \), let \( c_i \) denote the column to which the group \( \mat{x}_i \) belongs; that is,
$c_i \coloneqq x_{i,k,2},  \forall k \in [K],$
where \( x_{i,k,2} \) is the column index of the \( k \)-th entry in \( \mat{x}_i \).

Finally, let $D_{\mathrm{miss}}$ denote the realization of $\cD_{\mathrm{miss}}$, and $n^c_{\mathrm{miss}}$ be the number of entries in column $c$ from $D_{\mathrm{miss}}$. With slight abuse of notation, we denote the set of missing indices in column \( c_{n+1} \) excluding those in the test group \( \mat{x}_{n+1} \) as
$D_{\mathrm{miss}}^{c_{n+1}} \setminus \mat{x}_{n+1} \coloneqq D_{\mathrm{miss}}^{c_{n+1}} \setminus \left\{ x_{n+1,k} \right\}_{k=1}^K.$ With these notations established, we are now ready to state how Lemma~\ref{lem:partial_quant} applies in our setting, providing an explicit expression for the conformalization weights in~\eqref{eq:partial_quant}.

\begin{lemma} \label{lemma:conformalization-weights-explicit}
Under the setting of Proposition~\ref{prop:paired-calibration-partial-exch}, let $Z_1, \ldots, Z_n, Z_{n+1}$ denote $\mat{X}^{\mathrm{cal}}_1, \ldots, \mat{X}^{\mathrm{cal}}_n, \mat{X}^*$, and $V_1, \ldots, V_n, V_{n+1}$ represent the corresponding scores $S_1, \ldots, S_n, S^*$ given by~\eqref{eq:scores} and~\eqref{eq:def-score-test}, respectively, based on a matrix estimate $\hat{\mat{M}}$ computed based on the observations in $\cD_{\mathrm{train}}$. Then, Equation~\eqref{eq:partial_quant} from Lemma~\ref{lem:partial_quant} applies conditional on $\cD_{\mathrm{train}}$ and $\cD_{\mathrm{prune}}$,
with weights
\begin{align}\label{eq:gwpm-weights-k}
\begin{split}
    p_i(\mat{x}_{1}, \dots, \mat{x}_{n}, \mat{x}_{n+1})
    & \propto \mathbb{P}_{\mat{w}} \paren*{\cD_{\mathrm{obs}}= D_{\mathrm{obs};i}} \cdot \paren*{ \tilde{w}^*_{x_{i,1}} \prod\limits_{k=2}^{K}\tilde{w}^*_{x_{i,k}} } \\
    & \quad \cdot \left[ \cfrac{\mbinom{n^{c_i}_{\mathrm{obs}} }{\bar{n}^{c_i}_{\mathrm{obs}}}}{\mbinom{n^{c_i}_{\mathrm{obs}} -K}{\bar{n}^{c_i}_{\mathrm{obs}} -K}} \cdot  \cfrac{\mbinom{n^{c_{n+1}}_{\mathrm{obs}} }{\bar{n}^{c_{n+1}}_{\mathrm{obs}}}}{\mbinom{n^{c_{n+1}}_{\mathrm{obs}} +K}{\bar{n}^{c_{n+1}}_{\mathrm{obs}}+K}} \cdot \prod\limits_{k=1}^{K-1} \frac{\bar{n}^{c_i}_{\mathrm{obs}}-k}{\bar{n}^{c_{n+1}}_{\mathrm{obs}}+K-k} \right]^{\I{c_i \neq c_{n+1}}}.
  \end{split}
\end{align}
Above, $\tilde{w}^*_{x_{i,1}}$ and $\tilde{w}^*_{x_{i,k}}$ have explicit expressions that depend on the weights $\mat{w}^*$ in~\eqref{model:weighted_test_generation}; i.e.,
\begin{align}\label{eq:w-weights-i1}
    \tilde{w}^*_{x_{i,1}}
  = \frac{w^*_{x_{i,1}}} {\sum_{(r,c)\in D_{\mathrm{miss}}}w^{*}_{r,c}-\sum\limits_{k=1}^{K} \paren*{w^{*}_{x_{n+1,k}}-w^{*}_{x_{i,k}}}},
\end{align}

and, for all $k \in \{2,\ldots, K\}$,
\begin{align}\label{eq:w-weights-ik}
  \tilde{w}^*_{x_{i,k}}
  = \frac{w^*_{x_{i,k}}}{\sum\limits_{(r,c)\in D^{c_{i}}_{\mathrm{miss}}}  w^*_{r,c} + \sum\limits_{k'=k}^K w^{*}_{x_{i,k'}} -\I{c_i=c_{n+1}} \sum\limits_{k'=1}^K w^{*}_{x_{n+1,k'}}}.
\end{align}
\end{lemma}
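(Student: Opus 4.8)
The plan is to specialize Lemma~\ref{lem:partial_quant} using Proposition~\ref{prop:paired-calibration-partial-exch}, and then to identify the resulting weights by writing out the relevant conditional law explicitly. I work throughout conditional on $\cD_{\mathrm{train}}$ and $\cD_{\mathrm{prune}}$, so that $\hat{\mat M}$ is a fixed function of $\cD_{\mathrm{train}}$ and each $S_i$ (resp. $S^*$) is a deterministic function of $\mat{X}^{\mathrm{cal}}_i$ (resp. $\mat{X}^*$), of the form required by Lemma~\ref{lem:partial_quant}. Proposition~\ref{prop:paired-calibration-partial-exch} asserts that the conditional law of $(\mat{X}^{\mathrm{cal}}_1,\dots,\mat{X}^{\mathrm{cal}}_n,\mat{X}^*)$ is invariant under permutations of its first $n$ coordinates; hence this law has a probability mass function $f$ that trivially factorizes as $f = g\cdot\bar h$ with $g\equiv 1$ and $\bar h$ symmetric in its first $n$ arguments. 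Lemma~\ref{lem:partial_quant} then yields~\eqref{eq:partial_quant} with $p_i \propto \bar h(Z_{-i},Z_i)$, that is, $p_i$ is proportional to $f$ evaluated after interchanging the $i$-th calibration group with the test group. The remaining task is to compute this ``swapped'' mass function.

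Fix a configuration $(\mat{x}_1,\dots,\mat{x}_n,\mat{x}_{n+1})$ in the support of the conditional law and set $D_{\mathrm{obs}} = D_{\mathrm{train}}\cup\big(\bigcup_{i=1}^n \mat{x}_i\big)$ and $D_{\mathrm{miss}} = [n_r]\times[n_c]\setminus D_{\mathrm{obs}}$. Because the calibration groups and $\mat{X}^*$ are conditionally independent given $\cD_{\mathrm{obs}}$, the generating process factorizes the mass function as $f(\mat{x}_1,\dots,\mat{x}_{n+1}) \propto (\mathrm{A})\cdot(\mathrm{B})\cdot(\mathrm{C})\cdot(\mathrm{D})$, where (A) $= \mathbb{P}_{\mat{w}}(\cD_{\mathrm{obs}}=D_{\mathrm{obs}})$ is the Wallenius probability in~\eqref{model:sampling_wo_replacement}; (B) $= \prod_{c\in[n_c]}1/\binom{n^c_{\mathrm{obs}}}{m^c}$, with $m^c = n^c_{\mathrm{obs}}\bmod K$, is the probability that the uniform pruning step of Algorithm~\ref{alg:calibration-group} removes exactly $\cD_{\mathrm{prune}}$; (C) is the probability that the subsequent loop of Algorithm~\ref{alg:calibration-group} produces exactly the ordered groups $\mat{x}_1,\dots,\mat{x}_n$; and (D) $= \mathbb{P}(\mat{X}^*=\mat{x}_{n+1}\mid D_{\mathrm{obs}})$ is the column-constrained sequential Wallenius probability in~\eqref{model:weighted_test_generation_sequential}. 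Three facts drive the computation: (i) (A) depends on the configuration only through the set $D_{\mathrm{obs}}$, hence only through the unordered collection $\{\mat{x}_1,\dots,\mat{x}_n\}$; (ii) a short calculation shows (C) equals a constant depending only on $n$, $K$, and $N := n_{\mathrm{obs}} - |\cD_{\mathrm{prune}}|$ (all fixed under the conditioning) times a product over columns of factorial terms depending on the configuration only through the per-column counts $\bar n^c_{\mathrm{obs}} = n^c_{\mathrm{obs}} - m^c$, so (B) and (C) together depend on the configuration only through the per-column observed counts; and (iii) (D), written out via~\eqref{model:weighted_test_generation_sequential}, is exactly $\tilde w^*_{x_{n+1,1}}\prod_{k=2}^K\tilde w^*_{x_{n+1,k}}$ with the weights of~\eqref{eq:w-weights-i1}--\eqref{eq:w-weights-ik} evaluated at $i=n+1$, the normalizers being the $\mat{w}^*$-mass of $D_{\mathrm{miss}}$ and of its column-$c_{n+1}$ restriction $D^{c_{n+1}}_{\mathrm{miss}}$.

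Next I would carry out the swap. Moving the $i$-th calibration group into the test slot turns $\cD_{\mathrm{obs}}$ into $D_{\mathrm{obs};i} = \bar D_{\mathrm{obs}}\setminus\{\mat{x}_i\}$ and $\cD_{\mathrm{miss}}$ into $D_{\mathrm{miss};i} = (D_{\mathrm{miss}}\cup\{\mat{x}_i\})\setminus\{\mat{x}_{n+1}\}$, leaving $\cD_{\mathrm{train}}$ and $\cD_{\mathrm{prune}}$ fixed. Then (A) becomes $\mathbb{P}_{\mat{w}}(\cD_{\mathrm{obs}}=D_{\mathrm{obs};i})$, the first factor of~\eqref{eq:gwpm-weights-k}; and (D) becomes $\mathbb{P}(\mat{X}^*=\mat{x}_i\mid D_{\mathrm{obs};i})$, which by~\eqref{model:weighted_test_generation_sequential} equals $\tilde w^*_{x_{i,1}}\prod_{k=2}^K\tilde w^*_{x_{i,k}}$ with precisely the denominators in~\eqref{eq:w-weights-i1}--\eqref{eq:w-weights-ik}, since the $\mat{w}^*$-mass of $D_{\mathrm{miss};i}$ is $\sum_{(r,c)\in D_{\mathrm{miss}}}w^*_{r,c} - \sum_{k=1}^K(w^*_{x_{n+1,k}} - w^*_{x_{i,k}})$ and that of its column-$c_i$ restriction is $\sum_{(r,c)\in D^{c_i}_{\mathrm{miss}}}w^*_{r,c} + \sum_{k=1}^K w^*_{x_{i,k}} - \I{c_i=c_{n+1}}\sum_{k=1}^K w^*_{x_{n+1,k}}$. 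For the product (B)$\cdot$(C): the swap increases the observed count of column $c_{n+1}$ by $K$ and decreases that of column $c_i$ by $K$ (changing no other column) while leaving every $m^c$ unchanged, and if $c_i = c_{n+1}$ these two effects cancel and nothing changes. Dividing by the value of (B)$\cdot$(C) at $i=n+1$ — a quantity common to all $i$, hence absorbable into the proportionality constant — and using $\binom{n}{m}=\binom{n}{n-m}$ together with the telescoping of the per-column factorial products under $\bar n^{c_i}_{\mathrm{obs}}\mapsto\bar n^{c_i}_{\mathrm{obs}}-K$ and $\bar n^{c_{n+1}}_{\mathrm{obs}}\mapsto\bar n^{c_{n+1}}_{\mathrm{obs}}+K$ yields exactly the bracketed factor of~\eqref{eq:gwpm-weights-k}, raised to the power $\I{c_i\neq c_{n+1}}$. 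Multiplying (A), (B)$\cdot$(C), and (D) gives~\eqref{eq:gwpm-weights-k}.

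The main obstacle I anticipate is the bookkeeping in the final step: showing that (C) collapses to a function of the per-column counts alone (so that the swap is ``local'' to columns $c_i$ and $c_{n+1}$) and then performing the binomial/factorial algebra so that the before/after ratio of (B)$\cdot$(C) lands precisely on the stated bracket rather than an equivalent but differently-packaged expression. One must also check that every configuration entering these ratios actually lies in the support — the swapped groups remain disjoint from $\cD_{\mathrm{train}}$ and from one another, the pruned subsets remain valid within their columns, and the relevant columns retain enough missing (resp. observed) entries — which is where the standing assumption that each column has $\gg K$ missing entries is used.
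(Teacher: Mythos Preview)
Your proposal is correct and follows essentially the same route as the paper's proof. The paper makes your decomposition $(\mathrm{A})(\mathrm{B})(\mathrm{C})(\mathrm{D})$ explicit by first stating an auxiliary proposition that writes out the conditional joint law of $(\mat{X}^{\mathrm{cal}}_1,\dots,\mat{X}^{\mathrm{cal}}_n,\mat{X}^*)$ (with the form of $(\mathrm{C})$ isolated in its own lemma), and then carries out the identical swap bookkeeping you describe, obtaining the bracketed factor by exactly the telescoping of the per-column factorial products that you anticipate as the main obstacle. Your trivial factorization $g\equiv 1$, $\bar h=f$ is fine, since Lemma~\ref{lem:partial_quant} only uses $\bar h$ up to a factor common to all $i$.
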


The main challenge in the computation of~\eqref{eq:gwpm-weights-k} arises from the term $\mathbb{P}_{\mat{w}} \paren*{\cD_{\mathrm{obs}}= D_{\mathrm{obs};i}}$, which is the probability of observing the matrix entries in $D_{\mathrm{obs};i}$ and depends on the sampling weights $\mat{w}$ in~\eqref{model:sampling_wo_replacement}.
Although this probability cannot be evaluated analytically, it can be approximated with an efficient algorithm, which makes it possible to compute the conformalization weights in~\eqref{eq:gwpm-weights-k} at cost of $\mathcal{O}(n_{r}n_c +nK)$, as explained in Section~\ref{sec:implementation}.

\subsection{Finite-Sample Coverage Bounds} \label{sec:bounds}

Algorithm~\ref{alg:simultaneous-confidence-region} produces confidence regions with simultaneous coverage for $\mat{X}^*$ sampled according to~\eqref{model:weighted_test_generation}.
This follows by integrating Proposition~\ref{prop:paired-calibration-partial-exch}, Lemma~\ref{lem:partial_quant}, and Equation~\eqref{eq:gwpm-weights-k}.

\begin{theorem} \label{thm:coverage-lower}
Suppose $\cD_{\mathrm{obs}}$ and $\cD_{\mathrm{miss}}$ are sampled according to~\eqref{model:sampling_wo_replacement}.
Let $\mat{X}^*$ be a test group sampled according to~\eqref{model:weighted_test_generation} conditional on $\cD_{\mathrm{obs}}$.
Then, for any fixed level $\alpha \in (0,1)$, the joint confidence region output by Algorithm~\ref{alg:simultaneous-confidence-region} satisfies~\eqref{eq:K-coverage} conditional on $\cD_{\mathrm{train}}, \cD_{\mathrm{prune}}$; i.e.,
$$\P{ \mat{M}_{\mat{X}^{*}} \in \hat{\mat{C}}(\mat{X}^*; \mat{M}_{\mat{X}_{\mathrm{obs}}}, \alpha) \mid \cD_{\mathrm{train}}, \cD_{\mathrm{prune}}} \geq 1-\alpha.$$
\end{theorem}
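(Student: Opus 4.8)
The plan is to obtain Theorem~\ref{thm:coverage-lower} by chaining together Proposition~\ref{prop:paired-calibration-partial-exch}, Lemma~\ref{lem:partial_quant}, and Lemma~\ref{lemma:conformalization-weights-explicit}; throughout, everything will be done conditionally on $\cD_{\mathrm{train}}$ and $\cD_{\mathrm{prune}}$, so that the black-box estimate $\hat{\mat{M}} = \hat{\mat{M}}(\cD_{\mathrm{train}})$ and the induced conformity-score map are fixed. First I would reduce the simultaneous-coverage event to a one-dimensional statement about the latent test score $S^*$ of~\eqref{eq:def-score-test}: since Algorithm~\ref{alg:simultaneous-confidence-region} outputs $\hat{\mat{C}}(\mat{X}^*;\mat{M}_{\mat{X}_{\mathrm{obs}}},\alpha) = \mathcal{C}(\mat{X}^*,\hat{\tau}_{\alpha,K},\hat{\mat{M}})$ and the prediction rule is monotone in $\tau$ with the boundary behaviour of~\eqref{eq:pred-rule-monotone}--\eqref{eq:pred-rule-boundary}, the event $\{\mat{M}_{\mat{X}^*}\in\mathcal{C}(\mat{X}^*,\hat{\tau}_{\alpha,K},\hat{\mat{M}})\}$ coincides with $\{S^*\le\hat{\tau}_{\alpha,K}\}$; this is exactly the identity recorded in Section~\ref{sec:quantile-lemma}. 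It therefore suffices to show $\P{S^*\le\hat{\tau}_{\alpha,K}\mid\cD_{\mathrm{train}},\cD_{\mathrm{prune}}}\ge 1-\alpha$.

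Next I would invoke the quantile-inflation machinery. Setting $Z_i=\mat{X}^{\mathrm{cal}}_i$ for $i\in[n]$, $Z_{n+1}=\mat{X}^*$, and letting $V_i=S_i$, $V_{n+1}=S^*$ be the associated scores from~\eqref{eq:scores} and~\eqref{eq:def-score-test}, each $V_i$ is (conditionally on $\cD_{\mathrm{train}}$) a fixed function of $Z_i$ alone, hence of the form $s(Z_i,Z_{-i})$ required by Lemma~\ref{lem:partial_quant}; I would take the $V_i$ to be almost surely distinct (if the chosen prediction rule can produce ties, this is restored by a negligible independent random perturbation of the scores, which does not change coverage). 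Proposition~\ref{prop:paired-calibration-partial-exch} says that, conditionally on $\cD_{\mathrm{train}}$ and $\cD_{\mathrm{prune}}$, the joint law of $(Z_1,\dots,Z_n,Z_{n+1})$ is invariant under permutations of its first $n$ coordinates; any such law factors as $f(Z_1,\dots,Z_{n+1})=g(\{Z_1,\dots,Z_{n+1}\})\cdot\bar h(\{Z_1,\dots,Z_n\},Z_{n+1})$ (for instance with $g\equiv 1$), so $Z_1,\dots,Z_{n+1}$ are leave-one-out exchangeable in the sense of Lemma~\ref{lem:partial_quant}. Applying that lemma with $\beta=1-\alpha$ yields $\P{S^*\le Q(1-\alpha;\,\sum_{i=1}^n p_i\delta_{S_i}+p_{n+1}\delta_\infty)\mid\cD_{\mathrm{train}},\cD_{\mathrm{prune}}}\ge 1-\alpha$ with $p_i\propto\bar h(Z_{-i},Z_i)$.

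The final step is to recognize that these abstract weights are the computable ones. Lemma~\ref{lemma:conformalization-weights-explicit} evaluates $\bar h(Z_{-i},Z_i)$ and shows the resulting $p_i$ are exactly the weights in~\eqref{eq:gwpm-weights-k}, which are precisely what Algorithm~\ref{alg:simultaneous-confidence-region} feeds into the weighted quantile~\eqref{eq:weighted-quantile} defining $\hat{\tau}_{\alpha,K}$. Hence $Q(1-\alpha;\,\sum_{i=1}^n p_i\delta_{S_i}+p_{n+1}\delta_\infty)=\hat{\tau}_{\alpha,K}$, the inequality of the previous paragraph becomes $\P{S^*\le\hat{\tau}_{\alpha,K}\mid\cD_{\mathrm{train}},\cD_{\mathrm{prune}}}\ge 1-\alpha$, and combining with the first step gives~\eqref{eq:K-coverage} conditionally on $\cD_{\mathrm{train}},\cD_{\mathrm{prune}}$, as claimed.

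Because the three cited results already carry the analytic content, the main obstacle here is not a hard estimate but getting the conditioning exactly right: one must verify that $\hat{\mat{M}}$, the score function, and the pruning-dependent binomial factors appearing in~\eqref{eq:gwpm-weights-k} are all measurable with respect to $(\cD_{\mathrm{train}},\cD_{\mathrm{prune}})$, while the residual randomness — which of the non-pruned observed entries get grouped into each $\mat{X}^{\mathrm{cal}}_i$, and how $\mat{X}^*$ is drawn from $\cD_{\mathrm{miss}}$ — is exactly what supplies the leave-one-out exchangeability of Proposition~\ref{prop:paired-calibration-partial-exch}. A secondary point of care is confirming that the almost-sure-distinctness hypothesis used in Lemmas~\ref{lem:general_quant}--\ref{lem:partial_quant} is compatible with the prediction rule actually used, invoking the tie-breaking perturbation above if necessary.
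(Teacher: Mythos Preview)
Your proposal is correct and follows essentially the same approach as the paper's own proof, which simply notes that the coverage event is equivalent to $\{S^*\le\hat{\tau}_{\alpha,K}\}$ and then invokes Proposition~\ref{prop:paired-calibration-partial-exch}, Lemma~\ref{lem:partial_quant}, and the weight characterization of Lemma~\ref{lemma:conformalization-weights-explicit}. Your write-up is in fact more explicit than the paper's about the conditioning, the leave-one-out factorization (your $g\equiv 1$ observation), and tie-breaking, all of which are fine additions.
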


Note that the probability in Theorem~\ref{thm:coverage-lower} is taken over the randomness in $\cbrac{\mat{X}^{\mathrm{cal}}_{1}, \dots, \mat{X}^{\mathrm{cal}}_{n}}$ and $\mat{X}^*$, while $\cD_{\mathrm{train}}$ and $\cD_{\mathrm{prune}}$ are fixed.
Therefore, this result implies~\eqref{eq:K-coverage}.
Further, it is also possible to bound our simultaneous coverage from above.
\begin{theorem} \label{thm:upper_bound}
Under the same setting of Theorem~\ref{thm:coverage-lower},
$\P{ \mat{M}_{\mat{X}^{*}} \in \hat{\mat{C}}(\mat{X}^*; \mat{M}_{\mat{X}_{\mathrm{obs}}},\alpha)} \leq 1-\alpha+\EE{\max_{i \in [n+1]} p_i(\mat{X}^{\mathrm{cal}}_{1}, \dots, \mat{X}^{\mathrm{cal}}_{n}, \mat{X}^*)}$,
where the weights $p_i$ are given by~\eqref{eq:gwpm-weights-k} and the expectations can also be taken conditional on $\cD_{\mathrm{train}}$ and $\cD_{\mathrm{prune}}$, as in Theorem~\ref{thm:coverage-lower}.
\end{theorem}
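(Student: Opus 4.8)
The plan is to run the quantile comparison underlying Theorem~\ref{thm:coverage-lower} (and Lemma~\ref{lem:partial_quant}) in the opposite direction, turning the quantile-inflation inequality into a near-matching ceiling. I would work conditionally on $\cD_{\mathrm{train}}$ and $\cD_{\mathrm{prune}}$, set $\beta = 1-\alpha$, and, following Lemma~\ref{lemma:conformalization-weights-explicit}, write $(Z_1,\dots,Z_n,Z_{n+1}) = (\mat{X}^{\mathrm{cal}}_1,\dots,\mat{X}^{\mathrm{cal}}_n,\mat{X}^*)$ with scores $(V_1,\dots,V_{n+1}) = (S_1,\dots,S_n,S^*)$, assumed almost surely distinct as in Lemma~\ref{lem:partial_quant}. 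Since $\hat{\mat{C}}(\mat{X}^*;\mat{M}_{\mat{X}_{\mathrm{obs}}},\alpha) = \mathcal{C}(\mat{X}^*,\hat{\tau}_{\alpha,K},\hat{\mat{M}})$ and $S^*$ is defined by~\eqref{eq:def-score-test}, the coverage probability equals $\P{S^* \le \hat{\tau}_{\alpha,K}}$ (as already noted in Section~\ref{sec:quantile-lemma}), so it suffices to bound this from above. The first step is to condition further on the unordered collection $E = \{Z_1,\dots,Z_{n+1}\}$. By Proposition~\ref{prop:paired-calibration-partial-exch} and Lemma~\ref{lemma:conformalization-weights-explicit}, the conditional law of $(Z_1,\dots,Z_{n+1})$ is leave-one-out exchangeable with the weights $p_i$ of~\eqref{eq:gwpm-weights-k}; exactly as in the proof of Lemma~\ref{lem:partial_quant}, this means that, given $E$, the test group $Z_{n+1}$ equals any prescribed $z \in E$ with probability equal to the weight $p(z)$ that~\eqref{eq:gwpm-weights-k} assigns to that element, while $V_{n+1}$ and $\hat{\tau}_{\alpha,K}$ become deterministic functions of $z$ and $E$. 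Write $v(z)$ for the score of element $z$ and $\hat{\tau}_z$ for the weighted quantile~\eqref{eq:weighted-quantile} produced when $z$ is the test group.

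The crux of the argument is the elementary observation that, for every $z \in E$, one has $v(z) \le \hat{\tau}_z$ if and only if $v(z) \le q^\star$, where $q^\star := Q\bigl(\beta; \sum_{z' \in E} p(z')\,\delta_{v(z')}\bigr)$ is the weighted $\beta$-quantile of the \emph{full} score distribution $F := \sum_{z' \in E} p(z')\,\delta_{v(z')}$, a threshold that does not depend on $z$. The reason is that the $\delta_\infty$ atom carried by $\hat{\tau}_z$'s defining distribution never sits below the finite value $v(z)$, so the total weight strictly below $v(z)$ is identical for that distribution and for $F$; almost-sure distinctness of the $V_i$ then removes any ambiguity at the threshold in the definition of $Q(\beta;\cdot)$. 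Consequently, conditioning on $E$ and summing over which element plays the role of the test group,
\[
\P{S^* \le \hat{\tau}_{\alpha,K} \mid E,\cD_{\mathrm{train}},\cD_{\mathrm{prune}}} = \sum_{z \in E} p(z)\,\One{v(z) \le \hat{\tau}_z} = \sum_{z \in E} p(z)\,\One{v(z) \le q^\star} = F(q^\star).
\]

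To conclude, note that $q^\star$, being the $\beta$-quantile of the discrete distribution $F$, is attained at one of its atoms, say $q^\star = v(z^\star)$, and that every atom strictly below $q^\star$ must carry $F$-mass strictly less than $\beta$ (otherwise $q^\star$ would not be the infimum defining $Q(\beta;\cdot)$); hence $F(q^\star)$, which equals $p(z^\star)$ plus the total $F$-mass strictly below $q^\star$, is at most $\beta + p(z^\star) \le \beta + \max_{z \in E} p(z) = 1-\alpha + \max_{i \in [n+1]} p_i$. Taking expectations over $E$---and then, for the marginal statement, over $\cD_{\mathrm{train}}$ and $\cD_{\mathrm{prune}}$ as well---gives $\P{S^* \le \hat{\tau}_{\alpha,K}} \le 1-\alpha + \E{\max_{i \in [n+1]} p_i(\mat{X}^{\mathrm{cal}}_1,\dots,\mat{X}^{\mathrm{cal}}_n,\mat{X}^*)}$, as claimed. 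I expect the only real obstacle to be the quantile bookkeeping in the crux step: one has to track the extra $\delta_\infty$ atom carefully, confirm that the $\beta$-quantile of the mass-one distribution $F$ is attained at a finite atom, and lean on almost-sure distinctness of the scores to rule out the usual $\le$ versus $<$ boundary issues. Everything else is a direct reuse of the leave-one-out exchangeability already supplied by Proposition~\ref{prop:paired-calibration-partial-exch} and Lemma~\ref{lemma:conformalization-weights-explicit}.
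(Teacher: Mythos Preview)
Your proposal is correct and follows essentially the same route as the paper's proof: condition on the unordered collection $\{Z_1,\dots,Z_{n+1}\}$, use the leave-one-out exchangeability from Proposition~\ref{prop:paired-calibration-partial-exch} and Lemma~\ref{lemma:conformalization-weights-explicit} to identify the conditional law of the test index, and bound the resulting discrete CDF at its own $\beta$-quantile by $\beta$ plus the largest atom. The only cosmetic difference is that the paper first invokes Lemma~\ref{lemma:quantile-infty} to replace $\delta_\infty$ by $\delta_{S^*}$ and then conditions, whereas you condition first and re-derive that equivalence inline as your ``crux'' step; the substance is identical.
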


A numerical investigation presented in Appendix~\ref{sec:upper-bound-numerical} demonstrates that in practice this upper bound converges to $1-\alpha$ as $n$ increases.
This is consistent with our empirical observations that Algorithm~\ref{alg:simultaneous-confidence-region} is not too conservative, as previewed in Figure~\ref{fig:movielens-preview}.



Algorithm~\ref{alg:simultaneous-confidence-region} and Theorem~\ref{thm:coverage-lower} assume the sampling weights $\mat{w}$ are known. In practice, however, $\mat{w}$ are often unknown and need to be estimated, as detailed in Appendix~\ref{app:missingness-estimation}.
Therefore, we also present a coverage result for confidence regions based on estimated sampling weights $\hat{\mat{w}}$.
For $i\in [n+1]$, let $\hat{p}_i$ denote the conformalization weights computed using $\hat{\mat{w}}$, and let \(\hat{\mat{C}}^{\mathrm{est}}\) be the resulting confidence region.
To quantify the error between the true and estimated conformalization weights, we define the estimation gap
$\Delta=\frac{1}{2} \sum_{i=1}^{n+1}\left|\widehat{p}_i - p_i\right|
$.

\begin{theorem} \label{thm:coverage-lower-empirical}
For any $\alpha \in (0,1)$, the joint confidence region output by Algorithm~\ref{alg:simultaneous-confidence-region} using estimated sampling weights $\widehat{\mat{w}}$ satisfies
$
\P{ \mat{M}_{\mat{X}^{*}} \in \hat{\mat{C}}^{\mathrm{est}}(\mat{X}^*; \mat{M}_{\mat{X}_{\mathrm{obs}}}, \alpha) \mid \cD_{\mathrm{train}}, \cD_{\mathrm{prune}}} \geq 1-\alpha- \EE{\Delta}
$, under the same assumptions as Theorem~\ref{thm:coverage-lower}.
\end{theorem}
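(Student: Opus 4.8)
\textbf{Proof proposal for Theorem~\ref{thm:coverage-lower-empirical}.}

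The plan is to reduce this statement to Theorem~\ref{thm:coverage-lower} by controlling how much the weighted quantile $\hat\tau_{\alpha,K}$ can move when the true weights $p_i$ are replaced by the estimated weights $\hat p_i$. Write $P = \sum_{i=1}^n p_i \delta_{S_i} + p_{n+1}\delta_\infty$ and $\hat P = \sum_{i=1}^n \hat p_i \delta_{S_i} + \hat p_{n+1}\delta_\infty$ for the two discrete distributions on $\mathbb{R}\cup\{\infty\}$ appearing in~\eqref{eq:weighted-quantile}, so that $\hat\tau_{\alpha,K} = Q(1-\alpha; P)$ is the oracle calibration value used in Theorem~\ref{thm:coverage-lower}, while $\hat\tau^{\mathrm{est}}_{\alpha,K} = Q(1-\alpha; \hat P)$ is the one actually used to build $\hat{\mat C}^{\mathrm{est}}$. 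Both $P$ and $\hat P$ are supported on the same atoms $\{S_1,\ldots,S_n,\infty\}$; they differ only in the masses assigned to those atoms, and the total variation distance between them is exactly $\Delta = \tfrac12\sum_{i=1}^{n+1}|\hat p_i - p_i|$. The key deterministic fact is a quantile-stability bound: for any two distributions $P,\hat P$ on the same finite support with $\dtv(P,\hat P)\le \Delta$, one has $Q(1-\alpha; \hat P) \ge Q(1-\alpha-\Delta; P)$. Indeed, for any threshold $t$, $\hat P((-\infty,t]) \le P((-\infty,t]) + \Delta$, so if $\hat P((-\infty,t]) \ge 1-\alpha$ then $P((-\infty,t]) \ge 1-\alpha-\Delta$; taking the infimum over such $t$ gives the claim. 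Consequently $\mathcal{C}(\mat X^*, \hat\tau^{\mathrm{est}}_{\alpha,K}, \hat{\mat M}) \supseteq \mathcal{C}(\mat X^*, Q(1-\alpha-\Delta; P), \hat{\mat M})$ by the monotonicity property~\eqref{eq:pred-rule-monotone}, and therefore $\{S^* \le Q(1-\alpha-\Delta;P)\} \subseteq \{\mat M_{\mat X^*} \in \hat{\mat C}^{\mathrm{est}}\}$.

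From here I would condition on $\cD_{\mathrm{train}}, \cD_{\mathrm{prune}}$ throughout and on the realized value of $\Delta$ (which is measurable with respect to these together with the calibration groups and $\mat X^*$ — more care is needed here, see below). Applying Lemma~\ref{lem:partial_quant} exactly as in the proof of Theorem~\ref{thm:coverage-lower}, but with coverage target $\beta = 1-\alpha-\Delta$ in place of $1-\alpha$, gives $\P{S^* \le Q(1-\alpha-\Delta; P) \mid \cD_{\mathrm{train}}, \cD_{\mathrm{prune}}} \ge 1-\alpha - \EE{\Delta \mid \cD_{\mathrm{train}}, \cD_{\mathrm{prune}}}$, provided $\Delta$ can be treated as a constant inside the quantile. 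The cleanest way to make this rigorous is to note that the event inclusion above holds pointwise, so
\begin{align*}
\P{\mat M_{\mat X^*} \in \hat{\mat C}^{\mathrm{est}} \mid \cD_{\mathrm{train}}, \cD_{\mathrm{prune}}}
&\ge \P{S^* \le Q(1-\alpha-\Delta; P) \mid \cD_{\mathrm{train}}, \cD_{\mathrm{prune}}}.
\end{align*}
For the right-hand side, I would use the fact that $\Delta$ is a deterministic function of $(\mat X^{\mathrm{cal}}_1,\ldots,\mat X^{\mathrm{cal}}_n, \mat X^*)$ (since $p_i, \hat p_i$ are), and then either (i) invoke Lemma~\ref{lem:partial_quant} after observing that the relevant leave-one-out exchangeability from Proposition~\ref{prop:paired-calibration-partial-exch} is unaffected by whether we use $\mat w$ or $\hat{\mat w}$ to form the groups (only the weights $p_i$ in the quantile change, not the sampling law of the groups, because Algorithm~\ref{alg:calibration-group} uses uniform weights $\mat 1$ for the calibration sampling), or (ii) bound $Q(1-\alpha-\Delta;P) \ge Q(1-\alpha-\bar\Delta;P)$ on the event $\{\Delta \le \bar\Delta\}$ and optimize over $\bar\Delta$. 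Approach (i) is cleaner: the sampling distribution of $(\mat X^{\mathrm{cal}}_1,\ldots,\mat X^{\mathrm{cal}}_n,\mat X^*)$ conditional on $\cD_{\mathrm{train}},\cD_{\mathrm{prune}}$ is exactly the one used in Theorem~\ref{thm:coverage-lower}, so the oracle guarantee $\P{S^* \le Q(1-\alpha; P)} \ge 1-\alpha$ holds verbatim; replacing the target by $1-\alpha-\Delta$ with $\Delta$ a function of the groups requires a short argument that the one-step-further version of Lemma~\ref{lem:partial_quant} still applies, or alternatively a discretization/monotone-limit argument over a grid of values of $\Delta$.

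The main obstacle is the measurability subtlety in the last paragraph: $\Delta$ depends on the very random groups over which we take the probability, so $Q(1-\alpha-\Delta; P)$ is a random threshold and one cannot naively plug it into Lemma~\ref{lem:partial_quant}, whose statement fixes $\beta$. The resolution I expect to work is to prove a slightly strengthened version of Lemma~\ref{lem:partial_quant} (or Lemma~\ref{lem:general_quant}) allowing the level $\beta$ to itself be the weighted-quantile-relevant random quantity — concretely, showing $\P{V_{n+1} \le Q(1-\alpha; \sum \hat p_i \delta_{V_i} + \hat p_{n+1}\delta_\infty)} \ge 1-\alpha - \EE{\Delta}$ directly, by writing the left side as $\sum_{i} \P{V_{n+1} \le \text{(threshold)},\ V_{n+1} = V_i\text{-th order stat}}$ and comparing the $\hat p$-weighted and $p$-weighted cumulative masses atom by atom, exactly mirroring how Barber et al.\ handle approximate weights in the beyond-exchangeability literature. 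Once that lemma is in place, Theorem~\ref{thm:coverage-lower-empirical} follows by taking expectations over $\cD_{\mathrm{train}},\cD_{\mathrm{prune}}$. Everything else — the TV bound $\dtv(P,\hat P)=\Delta$, the quantile-stability inequality, and the region monotonicity — is routine given the machinery already developed in the paper.
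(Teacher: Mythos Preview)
Your proposal is correct and follows essentially the same route as the paper's proof: bound the total variation distance between $\hat P$ and $P$ by $\Delta$, invoke the quantile-stability inequality $Q(1-\alpha;\hat P)\ge Q(1-\alpha-\Delta;P)$, and then appeal to the coverage machinery behind Theorem~\ref{thm:coverage-lower}. The paper's own argument is in fact terser than yours and glosses over the very measurability issue you flag.

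One remark on that issue, since you spend most of your effort on it: the resolution is simpler than either of the two routes you sketch. Recall from the proof of Lemma~\ref{lem:general_quant} (and Lemma~\ref{lem:partial_quant}) that, conditional on the event $E_z=\{\{\mat X^{\mathrm{cal}}_1,\ldots,\mat X^{\mathrm{cal}}_n,\mat X^*\}=\{\mat x_1,\ldots,\mat x_{n+1}\}\}$, the distribution of $S^*$ is exactly $\sum_{i=1}^{n+1} p_i\delta_{v_i}$. Crucially, both collections of weights $(p_i)_i$ and $(\hat p_i)_i$---and hence $\Delta=\tfrac12\sum_i|\hat p_i-p_i|$---are functions of the \emph{unordered} set $\{\mat x_1,\ldots,\mat x_{n+1}\}$ (together with $\cD_{\mathrm{train}},\cD_{\mathrm{prune}}$), since each $p_i$ in~\eqref{eq:general_prob} is the mass the permutation construction assigns to the atom $\mat x_i$, independently of which element happened to be labelled ``$n{+}1$''. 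Thus $\Delta$ is constant on $E_z$, and the inequality $\P{S^*\le Q(1-\alpha-\Delta;\sum_i p_i\delta_{v_i})\mid E_z}\ge 1-\alpha-\Delta$ holds with $\Delta$ treated as a fixed number; marginalizing over $E_z$ gives the stated bound $1-\alpha-\EE{\Delta}$ directly, without needing a strengthened lemma or a discretization argument.
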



\section{Computational Shortcuts} \label{sec:implementation}

\subsection{Efficient Evaluation of the Conformalization Weights} \label{sec:implementation-weights}

We now explain how to efficiently approximate the weights $p_i$ in \eqref{eq:gwpm-weights-k}, for all $i \in [n+1]$.
The main challenge is to evaluate $\mathbb{P}_{\mat{w}} \paren*{\cD_{\mathrm{obs}}= D_{\mathrm{obs};i}}$ according to the \textit{weighted sampling without replacement} model defined in~\eqref{model:sampling_wo_replacement}, which is of order $\mathcal{O}(n_{\text{obs}}!)$ if approached directly.
In truth, it suffices to relate this probability, which depends on the index $i \in [n+1]$, to $\mathbb{P}_{\mat{w}} \paren*{\cD_{\mathrm{obs}}= D_{\mathrm{obs}}}$, which is constant and can thus be ignored when computing \eqref{eq:gwpm-weights-k}. In this section, we demonstrate that their ratio can be computed efficiently in $\mathcal{O}(n_{r}n_c +nK)$.

We begin by expressing $\mathbb{P}_{\mat{w}} \paren*{\cD_{\mathrm{obs}}= D_{\mathrm{obs}}}$ and $\mathbb{P}_{\mat{w}}(\calD_{\mathrm{obs}}=D_{\mathrm{obs};i})$, for any $i \in [n+1]$, as closed-form integrals.
Let $\delta=\sum_{(r,c)\in D_{\mathrm{miss}}}w_{r,c}$ denote the cumulative weight of all missing indices and, for any positive scaling parameter $h > 0$, define $\Phi(\tau; h)$ of $\tau \in (0,1]$ as
\begin{align} \label{eq:integral-def-phi}
    \Phi(\tau; h) := h\delta \tau^{h\delta-1} \prod_{(r,c) \in D_{\mathrm{obs}}} \paren*{1-\tau^{h w_{r,c}}}, \quad \quad \phi(\tau; h) := \log \Phi(\tau; h).
\end{align}
Further, define also $d_i := \sum_{k=1}^K(w_{x_{i,k}}-w_{x_{n+1,k}})$ for all $i \in [n+1]$.

\begin{proposition}\label{prop:integral}
For any fixed $n_{\mathrm{obs}} < n_r n_c$, scaling parameter $h > 0$, and $i \in [n+1]$,
\begin{align} \label{eq:prob-obs-approx-i}
  & \mathbb{P}_{\mat{w}}(\calD_{\mathrm{obs}}=D_{\mathrm{obs};i})
    = \int_{0}^{1} \Phi(\tau; h) \cdot \eta_i(\tau) \,d\tau,
\end{align}
where, for any $\tau \in (0,1)$,
\begin{align} \label{eq:def-integral-tau}
  & \eta_i(\tau; h) := \frac{\tau^{hd_i}(\delta + d_i)}{\delta} \cdot \paren*{\prod\limits_{k=1}^K \frac{1-\tau^{hw_{x_{n+1,k}}}}{1-\tau^{hw_{x_{i,k}}}}}.
\end{align}

\end{proposition}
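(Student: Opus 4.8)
The plan is to derive both identities in~\eqref{eq:prob-obs-approx-i} from a single stochastic representation of the weighted sampling model~\eqref{model:sampling_wo_replacement}, and then obtain the $D_{\mathrm{obs};i}$ formula from the base case $D = D_{\mathrm{obs}}$ by an elementary factorization of the integrand. First I would establish a \emph{competing exponential clocks} representation of $\Psi$ (equivalently, the top-$m$ \emph{Gumbel-max} view, since $e^{-(\log w + G)}\sim\mathrm{Exp}(w)$ for $G$ standard Gumbel, so ranking $\log w_{r,c}+G_{r,c}$ from the top is the same as ranking independent $\mathrm{Exp}(w_{r,c})$ variables from the bottom): attach to each index $(r,c)$ an independent clock $T_{r,c}\sim\mathrm{Exp}(w_{r,c})$; then the order in which the clocks ring has exactly the law of the sequential draw order defining $\Psi$, because $\min$ of independent exponentials is exponential with the summed rate and is attained by index $j$ with probability $w_j/(\text{sum of rates})$, and the post-minimum clocks remain independent exponentials by memorylessness. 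Hence, for \emph{any} set $D$ with complement $D^{c}$, $\mathbb{P}_{\mat{w}}(\cD_{\mathrm{obs}}=D)$ equals the probability that every clock indexed by $D$ rings strictly before every clock indexed by $D^{c}$. I would prove this as a short lemma (it is essentially the classical exponential-race description of Wallenius' distribution in the single-ball-per-color case).

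Next, fix $D$, set $\delta_D:=\sum_{(r,c)\in D^{c}}w_{r,c}>0$ (positive since $n_{\mathrm{obs}}<n_rn_c$), and condition on $U:=\min_{(r,c)\in D^{c}}T_{r,c}$, which is $\mathrm{Exp}(\delta_D)$-distributed and independent of $\{T_{r,c}\}_{(r,c)\in D}$. The event ``$D$ before $D^{c}$'' becomes $\{T_{r,c}<U\ \forall (r,c)\in D\}$, with conditional probability $\prod_{(r,c)\in D}(1-e^{-w_{r,c}s})$ given $U=s$, by independence of the $D$-clocks. Integrating against the density $\delta_D e^{-\delta_D s}$ yields
\[
\mathbb{P}_{\mat{w}}(\cD_{\mathrm{obs}}=D)=\int_0^{\infty}\delta_D e^{-\delta_D s}\prod_{(r,c)\in D}\bigl(1-e^{-w_{r,c}s}\bigr)\,ds ,
\]
a convergent integral (the integrand vanishes like $s^{|D|}$ as $s\to0$ and decays exponentially as $s\to\infty$, so the conditioning/Fubini steps are justified). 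The substitution $\tau=e^{-s/h}$, i.e.\ $s=-h\log\tau$, $ds=-(h/\tau)\,d\tau$, maps $(0,\infty)$ onto $(0,1)$ and sends $e^{-\alpha s}\mapsto\tau^{h\alpha}$; applied with $D=D_{\mathrm{obs}}$, so $\delta_D=\delta$, it gives $\mathbb{P}_{\mat{w}}(\cD_{\mathrm{obs}}=D_{\mathrm{obs}})=\int_0^1\Phi(\tau;h)\,d\tau$ with $\Phi$ as in~\eqref{eq:integral-def-phi}, and this holds for every $h>0$ (the value is $h$-independent, being only a reparametrization of the same $s$-integral).

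For $D=D_{\mathrm{obs};i}$ with $i\in[n]$, I would note that its complement is $\bigl(D_{\mathrm{miss}}\setminus\{x_{n+1,k}\}_{k=1}^K\bigr)\cup\{x_{i,k}\}_{k=1}^K$: since $x_{i,k}\in D_{\mathrm{obs}}$ (a calibration index) while $x_{n+1,k}=X^*_k\in D_{\mathrm{miss}}$, these two $K$-element sets are disjoint, so the complement's total weight is $\delta-\sum_k w_{x_{n+1,k}}+\sum_k w_{x_{i,k}}=\delta+d_i$. Moreover $\prod_{(r,c)\in D_{\mathrm{obs};i}}(1-e^{-w_{r,c}s})=\prod_{(r,c)\in D_{\mathrm{obs}}}(1-e^{-w_{r,c}s})\cdot\prod_{k=1}^K\frac{1-e^{-w_{x_{n+1,k}}s}}{1-e^{-w_{x_{i,k}}s}}$. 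Running the same conditioning (now with rate $\delta+d_i$) and the same substitution produces the integrand $h(\delta+d_i)\tau^{h(\delta+d_i)-1}\prod_{(r,c)\in D_{\mathrm{obs}}}(1-\tau^{hw_{r,c}})\prod_{k=1}^K\frac{1-\tau^{hw_{x_{n+1,k}}}}{1-\tau^{hw_{x_{i,k}}}}$; dividing out $\Phi(\tau;h)=h\delta\tau^{h\delta-1}\prod_{(r,c)\in D_{\mathrm{obs}}}(1-\tau^{hw_{r,c}})$ leaves precisely $\frac{\delta+d_i}{\delta}\tau^{hd_i}\prod_{k=1}^K\frac{1-\tau^{hw_{x_{n+1,k}}}}{1-\tau^{hw_{x_{i,k}}}}=\eta_i(\tau;h)$, which is~\eqref{eq:def-integral-tau}, giving~\eqref{eq:prob-obs-approx-i}; the case $i=n+1$ is the base case, consistent with $d_{n+1}=0$ and $\eta_{n+1}\equiv1$. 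The one genuinely non-routine step is the clocks representation of $\Psi$ (and the bookkeeping to justify conditioning and the $s\to\tau$ change of variables); once it is in place, the rest is algebra.
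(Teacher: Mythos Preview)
Your proposal is correct and follows essentially the same route as the paper: both obtain the base-case identity $\mathbb{P}_{\mat{w}}(\cD_{\mathrm{obs}}=D_{\mathrm{obs}})=\int_0^1\Phi(\tau;h)\,d\tau$ and then, for general $i$, apply that same formula to the swapped set $D_{\mathrm{obs};i}$ (with $\delta$ replaced by $\delta+d_i$) and algebraically factor the integrand as $\Phi(\tau;h)\cdot\eta_i(\tau;h)$. The only difference is expository: the paper cites the base-case integral as a known Wallenius identity (Eq.~(19) of \citet{fog2007wnchypg}), whereas you derive it from scratch via the exponential-clocks/Gumbel-max representation, which makes your argument self-contained but does not change the structure of the proof.
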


Note that $\eta_{n+1}(\tau)=1$ for all $\tau$ if $i = n+1$, and in that case Proposition~\ref{prop:integral} recovers a classical result by \citet{wallenius1963biased}.
Further, the function $\eta_i$ in~\eqref{eq:def-integral-tau} is a product of $K$ simple functions of $\tau$, and therefore it is straightforward to evaluate even for large matrices.

Proposition~\ref{prop:integral} provides the foundation for evaluating the weights $p_i$ in~\eqref{eq:gwpm-weights-k}.
The remaining difficulty is that~\eqref{eq:prob-obs-approx-i} has no analytical solution.
Fortunately, the function $\Phi(\tau;h)$ satisfies some properties that make it feasible to approximate this integral accurately.
\begin{lemma} \label{lemma:phi-max}
    If $h > 1-\delta$, the function $\Phi(\tau;h)$ defined in~\eqref{eq:integral-def-phi} has a unique stationary point with respect to $\tau$ at some value $\tau_h \in (0,1)$. Further, $\tau_h$ is a global maximum.
\end{lemma}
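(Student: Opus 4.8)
The plan is to work with $\phi(\tau;h) = \log \Phi(\tau;h)$, since $\Phi$ is positive on $(0,1)$ and taking logs turns the product over $D_{\mathrm{obs}}$ into a sum, which is far easier to differentiate. From the definition in~\eqref{eq:integral-def-phi},
\[
\phi(\tau;h) = \log(h\delta) + (h\delta - 1)\log\tau + \sum_{(r,c)\in D_{\mathrm{obs}}} \log\!\bigl(1-\tau^{h w_{r,c}}\bigr).
\]
First I would substitute $u = -\log\tau \in (0,\infty)$, so that $\tau = e^{-u}$ and stationarity in $\tau$ on $(0,1)$ corresponds to stationarity in $u$ on $(0,\infty)$. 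A direct computation gives
\[
\frac{\partial}{\partial u}\,\phi(e^{-u};h) = -(h\delta-1) + \sum_{(r,c)\in D_{\mathrm{obs}}} \frac{h w_{r,c}\, e^{-h w_{r,c} u}}{1 - e^{-h w_{r,c} u}} = 1 - h\delta + \sum_{(r,c)\in D_{\mathrm{obs}}} \frac{h w_{r,c}}{e^{h w_{r,c} u} - 1}.
\]
Call this function $G(u)$. The key is to show $G$ is strictly decreasing on $(0,\infty)$, has limit $+\infty$ as $u \to 0^+$, and has limit $1 - h\delta < 0$ as $u \to \infty$ (this is exactly where the hypothesis $h > 1 - \delta$ enters). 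Strict monotonicity follows because each summand $h w_{r,c}/(e^{h w_{r,c} u}-1)$ is strictly decreasing in $u$ (the function $x \mapsto x/(e^x - 1)$ is strictly decreasing on $(0,\infty)$), and $D_{\mathrm{obs}}$ is nonempty since $n_{\mathrm{obs}} \geq 1$; the boundary limit as $u\to 0^+$ holds because each summand behaves like $1/u \to +\infty$; and the limit as $u \to \infty$ holds because every summand vanishes. By the intermediate value theorem and strict monotonicity, $G$ has a unique zero $u_h \in (0,\infty)$, which corresponds to a unique stationary point $\tau_h = e^{-u_h} \in (0,1)$ of $\Phi$.

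To confirm $\tau_h$ is a global maximum: since $G(u) > 0$ for $u < u_h$ and $G(u) < 0$ for $u > u_h$, the map $u \mapsto \phi(e^{-u};h)$ is strictly increasing then strictly decreasing, so $u_h$ is its unique global maximizer; translating back through the monotone change of variables $\tau = e^{-u}$ and the monotone transformation $\Phi = e^{\phi}$ shows $\tau_h$ is the unique global maximum of $\Phi(\cdot;h)$ on $(0,1)$. One should also note the boundary behavior $\Phi(\tau;h) \to 0$ as $\tau \to 0^+$ (from the $\tau^{h\delta - 1}$ factor when $h\delta > 1$, or more carefully from $\phi \to -\infty$ in all cases since the $\log$-terms stay bounded while, combined with the sum structure, the derivative analysis already pins down the shape) and as $\tau \to 1^-$ (each factor $1 - \tau^{h w_{r,c}} \to 0$), so no maximum escapes to the endpoints.

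The main obstacle I anticipate is handling the $\tau \to 0^+$ boundary carefully when $h\delta \le 1$, i.e. when the exponent $h\delta - 1$ in the prefactor is nonpositive: then $\tau^{h\delta-1}$ does not vanish and one cannot immediately claim $\Phi \to 0$. The clean way around this is to not argue via the value of $\Phi$ at the boundary at all, but to rely entirely on the sign analysis of $G(u) = \partial_u \phi(e^{-u};h)$ on the open interval $(0,\infty)$: the hypothesis $h > 1-\delta$ is exactly what forces $\lim_{u\to\infty} G(u) = 1 - h\delta < 0$, guaranteeing the unique interior zero and the increasing-then-decreasing shape, which by itself yields both uniqueness of the stationary point and its being the global maximum on the open interval. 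A minor secondary point is verifying interchange of differentiation and the finite sum, which is immediate since $D_{\mathrm{obs}}$ is finite and each term is smooth on $(0,1)$.
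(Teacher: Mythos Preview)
Your approach is essentially identical to the paper's: both work with $\phi=\log\Phi$ and analyze the same transformed derivative (your $G(u)=\partial_u\phi(e^{-u};h)$ equals $-\tau\,\phi'(\tau;h)$ by the chain rule, which is precisely the function $\tau\phi'(\tau;h)$ the paper studies after multiplying $\phi'$ by $\tau$), then use strict monotonicity plus the intermediate value theorem. Your sign-change argument for the global maximum is in fact slightly cleaner than the paper's, which separately checks $\phi\to-\infty$ at both endpoints.

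There is, however, one real error. You write that the hypothesis $h>1-\delta$ yields $\lim_{u\to\infty}G(u)=1-h\delta<0$, but $h>1-\delta$ does \emph{not} imply $h\delta>1$ (take $h=\delta=0.6$: then $h>1-\delta$ holds yet $h\delta=0.36<1$). What is actually required is $h>1/\delta$, and indeed the paper's own proof, together with all surrounding discussion (Section~\ref{sec:implementation-weights} and Appendix~\ref{app:scaling-parameter}), uses precisely $h>1/\delta$; the ``$1-\delta$'' in the lemma statement is evidently a typo that you carried through without checking. With the corrected hypothesis $h\delta>1$, your argument goes through verbatim. A minor aside: your parenthetical justification ``$x\mapsto x/(e^x-1)$ is strictly decreasing'' does not quite match the form of the summand $hw_{r,c}/(e^{hw_{r,c}u}-1)$ as a function of $u$, but the needed monotonicity is immediate anyway since $u\mapsto e^{hw_{r,c}u}-1$ is strictly increasing.
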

See Figure~\ref{fig:phi-eta-plots} for a visualization of \(\Phi(\tau; h)\) and \(\eta_i(\tau; h)\), in two examples with independent and uniformly distributed sampling weights $\mat{w}$. These results show \(\Phi(\tau; h)\) becomes increasingly concentrated around its unique maximum for larger sample sizes, while \(\eta_i(\tau; h)\) remains relatively smooth at that point.
Thus, it makes sense to approximate this integral through a careful extension of Laplace's method. This is explained below.

\begin{figure*}[!htb]
  \subfloat[$n_{\text{obs}}=10$, $n_r n_c =100$, $K=10$.]{%
    \includegraphics[width=.48\linewidth]{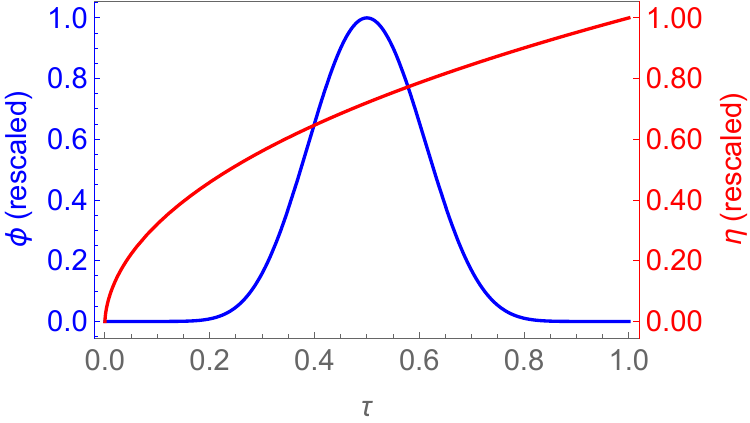}%
    \label{subfig:a}%
  }\hfill
  \subfloat[$n_{\text{obs}}=200$, $n_r n_c =1000$, $K=10$.]{%
    \includegraphics[width=.48\linewidth]{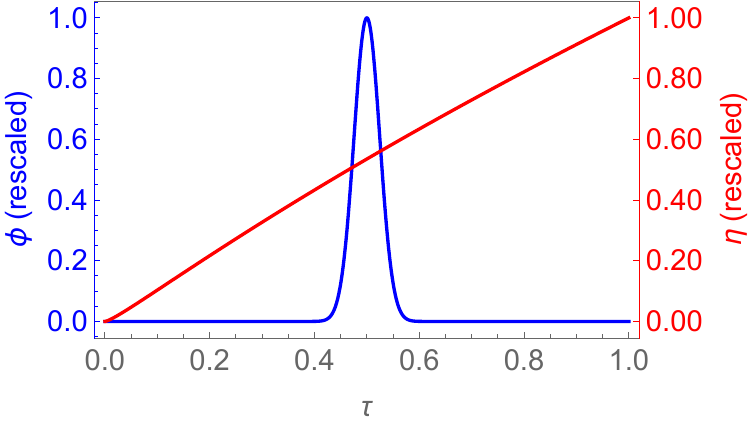}%
    \label{subfig:b}%
  }\\
  \caption{Centered and re-scaled functions \(\Phi(\tau; h)\) and $\eta_i(\tau; h)$, with $h$ chosen so that \(\Phi(\tau; h)\) is maximized at $\tau = 1/2$, for different matrix sizes and numbers of observations. }
  \label{fig:phi-eta-plots}
\end{figure*}

The first step to approximate~\eqref{eq:prob-obs-approx-i} with a {\em generalized} Laplace method (justified in Section~\ref{sec:laplace}), is to move the peak of the integrand away from the boundary.
To this end, define $\tau_h := \argmax_{\tau \in (0,1)} \Phi(\tau; h)$,
where $h > 0$ is a free parameter. Our goal is to tune $h$ such that the peak is centered within the integration domain, specifically setting $h$ such that $\tau_{h}=1/2$.
According to Lemma~\ref{lemma:phi-max}, the function $\Phi(\tau; h)$ has a unique global maximum at $\tau_h \in (0,1)$ when $h > 1/\delta$. Therefore, a suitable value of $h > 1/\delta$ that satisfies $\tau_h = 1/2$ can be found with the Newton-Raphson algorithm; see Appendix~\ref{app:scaling-parameter} for further details.

As the number of observations grows, the peak of $\Phi(\tau;h)$ increasingly dominates the integral. A Taylor expansion reveals that the integral is primarily determined by the value of $\eta_i(\tau; h) \cdot \Phi(\tau; h)$ at $\tau = \tau_h$, and by the curvature of $\log \Phi(\tau; h)$ at the peak, namely $\phi^{''}(\tau_h; h)$.
This leads to the following Laplace approximation,
\begin{align}\label{eq:relation-obs}
\begin{split}
    \mathbb{P}_{\mat{w}}(\calD_{\mathrm{obs}}=D_{\mathrm{obs};i}) 
    \approx \eta_i(\tau_h; h) \cdot \Phi(\tau_h; h) \sqrt{\frac{-2\pi}{\phi^{''}(\tau_h; h)}} 
    \approx \eta_i(\tau_h; h) \cdot \mathbb{P}_{\mat{w}}(\calD_{\mathrm{obs}}=D_{\mathrm{obs}}).
\end{split}
\end{align}
This approximation is highly accurate in the large-sample limit and is particularly useful because it allows us to approximate the ratio $\mathbb{P}_{\mat{w}}(\calD_{\mathrm{obs}}=D_{\mathrm{obs};i})/\mathbb{P}_{\mat{w}}(\calD_{\mathrm{obs}}=D_{\mathrm{obs}})$ with $\eta_i(\tau_h; h)$ in \eqref{eq:def-integral-tau}, which is straightforward to calculate.
For example, if the sampling weights $\mat{w}$ in~\eqref{model:sampling_wo_replacement} are the same, $\eta_i(\tau; h) \equiv 1$ for all  $i \in [n+1]$ and any $\tau \in (0,1)$.

By combining~\eqref{eq:relation-obs} with~\eqref{eq:gwpm-weights-k}, it follows that, for each $i \in [n+1]$, the conformalization weight $p_i$ can be approximately rewritten in the large-sample limit as
$ p_i(\mat{x}_{1}, \dots, \mat{x}_{n}, \mat{x}_{n+1})
    \approx \bar{p}_i(\mat{x}_{1}, \dots, \mat{x}_{n}, \mat{x}_{n+1})/ \sum_{j=1}^{n+1} \bar{p}_j(\mat{x}_{1}, \dots, \mat{x}_{n}, \mat{x}_{n+1})$,
with the un-normalized weight $\bar{p}_i$ given by:
\begin{align} \label{eq:gwpm-weights-k-fast}
\begin{split}
  \bar{p}_i(\mat{x}_{1}, \dots, \mat{x}_{n}, \mat{x}_{n+1})
  & = \eta_i(\tau_h) \cdot \paren*{ \tilde{w}^*_{x_{i,1}} \prod\limits_{k=2}^{K}\tilde{w}^*_{x_{i,k}} } \\
    & \quad \cdot \left[ \cfrac{\mbinom{n^{c_i}_{\mathrm{obs}} }{\bar{n}^{c_i}_{\mathrm{obs}}}}{\mbinom{n^{c_i}_{\mathrm{obs}} -K}{\bar{n}^{c_i}_{\mathrm{obs}} -K}} \cdot  \cfrac{\mbinom{n^{c_{n+1}}_{\mathrm{obs}} }{\bar{n}^{c_{n+1}}_{\mathrm{obs}}}}{\mbinom{n^{c_{n+1}}_{\mathrm{obs}} +K}{\bar{n}^{c_{n+1}}_{\mathrm{obs}}+K}} \cdot \prod\limits_{k=1}^{K-1} \frac{\bar{n}^{c_i}_{\mathrm{obs}}-k}{\bar{n}^{c_{n+1}}_{\mathrm{obs}}+K-k} \right]^{\I{c_i \neq c_{n+1}}}.
  \end{split}
\end{align}
This makes Algorithm~\ref{alg:simultaneous-confidence-region} practical because $\bar{p}_i$ in~\eqref{eq:gwpm-weights-k-fast} only involves simple arithmetic and can be computed efficiently in $\mathcal{O}(n_r n_c + nK)$ for all $i\in [n+1]$, as explained in Appendix~\ref{app:complexity}.

The approximation in \eqref{eq:relation-obs} is not derived from a straightforward application of the traditional Laplace method, which deals with integrals of simpler functions (see Appendix~\ref{app:laplace-review}). However, the Taylor approximation concepts foundational to the Laplace method are adaptable enough to be extended to our context. Consequently, we have developed a specialized version of the Laplace approximation tailored for \eqref{eq:relation-obs}, and demonstrated that the approximation is consistent under additional technical assumptions, as detailed in Appendix~\ref{app:laplace-theory}.

\section{Empirical Demonstrations} \label{sec:empirical}

\subsection{Benchmarks and Setup}
We apply SCMC to simulated and real data, comparing its performance to those of two intuitive baselines: a naive {\em unadjusted} approach and a conservative {\em Bonferroni} approach.
The unadjusted heuristic is very similar to the approach of \citet{gui2023conformalized}: it applies Algorithm~\ref{alg:simultaneous-confidence-region} with $K=1$ repeatedly for every individual entry in $\mat{X}^*$; this does not ensure group-level coverage. We refer to Appendix~\ref{app:cmc} for details about the relation between this unadjusted benchmark and the original approach of \citet{gui2023conformalized}, since the difference between the two is subtle and not central to this comparison.
The Bonferroni baseline applies Algorithm~\ref{alg:simultaneous-confidence-region} with $K=1$ at level $\alpha/K$ instead of $\alpha$.
Both baselines are applied using the same matrix completion model leveraged by our method, and their predictions are calibrated using $Kn$ observed matrix entries.

These experiments are organized as follows.
Section~\ref{sec:exp-synthetic} utilizes simulated data, with Section~\ref{sec:exp-synthetic-uniform} focusing on (known) uniform sampling weights, and Section~\ref{sec:exp-synthetic-heterogeneous} allowing the sampling weights for the observed data to be heterogeneous (although still known exactly).
Section~\ref{sec:experiments-data} describes more realistic experiments using the MovieLens data, considering estimated sampling weights.
Section~\ref{sec:additional-numerical-results} highlights some additional results from the Appendix.
Appendices~\ref{app:additional-experiments} and~\ref{app:movielens-conditional} describe experiments with heterogeneous test weights using synthetic data and MovieLens data, respectively. Appendix~\ref{app:exp-mc-algorithms} presents results from different matrix completion algorithms. Appendix~\ref{app:exp-estimated-missingness} compares the performance of SCMC under true versus estimated sampling weights.
Appendix~\ref{sec:upper-bound-numerical} investigates the tightness of the theoretical coverage upper bounds derived in Section~\ref{sec:bounds}.

\subsection{Numerical Experiments with Synthetic Data} \label{sec:exp-synthetic}

\subsubsection{Uniform Sampling Weights} \label{sec:exp-synthetic-uniform}

A matrix $\mat{M}$ with $n_r = 200$ rows and $n_c = 200$ columns is generated using a ``signal plus noise'' model that exhibits both a low-rank structure and column-wise dependencies, reflecting, for example, that users may tend to agree on the quality of certain movies.
The strength of these dependencies is controlled by a parameter $\mu \geq 0$, with larger values of $\mu$ indicating stronger dependencies; see Appendix~\ref{app:exp-synthetic-uniform} for further details.
As we shall see, the advantage of our approach over the Bonferroni benchmark becomes more pronounced as $\mu$ increases.
Given $\mat{M}$, we begin by considering a completely random observation pattern.

We observe $n_{\mathrm{obs}}=8000$ entries of $\mat{M}$, sampled according to~\eqref{model:sampling_wo_replacement} with $w_{r,c} = 1$ for all $(r,c) \in [n_r] \times [n_c]$.
Let $\cD_{\mathrm{obs}}$ denote the (unordered) observed indices.
Then, 100 test groups $\mat{X}^*$ of size $K$, where $K \geq 2$ is a control parameter, are sampled without replacement from $\cD_{\mathrm{miss}} = [n_r] \times [n_c] \setminus \cD_{\mathrm{obs}}$, according to~\eqref{model:weighted_test_generation} with $w^*_{r,c} = 1$ for all $(r,c) \in \cD_{\mathrm{miss}}$.
The confidence region for the group $\mat{X}^*$ is constructed by Algorithm~\ref{alg:simultaneous-confidence-region} with $n = \min \cbrac{1000, \lfloor \xi_{\text{obs}}/2 \rfloor}$, where $\xi_{\text{obs}}$ denotes the maximum possible number of calibration groups.
The matrix completion model is thus trained using $n_{\mathrm{train}}=n_{\mathrm{obs}}-Kn$ observed entries of $\mat{M}$, indexed by $\cD_{\mathrm{train}}$. While SCMC can leverage any completion model producing point predictions, here we employ an alternating least squares approach \citep{hu_cf_2008}, designed to recover low-rank signals.
See Appendix~\ref{app:exp-mc-algorithms} for experiments with different models.

\begin{figure}[!htb]
    \centering
    \includegraphics[width=0.8\linewidth]{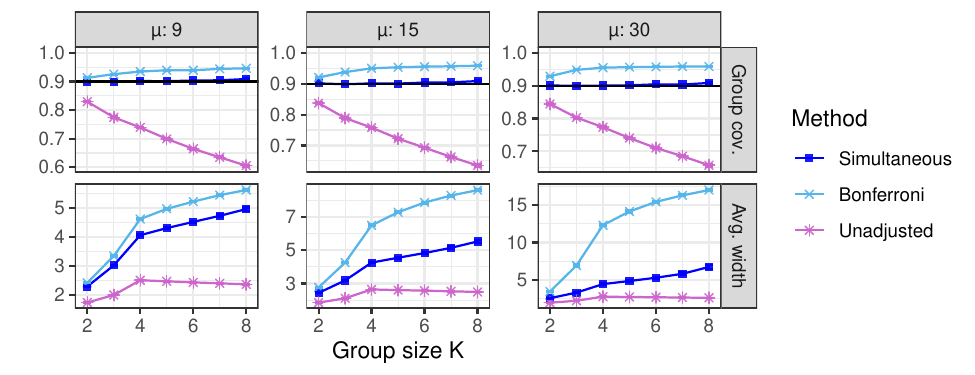}
    \caption{Performance of SCMC and two baselines on simulated data, as a function of the group size $K$ and of the column-wise noise parameter $\mu$. The nominal coverage is $90\%$.}
    \label{fig:exp_uniform_vary_k}
\end{figure}

Figure~\ref{fig:exp_uniform_vary_k} summarizes the results as a function of $K$, for different values of $\mu$ and $\alpha = 10\%$.
Each method is assessed in terms of the average width of the output confidence regions and of the empirical simultaneous coverage for the 100 test groups. All results are averaged over 300 independent experiments.
Our method attains 90\% simultaneous coverage, while the unadjusted baseline becomes increasingly invalid as $K$ grows.
Further, our method leads to more informative confidence regions compared to the Bonferroni baseline, as the latter is increasingly conservative for larger $K$ and $\mu$.
See Figure~\ref{fig:exp_uniform_vary_mu} in Appendix~\ref{app:numerical-results} for a different view of these results, shown as a function of $\mu$ for different values of $K$.

\subsubsection{Heterogeneous Sampling Weights} \label{sec:exp-synthetic-heterogeneous}

We now consider experiments where the matrix $\mat{M}$ with $n_c = n_r = 300$ is modeled as a low-rank component plus additive noise that has no special patterns, while he sampling weights $\mat{w}$ of the observation model~\eqref{model:weighted_test_generation} are heterogeneous.
Specifically, $\mat{w}$ is chosen to introduce a column-wise dependent missingness pattern, resulting in some columns being more densely observed than others. For example, popular movies tend to receive more ratings, leading to denser observations in certain columns. The degree of heterogeneity is controlled by a parameter $s$, where a decrease in $s$ indicates stronger heterogeneity of sparsity levels. For precise model descriptions, see Appendix~\ref{app:exp-synthetic-heterogeneous}.
Using this model, we randomly sample without replacement $n_{\mathrm{obs}}=27,000$ matrix entries (from a total of 90,000) and then apply Algorithm~\ref{alg:simultaneous-confidence-region} similarly to the previous section, using $n = \min \cbrac{2000, \lfloor \xi_{\text{obs}}/2 \rfloor}$ calibration groups.
The two baseline approaches are implemented as in Section~\ref{sec:exp-synthetic-uniform}.
All methods are evaluated on a test set of 100 test groups sampled without replacement according to the model defined in \eqref{model:weighted_test_generation}, with uniform weights $w^*_{r,c} = 1$ for all $(r,c) \in [300] \times [300]$.
The $\alpha$ level is $10\%$. All results are averaged over 300 independent experiments.

\begin{figure}[!htb]
    \centering
    \includegraphics[width=0.8\linewidth]{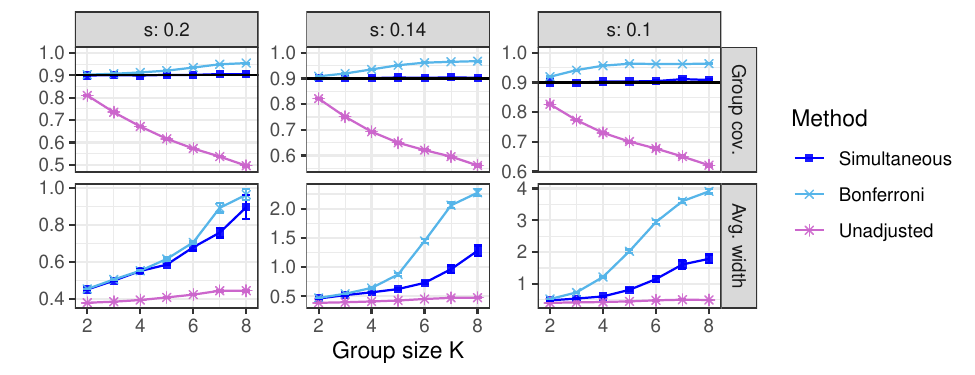}
    \caption{Performance of SCMC and two baselines on simulated data with a heterogeneous observation pattern whose strength is controlled by a scale parameter $s$. Smaller $s$ indicates stronger heterogeneity. Other details are as in Figure~\ref{fig:exp_uniform_vary_k}.}
    \label{fig:exp_biased_obs_vary_k}
\end{figure}

Figure~\ref{fig:exp_biased_obs_vary_k} summarizes the results as a function of group size $K$, for different heterogeneity levels controlled by $s$.
SCMC achieves valid simultaneous coverage and provides more informative confidence regions compared to the Bonferroni approach, which becomes overly conservative for smaller values of $s$.
This can be understood as follows.
The matrix completion model naturally performs better at estimating missing entries in more densely observed columns. Consequently, the heterogeneous sampling model introduces column-wise dependencies in the residual matrix $\hat{\mat{M}} - \mat{M}$.
These dependencies, which intensify as $s$ decreases, make the simultaneous inference task intrinsically challenging, resulting in wider confidence regions for all methods, but have a disproportionate adverse effect on the Bonferroni approach.
See Figure~\ref{fig:exp_biased_obs_vary_scale} in Appendix~\ref{app:numerical-results} for a different view of these results, highlighting the behavior of all methods as a function of $s$, for different values of $K$.

\subsection{Numerical Experiments with MovieLens Data} \label{sec:experiments-data}

We apply our method to the MovieLens 100K data \citep{movielens100k}, which comprises 100,000 numerical ratings given by 943 users for 1682 movies.
Approximately 94\% of the ratings are missing.
To reduce the memory requirements of the completion algorithm estimating $\hat{\mat{M}}$, we reduce the matrix size by half, focusing on a smaller rating matrix $\mat{M} \in \mathbb{R}^{800\times1000}$, corresponding to a random subset of 800 users and 1000 movies.

Since the true sampling weights $\mat{w}$ are unknown in this application, we compute estimated weights $\hat{\mat{w}}$ with a data-driven approach as described in Appendix~\ref{app:missingness-estimation}.
Algorithm~\ref{alg:simultaneous-confidence-region} is then applied with $\hat{\mat{w}}$ instead of $\mat{w}$, to construct confidence regions for 100 random test groups $\mat{X}^*$. We utilize  $n = \min \cbrac{1000, \lfloor \xi_{\text{obs}}/2 \rfloor}$ calibration groups and vary the group size $K$ as a control parameter.
The test groups are randomly sampled without replacement from $\cD_{\tt{miss}}$ according to the model defined in~\eqref{model:weighted_test_generation}, with uniform weights $\mat{w}^*$.
The alternating least square algorithm is trained as described in the previous sections.
The hypothesized rank of $\mat{M}$ utilized by this model to obtain $\hat{\mat{M}}$ is varied as an additional control parameter.

Figure~\ref{fig:movielens-preview}, previewed in Section~\ref{sec:intro-preview}, reports on the results of these experiments as a function of the group size $K$ and of the hypothesized rank utilized by the matrix completion model.
The confidence regions are assessed based on their average width alone, since it is impossible to measure the empirical coverage given that the ground truth is unknown.
The results show SCMC produces more informative (narrower) confidence regions compared to the Bonferroni approach, consistent with Section~\ref{sec:exp-synthetic}. Figure~\ref{fig:movielens-preview} compares only with the Bonferroni baseline because the unadjusted baseline is not intended to provide valid simultaneous coverage, making it less suitable for comparisons lacking a verifiable ground truth. Nevertheless, Figure~\ref{fig:movielens-fullmiss} in Appendix~\ref{app:additional-exp-movielens} includes a comparison with both baselines, demonstrating that our simultaneous confidence regions are not much wider than those produced by the unadjusted baseline. Further, our method's higher reliability compared to the unadjusted baseline is supported by the following additional experiments, conducted using the same data but under a more artificial setting in which the ground truth is known.

To evaluate the coverage on the MovieLens data, we carry out similar but more closely controlled experiments in which the test groups are drawn not from $\cD_{\tt{miss}}$ (for which the ground truth is unknown) but from a hold-out subset $\cD_{\tt{hout}}$ containing 20\% of the observed matrix indices in $\cD_{\tt{obs}}$.
Algorithm~\ref{alg:simultaneous-confidence-region} is then applied to construct confidence regions for the unobserved ratings of 100 random test groups $\mat{X}^*$ sampled from $\cD_{\tt{hout}}$, proceeding as described before but utilizing only the observed data in $\cD_{\tt{obs}} \setminus \cD_{\tt{hout}}$ instead of $\cD_{\tt{obs}}$.



Since the estimation of $\hat{\mat{w}}$ acknowledges the existence of unobserved entries in $\cD_{\tt{miss}}$, our method provides coverage for $\mat{X}^*$ sampled from $\cD_{\tt{miss}} \cup \cD_{\tt{hout}}$ instead of $\cD_{\tt{hout}}$.
Of course, we can only evaluate the empirical coverage for test groups sampled from $\cD_{\tt{hout}}$. This makes these experiments valuable for understanding the robustness of our inferences against potential distribution shifts between $\cD_{\tt{miss}} \cup \cD_{\tt{hout}}$ and $\cD_{\tt{hout}}$.

Figure~\ref{fig:movielens-masked} focuses on test groups sampled from $\cD_{\tt{hout}}$.
Our method leads to more informative inferences compared to the Bonferroni approach, and it nearly achieves 90\% simultaneous coverage for the test groups sampled from $\cD_{\tt{hout}}$, even though in theory it is valid on average over all test groups sampled from $\cD_{\tt{miss}} \cup \cD_{\tt{hout}}$.
This demonstrates the robustness of our method towards possible misspecification of the sampling weights.

\begin{figure}[!htb]
  \centering
  \includegraphics[width=0.8\linewidth]{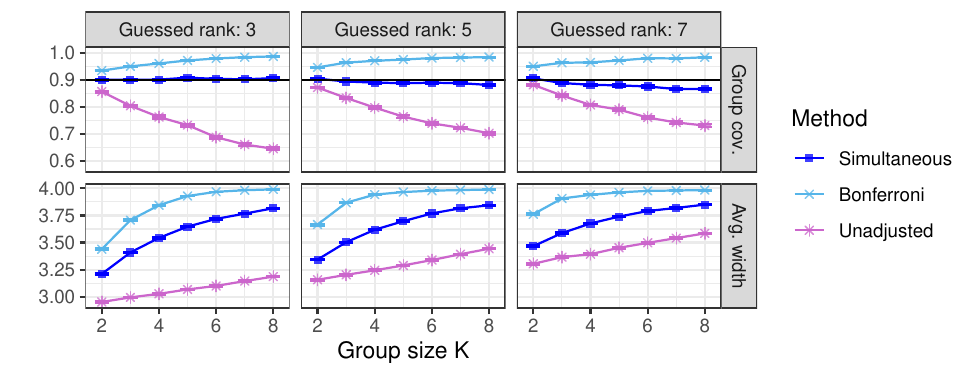}
  \caption{Performance of SCMC and two baselines on the MovieLens data set.
    These confidence regions are evaluated on a subset of hold-out observations, for which the ground truth is known, but they are calibrated to achieve valid simultaneous coverage on average over all latent matrix entries. Other details are as in Figure~\ref{fig:movielens-preview}.}
  \label{fig:movielens-masked}
\end{figure}

\subsection{Additional Numerical Experiments}\label{sec:additional-numerical-results}

Figures~\ref{fig:exp-solver-uniform}--\ref{fig:exp-solver-biased} in Appendix~\ref{app:exp-mc-algorithms} confirm that the relative performance of SCMC and benchmark methods remains consistent across different matrix completion models.
Figures~\ref{fig:exp-est-uniform}--\ref{fig:exp-est-biased} in Appendix~\ref{app:exp-estimated-missingness} show that SCMC maintains coverage close to the nominal level even when the sampling weights $\mat{w}$ are replaced by an empirical estimate $\hat{\mat{w}}$, demonstrating robustness to reasonable estimation errors.
Appendix~\ref{app:additional-experiments} further illustrates that SCMC can approximate conditional coverage when applied with appropriate test sampling weights $\mat{w}^*$ in synthetic settings, while Appendix~\ref{app:movielens-conditional} extends this conditional validation to the MovieLens dataset, using movie genre information to construct test weights.
Figure~\ref{fig:movielens-conditional} analyzes four genres (Children's, Crime, Drama, Romance), revealing that SCMC adaptively produces narrower confidence regions for more predictable genres (e.g., Children's) and wider regions for noisier, more subjective genres (e.g., Drama).

\section{Discussion} \label{sec:conclusion}

This paper introduces a method for simultaneous conformal inference in matrix completion.  
While motivated by group recommender systems, our approach is modular and extends beyond this initial application.  
In particular, our structured calibration strategy could be readily adapted for other tasks, such as bundle recommendations.  
More broadly, the leave-one-out exchangeability framework and related conformalization techniques may help extend weighted conformal inference beyond covariate shift in other directions \citep{tibshirani-covariate-shift-2019}.
Future work could refine our method by accounting for discrepancies between the sampling ($\mat{w}$) and test ($\mat{w}^*$) weights in Algorithm~\ref{alg:calibration-group}, moving beyond the current sampling without replacement scheme in Algorithm~\ref{alg:calibration-group} and potentially leading to even more informative confidence regions, albeit at the cost of breaking the ``leave-one-out exchangeability'' property that enables our computational shortcuts.
Determining whether such an extension can be implemented efficiently remains an open challenge.  
Additionally, it may be possible to enhance our approach by incorporating the jackknife+ procedure \citep{barber_cv+_2021} instead of the split conformal approach, which could be particularly beneficial in settings with very limited observations.  
We leave these extensions for future research.






\section*{Acknowledgements}

M.~S.~and T.~X.~were partly supported by NSF grant DMS 2210637.

\printbibliography
\newpage

\appendix
\renewcommand{\thesection}{A\arabic{section}}
\renewcommand{\theequation}{A\arabic{equation}}
\renewcommand{\thetheorem}{A\arabic{theorem}}
\renewcommand{\thecorollary}{A\arabic{corollary}}
\renewcommand{\theproposition}{A\arabic{proposition}}
\renewcommand{\thelemma}{A\arabic{lemma}}
\renewcommand{\thetable}{A\arabic{table}}
\renewcommand{\thefigure}{A\arabic{figure}}
\renewcommand{\thealgorithm}{A\arabic{algorithm}}

\section{Additional Technical Background} \label{app:baselines-background}

\subsection{Relation to Conformal Inference under Covariate Shift} \label{app:cov-shift}

This paper builds upon the {\em weighted conformal inference} framework of \citet{tibshirani-covariate-shift-2019}, originally designed to address non-exchangeability arising from {\em covariate shift}.  
In a typical regression or classification setting, where the data consist of pairs $(X,Y)$ representing a covariate vector $X$ and a label $Y$, covariate shift can occur when the calibration and test data distributions satisfy:
\begin{align}\label{eq:covariate-shift-setup}
\begin{split}
    (X_i, Y_i) & \overset{\text{i.i.d.}}{\sim} P_X \times P_{Y \mid X}, \qquad i = 1,\dots, n_{\text{obs}},\\
    (X^{*}, Y^*) & \overset{\text{ind.}}{\sim} \Tilde{P}_X \times P_{Y \mid X},
  \end{split}
\end{align}
where $(X_i, Y_i), i = 1,\dots, n_{\text{obs}}$, represent the observed covariates and labels in the calibration set, and $(X^*, Y^*)$ denotes the test point.  
Here, covariate shift arises when the distribution of $X$ at test time, $\Tilde{P}_X$, differs from the corresponding distribution $P_X$ in the calibration set, while the conditional distribution $P_{Y \mid X}$ remains unchanged.  
\citet{tibshirani-covariate-shift-2019} demonstrated that this is a special case of non-exchangeability that is relatively easy to address within a conformal inference framework.

To address covariate shift, \citet{tibshirani-covariate-shift-2019} introduced the notion of {\em weighted exchangeability}, a relaxation of full exchangeability that still holds under~\eqref{eq:covariate-shift-setup} even if $\Tilde{P}_X \neq P_X$.  
Their key theoretical result states that if the calibration and test data satisfy weighted exchangeability, valid conformal inferences can be obtained by appropriately reweighting the calibration points, adjusting for their distributional shift relative to $\Tilde{P}_X$.

However, the distribution shift encountered in our structured matrix completion setting is different from covariate shift.  
In our case, non-exchangeability arises due to sampling without replacement.  
Specifically, we assume that observed matrix indices are drawn without replacement from $[n_r]\times [n_c]$, followed by sequential sampling of structured calibration and test groups from the observed and missing entries, respectively.  
This results in two key differences from covariate shift:
(i) The structured calibration groups are mutually dependent, as they must all belong to the observed portion of the matrix, unlike the i.i.d. case in~\eqref{eq:covariate-shift-setup}.
(ii) The test group is also dependent on the calibration groups, as it is sampled exclusively from previously unobserved entries.

Consequently, our data do not satisfy the weighted exchangeability property considered by \citet{tibshirani-covariate-shift-2019}.  
Nonetheless, we show that key ideas from their work, originally embedded in the proof of their quantile inflation result, can be extended to our setting.  
Specifically, Lemma~\ref{lem:general_quant} generalizes their quantile inflation result, making it sufficiently flexible to address the challenges posed by our problem.  
Similar to \citet{tibshirani-covariate-shift-2019}, we show how to obtain valid conformal inferences for non-exchangeable data by assigning appropriate weights to the calibration points (in our case, the structured calibration groups), adjusting for their departure from the test-time distribution.  
However, an additional complication arises: while weighted exchangeability leads to simple weight formulations, the weights derived in Lemma~\ref{lem:general_quant} are generally computationally intractable.

Fortunately, we can show that the weights in Lemma~\ref{lem:general_quant} remain computationally feasible in our matrix completion setting due to a distinct partial symmetry property, which we term {\em leave-one-out exchangeability}.  
Although different from weighted exchangeability, this property similarly ensures that the weights remain tractable.  
This allows us to develop specialized techniques for efficient computation of these generalized weights.  

\subsection{Individual-Level Conformalized Matrix Completion }\label{app:cmc}

This section reviews the {\em conformalized matrix completion} method proposed by \citet{gui2023conformalized}, which is designed to produce confidence intervals for one missing entry at a time.

The setup of \citet{gui2023conformalized} is similar to ours as they also treat $\mat{M}$ as fixed and assume the randomness in the matrix completion problem comes from the observation process or, equivalently, the missingness mechanism.
However, their modeling choices do not match exactly with ours.
Specifically, they assume that each matrix entry in row $r$ and column $c$ is independently observed with some (known) probability $p_{r,c}$, which roughly corresponds to our sampling weights $w_{r,c}$ in~\eqref{model:sampling_wo_replacement}; i.e., $\mathcal{D}_{\mathrm{obs}} := \{(r,c) \in [n_r]\times[n_c]: Z_{r,c}=1\}$, where
\begin{align}\label{model:cmc-observed}
    Z_{r,c} = \I{(r,c) \text{ is observed}} \overset{\mathrm{ind.}}{\sim} \mathrm{Bernoulli}(p_{r,c}), \qquad \forall (r,c) \in [n_r]\times[n_c].
\end{align}
Therefore, the total number of observed entries is a random variable in \citet{gui2023conformalized}, whereas we can allow $n_{\mathrm{obs}}$ to be fixed within the sampling without replacement model defined in~\eqref{model:sampling_wo_replacement}.
As shown in this paper, our modeling choice is natural when aiming to construct group-level simultaneous inferences.
The model assumed by \citet{gui2023conformalized} also differs from ours in its requirement that all sampling weights must be strictly positive; $p_{r,c}>0$ for all $(r,c) \in [n_r] \times [n_c]$ in~\eqref{model:cmc-observed}.
Further, the approach of \citet{gui2023conformalized} differs from ours in that they assume the missing matrix index of interest, namely $\mathcal{I}^* \in [n_r]\times[n_c]$, to be sampled {\em uniformly} at random from $\mathcal{D}_{\mathrm{miss}}$, that is, $\mathcal{I}^* \sim \mathrm{Uniform}(\mathcal{D}_{\mathrm{miss}})$, where $\mathcal{D}_{\mathrm{miss}} := [n_r]\times[n_c] \setminus \mathcal{D}_{\mathrm{obs}}$. By contrast, our sampling model for the test groups, defined in~\eqref{model:weighted_test_generation}, can accommodate heterogeneous weights $\mat{w}^*$.

The method proposed by \citet{gui2023conformalized} constructs conformal confidence intervals for individual missing entries as follows.
First,  $\mathcal{D}_{\mathrm{obs}}= \cbrac{(r,c)\in [n_r]\times[n_c]: Z_{r,c}=1}$ is partitioned into a training set $\mathcal{D}_{\mathrm{train}}$ and a disjoint calibration set $\mathcal{D}_{\mathrm{cal}}$ by randomly sampling $\tilde{Z}_{r,c} \sim \mathrm{Bernoulli}(q)$ independently for all $(r,c) \in [n_r] \times [n_c]$, for some fixed parameter $q \in (0,1)$, and then defining
\begin{align}\label{model:cmc-split}
    \mathcal{D}_{\mathrm{train}} \coloneqq \cbrac{(r,c) \in \mathcal{D}_{\mathrm{obs}}: \tilde{Z}_{r,c}=1}, \quad \quad \mathcal{D}_{\mathrm{cal}} \coloneqq \cbrac{(r,c) \in \mathcal{D}_{\mathrm{obs}}: \tilde{Z}_{r,c}=0}.
\end{align}
Similar to us, \citet{gui2023conformalized} utilize  $\mathcal{D}_{\mathrm{train}}$ to compute $\hat{\mat{M}}$, leveraging any black-box algorithm, and then evaluate conformity scores on the calibration data as explained below.

Let $\mathcal{C}(\cdot, \tau, \hat{\mat{M}})$ denote a pre-specified prediction rule for a single matrix entry, which should be monotonically increasing in the tuning parameter $\tau \in [0,1]$ as explained in Section~\ref{sec:method-outline}; for example, this could correspond to the hyper-cubic prediction rule described in Section~\ref{sec:method-outline} in the special case of $K=1$.
For any $I \in [n_r] \times [n_c]$, let $S_i = S(I_i)$ denote the conformity score corresponding to $\mathcal{C}(\cdot, \tau, \hat{\mat{M}})$, as in~\eqref{eq:scores}.
Imagining that the calibration set contains the indices of $n$ matrix entries---$\mathcal{D}_{\mathrm{cal}}=\cbrac{\mathcal{I}_1, \dots, \mathcal{I}_{n}}$---the method of \citet{gui2023conformalized} evaluates $S_i = S(I_i)$ for all $i \in [n]$ and then calibrates the tuning parameter $\tau$ by computing
\begin{align}\label{eq:cmc-weighted-quantile}
\hat{\tau}^{\mathrm{indiv}}_{\alpha,1} = Q\paren*{1-\alpha; \sum_{i=1}^{n} p_i^{\mathrm{indiv}} \delta_{S_i} + p_{n+1}^{\mathrm{indiv}}\delta_{\infty}},
\end{align}
where the conformalization weights $p_i^{\mathrm{indiv}}$ are given by
\begin{align}\label{eq:cmc-weights}
    & p_i^{\mathrm{indiv}}(I_1, \dots, I_{n+1}) = \frac{R_{I_i}}{\sum_{j}^{n+1} R_{I_j}},
      \qquad  R_{I_i} =  \frac{1-p_{I_i}}{p_{I_i}},
\end{align}
for all $i \in [n+1]$, with the convention that $I_{n+1} = I^*$.
Finally, the confidence interval for the latent value of $\mat{M}$ at index $I^*$ is given by:
\begin{align}\label{eq:cmc-prediction-region}
    \hat{\mat{C}}^{\mathrm{indiv}}(\calI^*; \mat{M}_{\mathcal{D}_{\mathrm{obs}}}, \alpha) = \mathcal{C}(\calI^*, \hat{\tau}^{\mathrm{indiv}}_{\alpha,1}, \hat{\mat{M}}).
\end{align}
The following result establishes that the confidence intervals defined in~\eqref{eq:cmc-prediction-region} have guaranteed marginal coverage at level $1-\alpha$.

\begin{proposition}[from \citet{gui2023conformalized}] \label{prop:cmc-coverage}
    Suppose $\mathcal{D}_{\mathrm{obs}}$ is sampled according to~\eqref{model:cmc-observed} and $\calI^* \sim \mathrm{Uniform}(\mathcal{D}_{\mathrm{miss}})$.
    Then, for any $\alpha \in (0,1)$, $\hat{\mat{C}}^{\mathrm{indiv}}(\calI^*; \mat{M}_{\mathcal{D}_{\mathrm{obs}}},\alpha)$ in~\eqref{eq:cmc-prediction-region} satisfies
    \begin{align*}
        \P{ \mat{M}_{\mathcal{I}^{*}} \in \hat{\mat{C}}^{\mathrm{indiv}}(\calI^*; \mat{M}_{\mathcal{D}_{\mathrm{obs}}},\alpha) \mid \mathcal{D}_{\mathrm{train}}} \geq 1-\alpha.
    \end{align*}
  \end{proposition}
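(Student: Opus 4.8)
The plan is to reduce the statement to the generalized quantile inflation result of Lemma~\ref{lem:general_quant}, after checking that — conditional on $\mathcal{D}_{\mathrm{train}}$ — the augmented tuple consisting of the calibration indices and the test index $\calI^*$ carries exactly the ``weighted exchangeability'' structure whose conformalization weights in~\eqref{eq:general_prob} collapse to the $p_i^{\mathrm{indiv}}$ of~\eqref{eq:cmc-weights}. \textbf{Step 1 (Conditioning and bookkeeping).} Condition throughout on $\mathcal{D}_{\mathrm{train}}$, which fixes $\hat{\mat{M}}$ and hence the score map $I \mapsto S(I)$; let $\mathcal{R} := ([n_r]\times[n_c]) \setminus \mathcal{D}_{\mathrm{train}}$ be the set of non-training indices, a deterministic function of $\mathcal{D}_{\mathrm{train}}$. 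By~\eqref{model:cmc-observed} and~\eqref{model:cmc-split}, every index in $\mathcal{R}$ satisfies $Z_{r,c}\tilde{Z}_{r,c}=0$, and conditional on $\mathcal{D}_{\mathrm{train}}$ each $(r,c)\in\mathcal{R}$ independently lies in $\mathcal{D}_{\mathrm{cal}}$ (i.e.\ $Z_{r,c}=1,\tilde Z_{r,c}=0$) versus $\mathcal{D}_{\mathrm{miss}}$ (i.e.\ $Z_{r,c}=0$) with odds $\tfrac{p_{r,c}(1-q)}{1-p_{r,c}}$; in particular $\mathcal{R}=\mathcal{D}_{\mathrm{cal}}\sqcup\mathcal{D}_{\mathrm{miss}}$, so $\mathcal{D}_{\mathrm{miss}}$ is \emph{forced} once $\mathcal{D}_{\mathrm{cal}}$ is known.

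\textbf{Step 2 (Weighted exchangeability of the augmented tuple).} Further condition on the unordered augmented set $\mathcal{A}:=\mathcal{D}_{\mathrm{cal}}\cup\{\calI^*\}=\{z_1,\dots,z_{n+1}\}$, which also fixes $n=|\mathcal{D}_{\mathrm{cal}}|$. Combining the independent cal/miss assignment from Step~1 with $\calI^*\sim\mathrm{Uniform}(\mathcal{D}_{\mathrm{miss}})$ and using $\mathcal{D}_{\mathrm{miss}}=\mathcal{R}\setminus\mathcal{D}_{\mathrm{cal}}$, a short computation shows that $\P{\mathcal{D}_{\mathrm{cal}}=C,\calI^*=z\mid\mathcal{D}_{\mathrm{train}}}$ is, for $|C|=n$ and $z\in\mathcal{R}\setminus C$, proportional (with a constant depending only on $\mathcal{D}_{\mathrm{train}}$ and $n$) to $\prod_{c\in C}\tfrac{p_c}{1-p_c}$; hence, writing the calibration indices in a uniformly random order as $(Z_1,\dots,Z_n)$ and $Z_{n+1}=\calI^*$, their joint law $f$ given $(\mathcal{D}_{\mathrm{train}},\mathcal{A})$ satisfies $f(w_1,\dots,w_{n+1})\propto\prod_{i=1}^{n}\tfrac{p_{w_i}}{1-p_{w_i}}$ over orderings of $\mathcal{A}$. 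Plugging this $f$ into~\eqref{eq:general_prob} and cancelling the symmetric factor $\prod_{i=1}^{n+1}\tfrac{p_{z_i}}{1-p_{z_i}}$ in numerator and denominator yields $p^f_i=R_{Z_i}/\sum_{j=1}^{n+1}R_{Z_j}$ with $R_{z}=(1-p_z)/p_z$, which is exactly $p_i^{\mathrm{indiv}}$ in~\eqref{eq:cmc-weights}.

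\textbf{Step 3 (Quantile inflation and translation to coverage).} Set $V_i=S(Z_i)$; assuming these are almost surely distinct (enforceable by an infinitesimal randomized tie-break, which does not affect the conclusion), apply Lemma~\ref{lem:general_quant} with $\beta=1-\alpha$ to obtain $\P{S^*\le\hat\tau^{\mathrm{indiv}}_{\alpha,1}\mid\mathcal{D}_{\mathrm{train}},\mathcal{A}}\ge 1-\alpha$, where $S^*=S(\calI^*)$ and $\hat\tau^{\mathrm{indiv}}_{\alpha,1}$ is the weighted quantile in~\eqref{eq:cmc-weighted-quantile}. Since the prediction rule is monotone increasing in $\tau$ and $S^*$ is by definition the smallest $\tau$ with $\mat{M}_{\calI^*}\in\mathcal{C}(\calI^*,\tau,\hat{\mat{M}})$, the event $\{S^*\le\hat\tau^{\mathrm{indiv}}_{\alpha,1}\}$ coincides (up to the boundary) with $\{\mat{M}_{\calI^*}\in\hat{\mat{C}}^{\mathrm{indiv}}(\calI^*;\mat{M}_{\mathcal{D}_{\mathrm{obs}}},\alpha)\}$ from~\eqref{eq:cmc-prediction-region}; taking expectation over $\mathcal{A}$ given $\mathcal{D}_{\mathrm{train}}$ removes the conditioning on $\mathcal{A}$ and gives the claim.

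The main obstacle is Step~2: correctly pinning down the conditional law after the Bernoulli-thinning split \emph{and} the uniform draw of $\calI^*$ from the \emph{random} set $\mathcal{D}_{\mathrm{miss}}$ — in particular, noticing that $\mathcal{D}_{\mathrm{miss}}=\mathcal{R}\setminus\mathcal{D}_{\mathrm{cal}}$ makes the size-dependent normalizing constant irrelevant and makes the weights come out as $R_{r,c}=(1-p_{r,c})/p_{r,c}$ rather than $p_{r,c}/(1-p_{r,c})$. The remaining ingredients (the conditioning bookkeeping, the permutation-sum cancellation in~\eqref{eq:general_prob}, and the monotone-prediction-rule translation) are routine, and the distinctness hypothesis of Lemma~\ref{lem:general_quant} is a minor technicality handled by a standard tie-break; the assumption $p_{r,c}>0$ is exactly what is needed for $R_{r,c}$ to be well defined.
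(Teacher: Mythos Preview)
Your proof is correct and follows essentially the same route as the paper's: establish that, conditional on $\mathcal{D}_{\mathrm{train}}$ (and the calibration size $n$), the joint law of the calibration indices and $\calI^*$ has the right structure so that the conformalization weights reduce to $p_i^{\mathrm{indiv}}=R_{I_i}/\sum_j R_{I_j}$, then invoke the quantile inflation machinery. The only cosmetic difference is that the paper applies the specialized leave-one-out exchangeability Lemma~\ref{lem:partial_quant} (exhibiting the factorization $g\cdot\bar h$ with $\bar h(D_{\mathrm{cal}},I_{n+1})=R_{I_{n+1}}$), whereas you go straight to the general Lemma~\ref{lem:general_quant} and carry out the permutation-sum cancellation by hand---which is precisely what the proof of Lemma~\ref{lem:partial_quant} does once and for all.
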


  \begin{proof}
    This result follows directly from the proof of Theorem~3.2 in \citet{gui2023conformalized}. Alternatively, the following proof can be obtained by applying our Lemma~\ref{lem:partial_quant}.
    Conditioning on $n$, such that $\mathcal{D}_{\mathrm{cal}}=\cbrac{\mathcal{I}_1, \dots, \mathcal{I}_{n}}$, note that the joint distribution of $\mathcal{I}_1, \dots, \mathcal{I}_{n}, \calI^*$ trivially satisfies the leave-one-out exchangeability condition defined in~Lemma~\ref{lem:partial_quant}.
    Specifically, let $I_1, \dots,I_{n}, I_{n+1}$ be a realization of $\mathcal{I}_1, \dots, \mathcal{I}_{n}, \calI^*$, so that $D_{\mathrm{cal}} \coloneqq \cbrac{I_1, \dots, I_{n}}$ is a realization of $\mathcal{D}_{\mathrm{cal}}$ sampled according to~\eqref{model:cmc-split}. Then,
    \begin{align*}
      &\mathbb{P}\paren*{ \calI_1 = I_1, \dots, \calI_{n}=I_{n}, \calI^*=I_{n+1} \mid \mathcal{D}_{\mathrm{train}}, \abs*{\mathcal{D}_{\mathrm{obs}}}=n_{\mathrm{obs}}, \abs*{\mathcal{D}_{\mathrm{cal}}}=n}\\
      & \quad = \mathbb{P}\paren*{ \mathcal{D}_{\mathrm{cal}} =D_{\mathrm{cal}}, \calI^*=I_{n+1} \mid \mathcal{D}_{\mathrm{train}}, \abs*{\mathcal{D}_{\mathrm{obs}}}=n_{\mathrm{obs}}, \abs*{\mathcal{D}_{\mathrm{cal}}}=n}\\
      & \quad = g(D_{\mathrm{cal}}\cup \cbrac*{I_{n+1}} ) \cdot \prod_{(r,c) \in D_{\mathrm{cal}}} \frac{p_{r,c}}{1-p_{r,c}},
    \end{align*}
    where the second equality follows from Lemma~3.1 in \citet{gui2023conformalized}, for a suitable function $g$ that is invariant to any permutation of its input.
    Further, it follows that
    \begin{align*}
      &\mathbb{P}\paren*{ \calI_1 = I_1, \dots, \calI_{n}=I_{n}, \calI^*=I_{n+1} \mid \mathcal{D}_{\mathrm{train}}, \abs*{\mathcal{D}_{\mathrm{obs}}}=n_{\mathrm{obs}}, \abs*{\mathcal{D}_{\mathrm{cal}}}=n}\\
      &\quad = \brac*{g(D_{\mathrm{cal}}\cup \cbrac*{I_{n+1}}) \cdot \prod_{(r,c) \in D_{\mathrm{cal}} \cup \cbrac*{I_{n+1}}} \frac{p_{r,c}}{1-p_{r,c}}} \cdot \frac{1-p_{I_{n+1}}}{p_{I_{n+1}}}\\
      &\quad = \brac*{g(D_{\mathrm{cal}}\cup \cbrac*{I_{n+1}}) \cdot \prod_{(r,c) \in D_{\mathrm{cal}} \cup \cbrac*{I_{n+1}}} \frac{p_{r,c}}{1-p_{r,c}}} \cdot R_{I_{n+1}},
    \end{align*}
    with $R_{I_{n+1}}$ defined as in~\eqref{eq:cmc-weights}.
    This proves that $\mathcal{I}_1, \dots, \mathcal{I}_{n}, \calI^*$ are leave-one-out exchangeable random variables by the definition in Lemma~\ref{lem:partial_quant}, with $\Bar{h}(D_{\mathrm{cal}},I_{n+1} )= R_{I_{n+1}}$.
    Therefore, the coverage guarantee of~Proposition~\ref{prop:cmc-coverage} follows directly from~Lemma~\ref{lem:partial_quant}.
  \end{proof}

\subsection{Limitations of the Unadjusted and Bonferroni Baselines} \label{app:baselines-overview}

It is not easy to construct informative simultaneous confidence regions satisfying \eqref{eq:K-coverage} and, to the best of our knowledge, there are no satisfactory alternatives to the method proposed in this paper.
In fact, standard conformal methods are designed to deal with one test point at a time, and directly aggregating separate prediction intervals into a joint confidence region is neither precise nor efficient in our context, as explained in more detail below.

Recall that the conformalized matrix completion method of \citet{gui2023conformalized}, reviewed in Appendix~\ref{app:cmc}, is designed to construct a confidence interval $\hat{C}^{\mathrm{indiv}}(X^*_{1}; \mat{M}_{\mat{X}_{\mathrm{obs}}}, \alpha)$ for one missing entry at a time, denoted as $X^*_{1}$, such that
\begin{align} \label{eq:naive-coverage-events}
  \mathbb{P}[M_{X^*_{1}} \in \hat{C}^{\mathrm{indiv}}(X^*_{1}; \mat{M}_{\mat{X}_{\mathrm{obs}}}, \alpha) ] \geq 1-\alpha
\end{align}
under a suitable sampling model for $\mat{X}_{\mathrm{obs}}$ and $X^*_{1}$.
The model for $\mat{X}_{\mathrm{obs}}$ and $X^*_{1}$ considered by \citet{gui2023conformalized} is different from ours, as they treat $n_{\mathrm{obs}}$ as random, rely on independent Bernoulli observations instead of sampling without replacement, and do not consider the possibility that the sampling weights $\mat{w}^*$ in~\eqref{model:weighted_test_generation} may be non-uniform.
However, a similar idea can be adapted to construct confidence intervals $\hat{C}^{\mathrm{indiv}}(X^*_{1}; \mat{M}_{\mat{X}_{\mathrm{obs}}},\alpha)$ for a single matrix entry $X^*_{1}$ under our sampling model~\eqref{model:sampling_wo_replacement}--\eqref{model:weighted_test_generation}, as explained in  Appendix~\ref{app:bonferroni}.
In any case, regardless of these modeling details, the limitations of the baseline approaches within our simultaneous inference context can already be understood as follows.

If the goal is to make joint predictions for a group of $K$ matrix entries, concatenating individual-level predictions clearly does not guarantee {\em simultaneous} coverage in the sense of~\eqref{eq:K-coverage}, as the errors across different coordinates tend to accumulate.
This may be seen as an instance of the prototypical multiple testing problem.
The unadjusted baseline approach essentially computes:
\begin{align} \label{eq:invalid-approach}
  \hat{\mat{C}}^{\mathrm{Unadj}}(\mat{X}^*; \mat{M}_{\mat{X}_{\mathrm{obs}}},\alpha)
  := \left( \hat{C}^{\mathrm{indiv}}(X^*_{1}; \mat{M}_{\mat{X}_{\mathrm{obs}}}, \alpha), \ldots, \hat{C}^{\mathrm{indiv}}(X^*_{K}; \mat{M}_{\mat{X}_{\mathrm{obs}}},\alpha)\right).
\end{align}
As demonstrated by Figure~\ref{fig:movielens-masked} and other synthetic experiments in Section~\ref{sec:exp-synthetic}, this approach often leads to low simultaneous coverage.

Figure~\ref{fig:movielens-preview} previewed the performance of a second baseline approach that relies on a simple but inefficient {\em Bonferroni} correction to approximately ensure simultaneous coverage.
Intuitively, this tries to (conservatively) account for the multiplicity of the problem by applying~\eqref{eq:invalid-approach} at level $\alpha/K$ instead of $\alpha$, computing
\begin{align} \label{eq:Bonferroni-approach}
  \hat{\mat{C}}^{\mathrm{Bonf}}(\mat{X}^*; \mat{M}_{\mat{X}_{\mathrm{obs}}}, \alpha)
  := \left( \hat{C}^{\mathrm{indiv}} \left(X^*_{1}; \mat{M}_{\mat{X}_{\mathrm{obs}}},\frac{\alpha}{K} \right), \ldots, \hat{C}^{\mathrm{indiv}}\left(X^*_{K}; \mat{M}_{\mat{X}_{\mathrm{obs}}},\frac{\alpha}{K} \right) \right).
\end{align}
Although a Bonferroni correction may seem reasonable at first sight, it is still unsatisfactory for at least two reasons.
Firstly, it is not rigorous because we know the $K$ missing entries indexed by $\mat{X}^*$ must belong to the same column, but this constraint cannot be easily taken into account by individual-level predictions.
Secondly, and even more crucially, the Bonferroni correction tends to be overly conservative in practice because the coverage events $M_{X^*_{k}} \in \hat{C}^{\mathrm{indiv}}(X^*_{k}; \mat{M}_{\mat{X}_{\mathrm{obs}}},\alpha)$ for different values of $k \in [K]$ are mutually dependent, since they are all affected by the same observations $\mat{X}_{\mathrm{obs}}$.
These dependencies, however, are potentially very complex.

\subsection{Implementation Details for the Baselines}\label{app:bonferroni}

To facilitate the empirical comparison with our method, which relies on the sampling model for $\mat{X}_{\mathrm{obs}}$ and $\mat{X}^*$ defined~\eqref{model:sampling_wo_replacement}--\eqref{model:weighted_test_generation}, in this paper we apply the unadjusted and Bonferroni baseline approaches described in Appendix~\ref{app:baselines-overview} based on individual-level conformal prediction intervals $\hat{C}^{\mathrm{indiv}}$ obtained as follows.
Instead of directly applying the conformalized matrix completion method of \citet{gui2023conformalized}, we repeatedly apply our method separately for each element $X^*_k$ in $\mat{X}^* = (X^*_{1}, X^*_{2}, \ldots, X^*_{K})$, imagining each time that we are dealing with a trivial group of size 1.
This provides us with individual-level prediction intervals $\hat{C}^{\mathrm{indiv}}$ that are similar in spirit to those of \citet{gui2023conformalized} but whose construction more faithfully mirrors the sampling model assumed in this paper (although they still ignore the constraint that all elements of $\mat{X}^*$ must belong to the same column).
In summary, the implementation of the unadjusted and Bonferroni baseline approaches applied in this paper is outlined by Algorithms~\ref{alg:naive-confidence-region} and~\ref{alg:bonferroni-confidence-region}, respectively.

\begin{algorithm}[!htb]
\caption{Unadjusted confidence region for multiple missing matrix entries}
    \label{alg:naive-confidence-region}
    \begin{algorithmic}[1]
        \STATE \textbf{Input}: partially observed matrix $\mat{M}_{\mat{X}_{\mathrm{obs}}}$, with unordered list of observed indices $\cD_{\mathrm{obs}}$;
        \STATE \textcolor{white}{\textbf{Input}:} test group $\mat{X}^*$; nominal coverage level $\alpha \in (0,1)$;
        \STATE \textcolor{white}{\textbf{Input}:} any matrix completion algorithm producing point estimates;
        \STATE \textcolor{white}{\textbf{Input}:} any prediction rule $\mathcal{C}$ satisfying~\eqref{eq:pred-rule-monotone} and~\eqref{eq:pred-rule-boundary};
        \STATE \textcolor{white}{\textbf{Input}:} desired number $n$ of calibration entries.
        \STATE Apply Algorithm~\ref{alg:calibration-group} with group size $K=1$ to obtain $\cD_{\mathrm{train}}$, $\cD_{\mathrm{cal}}$;
        \STATE Compute a point estimate $\hat{\mat{M}}$, looking only the observations in $\cD_{\mathrm{train}}$.
        \STATE Compute the conformity scores $S_i$, for all $i \in [n]$, with Equation \eqref{eq:scores}.
        \FORALL{$k \in [K]$}
            \STATE Compute $\hat{\tau}_{\alpha,1}$ in \eqref{eq:weighted-quantile}, based on the weights $p_i$ given by \eqref{eq:gwpm-weights-k} with $\mat{x}_{n+1}=X^*_{k}$ in Section~\ref{sec:weights}.
            \STATE Compute $\hat{C}^{\mathrm{indiv}}(X^*_{k}; \mat{M}_{\mat{X}_{\mathrm{obs}}}, \alpha)$ as reviewed in Appendix~\ref{app:cmc}.
        \ENDFOR
        \STATE \textbf{Output}: Joint confidence region $\hat{\mat{C}}^{\mathrm{Unadj}}(\mat{X}^*; \mat{M}_{\mat{X}_{\mathrm{obs}}},\alpha)$ given by Equation \eqref{eq:invalid-approach}.
    \end{algorithmic}
\end{algorithm}

\begin{algorithm}[!htb]
\caption{Bonferroni-style confidence region for multiple missing matrix entries}
    \label{alg:bonferroni-confidence-region}
    \begin{algorithmic}[1]
        \STATE \textbf{Input}: partially observed matrix $\mat{M}_{\mat{X}_{\mathrm{obs}}}$, with unordered list of observed indices $\cD_{\mathrm{obs}}$;
        \STATE \textcolor{white}{\textbf{Input}:} test group $\mat{X}^*$; nominal coverage level $\alpha \in (0,1)$;
        \STATE \textcolor{white}{\textbf{Input}:} any matrix completion algorithm producing point estimates;
        \STATE \textcolor{white}{\textbf{Input}:} any prediction rule $\mathcal{C}$ satisfying~\eqref{eq:pred-rule-monotone} and~\eqref{eq:pred-rule-boundary};
        \STATE \textcolor{white}{\textbf{Input}:} desired number $n$ of calibration entries.
        \STATE Apply Algorithm~\ref{alg:calibration-group} with group size $K=1$ to obtain $\cD_{\mathrm{train}}$, $\cD_{\mathrm{cal}}$;
        \STATE Compute a point estimate $\hat{\mat{M}}$, looking only the observations in $\cD_{\mathrm{train}}$.
        \STATE Compute the conformity scores $S_i$, for all $i \in [n]$, with Equation \eqref{eq:scores}.
        \FORALL{$k \in [K]$}
            \STATE Compute $\hat{\tau}_{\frac{\alpha}{K},1}$ in \eqref{eq:weighted-quantile}, based on the weights $p_i$ given by \eqref{eq:gwpm-weights-k} with $\mat{x}_{n+1}=X^*_{k}$ in Section~\ref{sec:weights}.
            \STATE Compute $\hat{C}^{\mathrm{indiv}}(X^*_{k}; \mat{M}_{\mat{X}_{\mathrm{obs}}}, \frac{\alpha}{K})$ as reviewed in Appendix~\ref{app:cmc}.
        \ENDFOR
        \STATE \textbf{Output}: Joint confidence region $\hat{\mat{C}}^{\mathrm{Bonf}}(\mat{X}^*; \mat{M}_{\mat{X}_{\mathrm{obs}}},\alpha)$ given by Equation \eqref{eq:Bonferroni-approach}.
    \end{algorithmic}
\end{algorithm}


\subsection{Review of the Classical Laplace Method} \label{app:laplace-review}

This section provides a concise review of the classical version of Laplace's method, as detailed in \citet{butler2007saddlepoint}.
This method is a powerful tool for approximating analytically intractable integrals of the form $\int_{a}^{b} e^{nf(x)} h(x) dx$, where the function $f$ is sufficiently well-behaved and smooth, with a unique global maximum at an interior point $x_0 \in (a,b)$, the function $h$ is positive and does not vary significantly near $x_0$, and  $n$ is a relatively large constant.
The method hinges on the principle that this integral's value is predominantly determined by a small region around the point where $f$ achieves its maximum.
This idea is explained in more detail and motivated precisely below.

Let \(f(x)\) be a twice continuously differentiable function on an interval \((a, b)\), and assume there exists a unique global maximum at an interior point \(x_0 \in (a, b)\), such that \(f(x_0) = \max_{x \in (a, b)} f(x)\) and \(f''(x_0) < 0\).
Suppose $h$ is a function that varies slowly around $x_0$ and is such that \(h(x)>0\) for all $x \in (a,b)$.
Then, Laplace's approximation involves replacing the integral $\int h(x) e^{nf(x)} dx$ with
\begin{equation}
    \int_{a}^{b} e^{n f(x)} h(x) \,dx
    \approx e^{n f(x_0)} h(x_0) \sqrt{\frac{2\pi}{-n f''(x_0)}}.
\end{equation}
A standard mathematical justification for this approximation starts by proving that, under suitable technical assumptions on $f$ and $h$ in the spirit of the intuitive conditions outlined above,
\begin{equation} \label{eq:laplace-limit-classical}
\lim_{n \to \infty} \frac{\int_{a}^{b} e^{n f(x)} h(x) \,dx}{e^{n f(x_0)} h(x_0) \sqrt{\frac{2\pi}{n \left(-f''(x_0)\right)}}} = 1.
\end{equation}

The classical proof of~\eqref{eq:laplace-limit-classical} consists of three high-level steps:\vspace{-0.5em}
\begin{enumerate}\setlength\itemsep{0em}
    \item \textbf{Local second-order approximation:} Approximate \(f(x)\) near \(x_0\) using a second-order Taylor expansion: 
    $f(x) \approx f(x_0) + \frac{1}{2} f''(x_0) (x - x_0)^2$.

    \item \textbf{Integral transformation:} Standardize the quadratic term in the integral to apply results from Gaussian integral analysis.

    \item \textbf{Asymptotic evaluation:} Assess the integral in the standardized coordinates to achieve the asymptotic equivalence in~\eqref{eq:laplace-limit-classical}.
\end{enumerate}

The proof of Theorem \ref{thm:laplace-independent}, presented in Appendix~\ref{app:laplace-theory}, follows a similar high-level strategy, although its details are significantly more involved because our integral of interest in~\eqref{eq:prob-obs-approx-i} cannot be directly written as $\int_{a}^{b} e^{n f(x)} h(x) \,dx$ for some functions $f,g$.

\subsection{Consistency of the Generalized Laplace Approximation} \label{sec:laplace}

In this section, we provide a justification of~\eqref{eq:relation-obs}. For simplicity, but without much loss of generality, this theorem relies on some additional technical assumptions, which will be justified in our context towards the end of this section.



\begin{theorem} \label{thm:laplace-independent}
    Let $\{w_i\}_{i=1}^\infty$ be a sequence of i.i.d.~random variables drawn from a distribution $F$ with support on the open interval $(0,1)$.
    Consider a sequence of mutually independent Bernoulli random variables $\{x_i\}_{i=1}^\infty$, where each $x_i \overset{\text{ind.}}{\sim} \mathrm{Bernoulli}(w_i)$.
      Define $\delta_m = \sum_{i=1}^m (1-x_i)w_i$ and
    \begin{align}\label{def:phi-independent}
    \Phi_m(\tau) := h_m\delta_m \tau^{h_m\delta_m-1} \prod_{i=1}^m \left(1-\tau^{h_m w_i}\right)^{x_i}.
    \end{align}
    Above, $h_n$ is the unique root of the function
    \begin{align}  \label{eq:laplace-hn-root}
        z(h) \coloneqq \frac{\phi_m'\left(\frac{1}{2}\right)}{2h} = \delta_m - \frac{1}{h} - \sum_{i=1}^m \frac{x_i w_{i}}{2^{h w_{i}} - 1}
    \end{align}
    in the interval $[\delta_m, \infty)$, where $\phi_m(\tau)$ is the logarithm of $\Phi_m(\tau)$, namely
    \begin{align}
    \phi_m(\tau) \coloneqq \log \Phi_m(\tau) = \log(h_m) + \log(\delta_m) + (h_m\delta_m-1)\log(\tau) + \sum_{i=1}^m x_i \log\left(1-\tau^{h_m w_i}\right).
    \end{align}
    Then, $\argmax_{\tau \in [0,1]} \Phi_m(\tau) = \frac{1}{2}$.

    Further, consider a sequence of functions $\{f_m\}$, where each $f_m \in C^1(0, 1)$ and $f_m\left(\frac{1}{2}\right) > \epsilon_0$, for some constant $\epsilon_0 >0$ and all $m$. Suppose there exists some $M > 0$ such that $|f_m'(x)| \leq M$ for all $x \in (0,1)$ and for all $m$. Then, it holds that
    \begin{align}
        \int_0^1 f_m(\tau) \Phi_m(\tau) \, d\tau \sim f_m\left(\frac{1}{2}\right) \cdot \Phi_m\left(\frac{1}{2}\right) \sqrt{\frac{-2\pi}{\phi_m''\left(\frac{1}{2}\right)}} \text{ almost surely as } m \to \infty,
    \end{align}
    or equivalently,
    \begin{align}
        \lim_{m \to \infty} \frac{\int_0^1 f_m(\tau) \Phi_m(\tau) \, d\tau}{f_m\left(\frac{1}{2}\right) \cdot \Phi_m\left(\frac{1}{2}\right) \sqrt{\frac{-2\pi}{\phi_m''\left(\frac{1}{2}\right)}}} = 1 \text{ almost surely}.
    \end{align}
\end{theorem}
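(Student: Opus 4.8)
The statement has two halves; I would prove them in the order given.

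\textbf{First half ($\argmax_\tau \Phi_m(\tau)=1/2$).} Here the plan is short. Each summand $-x_iw_i/(2^{hw_i}-1)$ and the term $-1/h$ in $z$ from~\eqref{eq:laplace-hn-root} is strictly increasing in $h$, so $z$ is continuous and strictly increasing on $(0,\infty)$ with $z(0^+)=-\infty$ and $z(\infty)=\delta_m>0$ (a.s.\ for large $m$, since $\delta_m\to\infty$); hence $h_m$ is well defined as its unique root, and $z(h_m)=0$ is exactly $\phi_m'(1/2)=0$. Since $h_m>1-\delta_m$ once $\delta_m>1$, Lemma~\ref{lemma:phi-max} applies and says $\Phi_m$ has a unique stationary point on $(0,1)$, which is its global maximum; as $1/2$ is a stationary point, it is that point. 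For the asymptotic half I would follow the three-step skeleton of the classical Laplace argument recalled in Appendix~\ref{app:laplace-review}, but replace every deterministic ``$n\to\infty$'' statement by an almost-sure one driven by strong laws, because $\phi_m=\log\Phi_m$ is a \emph{sum} of $m$ bounded $(w_i,x_i)$-terms rather than $m$ times a fixed profile.

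\textbf{Step 1 (almost-sure asymptotics).} By the SLLN, $\delta_m/m\to d_0:=\mathbb E[w(1-w)]$ a.s. Let $\rho(h):=\mathbb E\!\big[w^2/(2^{hw}-1)\big]$, which is continuous and strictly decreasing from $+\infty$ to $0$, and let $h_\infty$ solve $\rho(h_\infty)=d_0$. Writing $z(h_m)=0$ as $\tfrac1m\sum_i x_iw_i/(2^{h_mw_i}-1)+\tfrac1{mh_m}=\delta_m/m$, a monotone Glivenko--Cantelli-type uniform SLLN over compact $h$-ranges gives $h_m\to h_\infty\in(0,\infty)$ a.s. Since $h_m$ is then a.s.\ bounded, the exponents $h_mw_i$ lie in a fixed compact set, and two further SLLN applications yield, a.s.: $\phi_m''(1/2)/m\to-\kappa_2$ for some $\kappa_2>0$ — strictly positive because $1/2$ is a strict interior maximum, so the deterministic limit cannot vanish — and $\sup_{|\tau-1/2|\le\kappa}|\phi_m'''(\tau)|\le Cm$ for all large $m$, for fixed $\kappa>0$, $C<\infty$. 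In particular $-\phi_m''(1/2)\asymp m$. The remaining steps run on the a.s.\ event where all this holds.

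\textbf{Steps 2--4 (localize, tail, main term).} Put $\epsilon_m:=m^{-2/5}$, so $m\epsilon_m^2\to\infty$, $\epsilon_m\sqrt{-\phi_m''(1/2)}\to\infty$, and $m\epsilon_m^3\to0$, and split $\int_0^1 f_m\Phi_m=\int_{I_m}+\int_{I_m^c}$ with $I_m=[\tfrac12-\epsilon_m,\tfrac12+\epsilon_m]$; the Lipschitz bound gives $\sup_{(0,1)}|f_m|\le f_m(1/2)+M/2$. For the tail: unimodality of $\Phi_m$ (Lemma~\ref{lemma:phi-max}) gives $\sup_{I_m^c}\Phi_m=\max\{\Phi_m(\tfrac12-\epsilon_m),\Phi_m(\tfrac12+\epsilon_m)\}$, and a third-order Taylor expansion of $\phi_m$ at $1/2$ (using $\phi_m'(1/2)=0$ and the $\phi_m'''$ bound) gives $\phi_m(\tfrac12\pm\epsilon_m)\le\phi_m(1/2)-\tfrac{\kappa_2}{4}m\epsilon_m^2$ for large $m$; hence $\int_{I_m^c}|f_m|\Phi_m\le(f_m(1/2)+M/2)\,\Phi_m(1/2)\,e^{-\kappa_2 m\epsilon_m^2/4}$, and dividing by the target $T_m:=f_m(1/2)\Phi_m(1/2)\sqrt{-2\pi/\phi_m''(1/2)}\asymp f_m(1/2)\Phi_m(1/2)m^{-1/2}$ and using $f_m(1/2)\ge\epsilon_0$, the ratio is $\lesssim m^{1/2}e^{-\kappa_2 m\epsilon_m^2/4}\to0$. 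For the main term, on $I_m$ write $\phi_m(\tau)=\phi_m(1/2)+\tfrac12\phi_m''(1/2)(\tau-\tfrac12)^2+R_m(\tau)$ with $|R_m|\le\tfrac{C}{6}m\epsilon_m^3\to0$ and $f_m(\tau)=f_m(1/2)+\rho_m(\tau)$ with $|\rho_m|\le M\epsilon_m\to0$; using $e^{R_m}=1+o(1)$ and $\rho_m/f_m(1/2)=o(1)$ uniformly on $I_m$,
\[
\int_{I_m}f_m\Phi_m=f_m(1/2)\,\Phi_m(1/2)\,(1+o(1))\int_{I_m}e^{\frac12\phi_m''(1/2)(\tau-\frac12)^2}\,d\tau .
\]
The substitution $u=\sqrt{-\phi_m''(1/2)}\,(\tau-\tfrac12)$ turns the last integral into $\tfrac{1}{\sqrt{-\phi_m''(1/2)}}\int_{-a_m}^{a_m}e^{-u^2/2}\,du$ with $a_m=\epsilon_m\sqrt{-\phi_m''(1/2)}\to\infty$, so it equals $\sqrt{2\pi}(1+o(1))/\sqrt{-\phi_m''(1/2)}$; thus $\int_{I_m}f_m\Phi_m=T_m(1+o(1))$, and combining with the tail bound gives $\int_0^1 f_m\Phi_m/T_m\to1$ a.s.

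\textbf{Main obstacle.} The crux is Step 1: $h_m$ is defined only implicitly, through a random self-consistent equation whose coefficients are themselves sums of the $x_i,w_i$, so proving $h_m\to h_\infty$ a.s.\ and then the linear-in-$m$ scaling of $-\phi_m''(1/2)$ together with the uniform-in-$\tau$ bound on $\phi_m'''$ near $1/2$ requires a uniform strong law over a compact range of $h$ and $\tau$ — exactly what the additional technical assumptions and the longer argument in Appendix~\ref{app:laplace-theory} are there to supply. A smaller but genuine point inside Step 1 is upgrading $\phi_m''(1/2)\le 0$ (automatic at a maximum) to a strictly negative, $\Theta(m)$ quantity. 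Everything downstream is routine Laplace bookkeeping, needing only the rates $m\epsilon_m^2\to\infty$ and $m\epsilon_m^3\to0$ that the choice $\epsilon_m=m^{-2/5}$ provides.
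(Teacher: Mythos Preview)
Your outline is correct and follows the classical localized Laplace argument; the paper's proof takes a noticeably different route. Two substantive differences are worth noting.

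First, on controlling $h_m$: the paper does \emph{not} prove $h_m\to h_\infty$. Instead it proves a dedicated lemma giving explicit finite-sample inequalities $\frac{h_m\delta_m-1}{m}\ge J>0$ and $h_m$ eventually in a fixed compact interval $[J/L_2,\,1/(\log 2\,L_2)]$, obtained by elementary manipulations of $z(h_m)=0$ (using $2^x-1>x\log 2$ and monotonicity of $x/(2^x-1)$) together with SLLN only for $s_m/m$ and $\delta_m/m$. Your uniform-SLLN route to $h_m\to h_\infty$ is valid but is more than is needed; the paper's compactness-only argument is lighter and already yields the key growth $K_m:=\sqrt{-\phi_m''(1/2)}\ge 2\sqrt{Jm}$ directly from the first summand $-\phi_m''(1/2)\ge 4(h_m\delta_m-1)$. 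This also repairs the one soft spot in your sketch: your justification ``$\kappa_2>0$ because $1/2$ is a strict interior maximum'' is heuristic; the clean argument is exactly this first-term lower bound, which with $h_m\delta_m/m\to h_\infty d_0>0$ (or the paper's $\ge J$) gives strict positivity.

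Second, on the integral itself: you localize to $I_m=[\tfrac12-\epsilon_m,\tfrac12+\epsilon_m]$ with $\epsilon_m=m^{-2/5}$, kill the tail by unimodality, and control only a third-order remainder via $m\epsilon_m^3\to 0$. The paper instead keeps the full transformed range $[-K_m/2,K_m/2]$, writes the Lagrange remainder as the full power series $y_m(u)=\sum_{j\ge 3}\phi_m^{(j)}(1/2)u^j/(j!K_m^j)$, and shows $\mathbb E[e^{y_m(Z)}-1]\to 0$ for $Z\sim\mathcal N(0,1)$ by bounding $|\phi_m^{(j)}(1/2)|/m$ for \emph{every} $j$ (via SLLN plus $h_m$ lying in a compact set). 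Your approach is more economical---one derivative bound instead of all of them, and no Gaussian-moment computation---at the cost of needing the uniform-in-$\tau$ bound on $\phi_m'''$ near $1/2$, which you correctly flag as the technical crux; the paper sidesteps that uniformity by evaluating all derivatives only at $\tau=1/2$.
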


The proof of Theorem~\ref{thm:laplace-independent} appears in Appendix~\ref{app:laplace-theory}.
To relate this result to the Laplace approximation described in Section~\ref{sec:implementation-weights}, compare $\Phi(\tau; h)$ in~\eqref{eq:integral-def-phi} with  $\Phi_m(\tau)$ in~\eqref{def:phi-independent}.
Given a mapping $\sigma:[n_r n_c] \mapsto [n_r] \times [n_c]$, we can express $\Phi(\tau; h)$ as
$$
\Phi(\tau; h) = h\delta \tau^{h\delta-1} \prod_{(r,c) \in D_{\mathrm{obs}}} \paren*{1-\tau^{h w_{r,c}}} = h\delta \tau^{h\delta-1} \prod_{i=1}^{n_r n_c}  \paren*{1-\tau^{h w_{\sigma(i)}}}^{\mathbbm{1}\{\sigma(i) \in D_{\mathrm{obs}}\}}.
$$
Thus, the discrepancy between $\Phi(\tau; h)$ and $\Phi_{n_r n_c}(\tau)$ can be traced to the different sampling models for the matrix observations and the $x_i$ variables in Theorem~\ref{thm:laplace-independent}.
In Theorem~\ref{thm:laplace-independent}, the observations follow independent Bernoulli distributions, whereas in~\eqref{model:weighted_test_generation} they are sampled without replacement.
These views can be reconciled as follows.
Sampling without replacement is natural for the problem studied in this paper, but it would make the proof of Theorem \ref{thm:laplace-independent} too complicated.
Nevertheless, these two models are qualitatively consistent.
For example, if the weights $\mat{w}$ in~\eqref{model:sampling_wo_replacement} are constant, our model corresponds to that of Theorem~\ref{thm:laplace-independent} with $v_i \equiv v$, for some $v\in (0,1)$, after conditioning on $n_{\text{obs}}$.

\section{Additional Methodological Details} \label{app:implementation}



\subsection{Corner Cases in Sampling the Test Group}
\label{app:corner-cases}
When defining model~\eqref{model:weighted_test_generation}, we assume all columns have much more than $K$ missing entries. This assumption typically holds for sparsely observed matrices. However, for completeness, we address the corner cases where some columns contain fewer than $K$ missing entries.

To ensure the sampling model for $\mat{X}^*$ is well-defined, we need to exclude columns with fewer than $K$ missing entries. We define a {\em pruned} set of missing indices $\bar{\cD}_{\mathrm{miss}} \subseteq \cD_{\mathrm{miss}}$ as
$\bar{\cD}_{\mathrm{miss}} = \cbrac{(r,c) \in \cD_{\mathrm{miss}} : n^c_{\mathrm{miss}} \geq K}$,
where $n^c_{\mathrm{miss}} = \abs{\cbrac{(r',c') \in \cD_{\mathrm{miss}} :  c' = c}}$ is the number of missing entries in column $c$.
Then, we assume $\mat{X}^*$ is sampled as
\begin{align}\label{model:corner-weighted_test_generation}
\begin{split}
\mat{X}^* \mid \cD_{\mathrm{obs}}, \cD_{\mathrm{miss}} & \sim \Psi^{\text{col}}(K, \bar{\cD}_{\mathrm{miss}}, \mat{w}^*),
\end{split}
\end{align}
where $\Psi^{\text{col}}$ is a constrained version of $\Psi$ that samples indices belonging to the same column.
This distribution is equivalent to the following sequential sampling procedure:
\begin{align}\label{model:corner-weighted_test_generation_sequential}
  \begin{split}
    X^*_{1} \mid \cD_{\mathrm{obs}}, \cD_{\mathrm{miss}} & \sim \Psi(1, \bar{\cD}_{\mathrm{miss}}, \mat{w}^*) \\
    X^*_{2} \mid \cD_{\mathrm{obs}}, \cD_{\mathrm{miss}}, X^*_{1} & \sim \Psi(1, \bar{\cD}_{\mathrm{miss}} \setminus \{X^*_{1}\}, \tilde{\mat{w}}^*), \\
    & \; \; \vdots \\
    X^*_{K} \mid \cD_{\mathrm{obs}}, \cD_{\mathrm{miss}}, X^*_{1}, \ldots, X^*_{K-1} & \sim \Psi(1, \bar{\cD}_{\mathrm{miss}} \setminus \{X^*_{1}, \ldots, X^*_{K-1}\}, \tilde{\mat{w}}^*),
  \end{split}
\end{align}
where the weights $\tilde{\mat{w}}^*$ are given by $\tilde{w}^*_{r,c} = w^*_{r,c} \I{ c = X^*_{1,2}}$ and $X^*_{1,2}$ is the column of $X^*_{1}$.

The sequential sampling procedure in~\eqref{model:weighted_test_generation_sequential} requires at least one column with more than $K$ unobserved entries. This is always the case as long as $n_{\mathrm{obs}} < n_c (n_r - K + 1)$, which is an extremely mild assumption in real applications.

When $\abs{\cbrac{(r',c') \in \cD_{\mathrm{miss}}\mid c'=c}} \gg K$ for all $c \in [n_c]$, $\bar{\cD}_{\mathrm{miss}}=\cD_{\mathrm{miss}}$, hence model~\eqref{model:weighted_test_generation} becomes a simplified special case of the more comprehensive model~\eqref{model:corner-weighted_test_generation}, which is designed to handle corner cases. This simplification allows us to avoid additional technical notation and facilitates a clearer presentation in the main section. Nevertheless, we will apply the general model~\eqref{model:corner-weighted_test_generation} in all relevant theoretical results to ensure robustness and completeness.

\subsection{Practical Computation of the Conformity Scores} \label{sec:scores}

As detailed in~Section~\ref{sec:method-outline}, our method allows flexibility in the choice of the prediction rule $\mathcal{C}$, which uniquely determines the conformity scores. In this section, we explore three practical options for the prediction rules and their respective conformity scores.

\subsubsection{Hyper-Cubic Confidence Regions}

An intuitive prediction rule, introduced in Section~\ref{sec:method-outline}, is:
$$\mathcal{C}(\mat{x}^*, \tau, \hat{\mat{M}}) =  \left( \hat{\mat{M}}_{x^*_1} \pm \frac{\tau}{1-\tau}, \ldots, \hat{\mat{M}}_{x^*_K} \pm \frac{\tau}{1-\tau} \right),$$
with the parameter $\tau$ taking value in $[0,1]$. 
Note that this rule leads to hyper-cubic confidence regions, with constant widths for all users in a group. 

The conformity scores corresponding to this rule can be written explicitly, for any $i \in [n]$, as:
\begin{align*}
  S_i 
  & = \max\left\{
    \frac{\abs{\hat{\mat{M}}_{X^{\mathrm{cal}}_{i,1}}-\mat{M}_{X^{\mathrm{cal}}_{i,1}}}}{1+\abs{\hat{\mat{M}}_{X^{\mathrm{cal}}_{i,1}}-\mat{M}_{X^{\mathrm{cal}}_{i,1}}}} ,\ldots, \frac{\abs{\hat{\mat{M}}_{X^{\mathrm{cal}}_{i,K}}-\mat{M}_{X^{\mathrm{cal}}_{i,K}}}}{1+\abs{\hat{\mat{M}}_{X^{\mathrm{cal}}_{i,K}}-\mat{M}_{X^{\mathrm{cal}}_{i,K}}}} \right\}.
\end{align*}

\textbf{Remark.} The function $x\mapsto x/(1+x)$ is an strictly increasing function on $x \geq 0$. Therefore, we can equivalently define the prediction set as the following. Let
\begin{align} \label{eq:score_usual_def}
    \Tilde{S}_i=\max \cbrac*{
\abs{\hat{\mat{M}}_{X^{\mathrm{cal}}_{i,1}}-\mat{M}_{X^{\mathrm{cal}}_{i,1}}}, \ldots, \abs{\hat{\mat{M}}_{X^{\mathrm{cal}}_{i,K}}-\mat{M}_{X^{\mathrm{cal}}_{i,K}}} }
\end{align}
and define the alternate confidence set as
$$\label{eq:prediction-rule-usual-def}
\mathcal{C}'(\mat{x}^*, \tau, \hat{\mat{M}}) =  \paren*{\hat{\mat{M}}_{x^*_1} \pm \tau, \ldots, \hat{\mat{M}}_{x^*_K} \pm \tau},
$$
with $\tau$ taking value in $[0,+\infty)$.
The expression in~\eqref{eq:score_usual_def} is more closely related to the typical notation in the conformal inference literature; e.g., see \citet{lei2018distribution}.

\subsubsection{Hyper-Rectangular Confidence Regions}

An alternative type of prediction rule, yielding intervals of varying lengths for different users, involves scaling the hyper-cube defined in~\eqref{eq:prediction-rule-usual-def}. 
This modification may be particularly useful in applications involving count data with wide ranges, where the variance may be expected to increase in proportion to the observed values. 
We define this linearly-scaled prediction rule as
\begin{align}\label{eq:prediction-rule-rectangle}
\mathcal{C}(\mat{x}^*, \tau, \hat{\mat{M}}) =  \paren*{\hat{\mat{M}}_{x^*_1}\pm \abs{\hat{\mat{M}}_{x^{*}_{1}}} \tau, \ldots, \hat{\mat{M}}_{x^*_K}\pm \abs{\hat{\mat{M}}_{x^{*}_{K}}} \tau},
\end{align}
which leads to confidence regions in the shape of a hyper-rectangle. The corresponding scores are:
\begin{align*}
  S_i
  &=\max\left\{  \frac{\abs{\hat{\mat{M}}_{X^{\mathrm{cal}}_{i,1}}-\mat{M}_{X^{\mathrm{cal}}_{i,1}}}}{\abs{\hat{\mat{M}}_{X^{\mathrm{cal}}_{i,1}}}}, \ldots,  \frac{\abs{\hat{\mat{M}}_{X^{\mathrm{cal}}_{i,K}}-\mat{M}_{X^{\mathrm{cal}}_{i,K}}}}{\abs{\hat{\mat{M}}_{X^{\mathrm{cal}}_{i,K}}}} \right\}.
\end{align*}

\subsubsection{Hyper-Spherical Confidence Regions}

The prediction rules described above all result in confidence regions with a hyper-rectangular shape. Alternatively, one can construct a confidence region with a hyper-spherical shape using the following prediction rule, where $\|\cdot \|_2$ represents the Euclidean norm:
\begin{align}\label{eq:prediction-rule-sphere}
    \mathcal{C}(\mat{x}^*, \tau, \hat{\mat{M}}) =  \lcbrac{\mat{x} \in \R^K : \| \mat{x}-\hat{\mat{M}}_{\mat{X}^{\mathrm{cal}}_i} \|_2 \leq \tau }.
\end{align}
The corresponding conformity scores are
\begin{align*}
    S_i\coloneqq \inf \cbrac*{\tau \in \mathbb{R}: \| \mat{M}_{\mat{X}^{\mathrm{cal}}_i}-\hat{\mat{M}}_{\mat{X}^{\mathrm{cal}}_i} \|_2 \leq \tau } = \| \mat{M}_{\mat{X}^{\mathrm{cal}}_i}-\hat{\mat{M}}_{\mat{X}^{\mathrm{cal}}_i} \|_2.
\end{align*}
Note that replacing the Euclidean norm with the max norm in \eqref{eq:prediction-rule-sphere} recovers the hyper-cubic prediction rule.

The concept of a hyper-spherical confidence region is quite rare in the conformal inference literature, where the majority of existing methods focus on constructing confidence sets for a single test point individually. However, when aiming to provide simultaneous coverage for multiple entries, it becomes possible to develop confidence regions of varying geometric shapes.

\subsection{Efficient Evaluation of the Conformalization Weights} \label{app:scaling-parameter}

We discuss in more detail here the choice of scaling parameter $h$ for the function $\Phi(\tau; h)$ defined in Equation \eqref{eq:integral-def-phi} (Section~\ref{sec:implementation-weights}).
This free parameter controls the location of $\tau_h = \argmax_{\tau \in (0,1)} \Phi(\tau;h)$.
Since our Laplace approximation hinges on $\tau_h$ being not too close to the integration boundary, an intuitive and effective choice is to set $h$ so that $\tau_h = 1/2$; e.g., see \citet{fog2007wnchypg}.
We explain below how to achieve this using the Newton-Raphson algorithm.

Recall Lemma~\ref{lemma:phi-max}, which tells us that the function \(\Phi(\tau; h)\) has a unique stationary point with respect to $\tau$ at some value $\tau_h \in (0,1)$, and that this stationary point is a global maximum.
Therefore, since the function \(\Phi(\tau; h)\) is smooth, the Newton-Raphson algorithm can be applied as follows to find a value of $h>0$ such that $\tau_h = 1/2$.
Define 
\begin{align*}
    z(h) := \frac{\phi'(\frac{1}{2}; h)}{2h}=\delta-\frac{1}{h}-\sum_{(r,c) \in D_{\mathrm{obs}}}\frac{ w_{r,c}}{2^{h w_{r,c}}-1},
\end{align*}
and note that this function is monotonically increasing in $h$.
Then it suffices to find $h$ such that $z(h) =0$. 
Note that $z(h)$ is smooth, and $z'(h)>0$, $z''(h)<0$  for $h \in [1/\delta, \infty)$. 
It is also clear that the solution of $z(h) = 0$ must be greater than $1/\delta$.
Further, $z(h)$ has a unique root in the interval $[1/\delta, \infty)$ because $z(1/\delta) <0$ and $\lim_{h\rightarrow \infty}z(h)=\delta >0$.

Thus, it follows from Theorem 2.2 in \citet{atkinson1989numerical} that the Newton-Raphson algorithm will converge to the root $\tau_h$ quadratically, for any starting point within the interval $[1/\delta, \infty)$. In practice, one can choose $1/\delta$ as the starting point.

The time complexity of the Newton-Raphson iteration depends on the desired precision level. If the tolerable error is a predetermined small constant, the iteration terminates after a constant number of updates due to quadratic convergence. Evaluating $z(h)$ and $z'(h)$ at any given $h$ requires $\mathcal{O}(n_{\mathrm{obs}})$. Hence, solving $z(h)=0$ takes $\mathcal{O}(n_{\mathrm{obs}})$.

\subsection{Computational Shortcuts and Complexity Analysis}\label{app:complexity}

The SCMC method described in this paper can be implemented efficiently and is able to handle completion tasks involving large matrices.
Its nice scaling properties are demonstrated in this section, which summarizes the results of an analysis of the computational complexity of different components of Algorithm~\ref{alg:simultaneous-confidence-region}.
We refer to the subsections below for the details behind this analysis and an explanation of the underlying computational shortcuts with which all redundant operations are streamlined.

In summary, the cost of producing a joint confidence region for a test group $\mat{X}^*$ of size $K$ using Algorithm~\ref{alg:simultaneous-confidence-region} is $\mathcal{O}(T+n_r n_c +n(K +\log n))$, where $T$ denotes the fixed cost of training the black-box matrix completion model based on $\mat{M}_{\mat{X}_{\mathrm{obs}}}$ and $n$ is the number of calibration groups. 
Further, it is possible to recycle redundant calculations when constructing simultaneous confidence regions for $m$ distinct test groups $\mat{X}^*$, as explained in Appendix~\ref{app:complexity}. Therefore, the overall cost of obtaining $m$ distinct confidence regions for $m$ different groups of size $K$ is only $\mathcal{O}(T+n_r n_c + n(mK + \log n))$.
See Table~\ref{tab:comp-complexity} for a summary of these results.

\begin{table}[!htb]
\caption{Computational efficiency of different components of SCMC. } \label{tab:comp-complexity}
\centering
\begin{tabular}{lll}
\toprule
Module & Cost for one test group & Cost for $m$ test groups \\
\midrule
Overall (Algorithm~\ref{alg:simultaneous-confidence-region}) & $\mathcal{O}(T+n_r n_c +n(K +\log n))$ & $\mathcal{O}(T+n_r n_c + n(mK + \log n))$ \\
Algorithm~\ref{alg:calibration-group} & $\mathcal{O}(n_c n_r+nK)$ & $\mathcal{O}(n_c n_r+nK)$ \\
\bottomrule
\end{tabular}
\end{table}

Notably, the cost $\mathcal{O}(T)$ of fitting the matrix completion model depends on the specific algorithm used, as our methodology is model-agnostic.  
However, for reference, we provide some concrete examples of $\mathcal{O}(T)$ for commonly used approaches below.  
For example, defining $n_{\max}=\max\{n_r, n_c\}$ and $r$ as the hypothesized rank of the rating matrix, the computational complexity of common low-rank matrix completion methods is as follows:
\begin{itemize}
    \item Convex nuclear norm minimization \citep{candes_exact_2008}: $\mathcal{O}(n_{\max}^3)$.
    \item Singular Value Thresholding (SVT) \citep{svt_cai_2010}: $\mathcal{O}(rn_{\max}^2)$.
    \item Alternating Least Squares (ALS) \citep{hu_cf_2008}: $\mathcal{O}(rn_{\max}^2)$.
\end{itemize}
Thus, in practice, the computational bottleneck is typically the matrix completion algorithm itself rather than our inference framework.

\subsubsection{Evaluation of the Conformalization Weights}\label{app:approx-computation}

Evaluating the simplified weights $\bar{p}_i$ in~\eqref{eq:gwpm-weights-k-fast}  only involves arithmetic operations and can be carried out for all $i \in [n+1]$ at a total computational cost roughly of order $\mathcal{O}(n_r n_c+nK)$.
To understand this, first note that computing $\eta_i(\tau_h)$ defined in \eqref{eq:def-integral-tau}, for all $i \in [n+1]$ with any given $\tau_h$ and $h$, has cost $\mathcal{O}(nK)$; and finding the optimal values of $\tau_h$ and $h$ has cost $\mathcal{O}(n_{\mathrm{obs}})$, or equivalently no worse than $\mathcal{O}(n_r n_c)$, as explained in Section~\ref{app:scaling-parameter}.

Next, evaluating $\tilde{w}^*_{x_{i,1}} \cdot \ldots \cdot \tilde{w}^*_{x_{i,K}}$ for all $i \in [n+1]$ has cost $\mathcal{O}(n_r n_c+nK)$, because the constant $\sum_{(r,c)\in D_{\mathrm{miss}}}w^{*}_{r,c}$ in the denominators of~\eqref{eq:w-weights-i1} and~\eqref{eq:w-weights-ik} can be pre-computed at cost $\mathcal{O}(n_{r}n_c)$, while the remaining terms in \eqref{eq:w-weights-i1} and~\eqref{eq:w-weights-ik} can be evaluated at cost $\mathcal{O}(K)$ separately for each $i \in [n+1]$.

The cost of evaluating the term within the square brackets in~\eqref{eq:gwpm-weights-k-fast} for all $i \in [n+1]$ is $\mathcal{O}(n_{r} + nK)$. This is achieved by pre-computing factorials up to $n_r$ since $n^{c}_{\mathrm{obs}}$ is upper-bounded by $n_r$ for any $c \in [n_c]$. Then for each $i \in [n+1]$, computing binomial coefficients, given the pre-computed factorials, takes constant time, and the remaining term in the brackets requires $\mathcal{O}(K)$. 
Putting everything together, the conformalization weights in~\eqref{eq:gwpm-weights-k-fast} has cost $\mathcal{O}(n_{r} + nK)$ for all $i \in [n+1]$.

\subsubsection{Computational Cost Analysis of Algorithm~\ref{alg:simultaneous-confidence-region}}
\textbf{Analysis for a single test group.}
The cost of computing a confidence region for a single test group is $\mathcal{O}(T+n_r n_c +n(K +\log n))$, as shown below.
\begin{itemize} \setlength\itemsep{0em}
    \item Training the black-box matrix completion model has cost $\mathcal{O}(T)$.
    \item Post training, the cost of computing scores $S_i$ for all $i \in [n]$ is $\mathcal{O}(nK)$.
    \item The cost of computing $p_i$ for all $i \in [n+1]$ is $\mathcal{O}(n_r n_c+nK)$; see Section~\ref{app:approx-computation}.
    \item After the conformalization weights are computed, the cost of computing $\hat{\tau}_{\alpha,K}$ is $\mathcal{O}(n\log n)$. This is because sorting the scores $S_i$ for all $i \in [n]$ has a worst-time cost of $\mathcal{O}(n\log n)$, while it takes $\mathcal{O}(n)$ to find the weighted quantile based on  $p_i$ and the sorted scores.
\end{itemize}
Therefore, the overall cost is $\mathcal{O}(T+n_r n_c +n(K +\log n))$.

\textbf{Analysis for $m$ distinct test groups.}
The cost of computing confidence regions for $m$ distinct test groups is $\mathcal{O}(T+n_r n_c + n(\log n+mK))$, as shown below.
\begin{itemize}  \setlength\itemsep{0em}
    \item Training the black-box matrix completion model has cost $\mathcal{O}(T)$, since the model only needs to be trained once.
    \item The cost of computing conformity scores $S_i$ for all $i \in [n]$ is $\mathcal{O}(nK)$, since the calibration groups are the same for any new test group.
    \item 
    The cost of computing $p_i$ for all $i \in [n+m]$ is $\mathcal{O}(n_r n_c+mnK)$; see Section~\ref{app:approx-computation}.
    \item After the conformalization weights are computed, the cost of computing the confidence sets for all $m$ test groups is $\mathcal{O}(n\log n+nm)$. Sorting the scores $S_i$ for all $i \in [n]$ has a worst-time cost of $\mathcal{O}(n\log n)$, which only needs to be performed once. For any $j\in [m]$, it takes $\mathcal{O}(n)$ to find the weighted quantile given weights $\cbrac*{p_1, \dots, p_n, p_{n+j}}$ and the sorted scores.
\end{itemize}
Therefore, the overall cost is $\mathcal{O}(T+n_r n_c + n(\log n+mK))$.

\subsubsection{Computational Cost Analysis of Algorithm~\ref{alg:calibration-group}}
\begin{itemize} \setlength\itemsep{0em}
    \item For each column, the cost of computing $m^c:=n^c_{\mathrm{obs}}-\lfloor n^c_{\mathrm{obs}}/K \rfloor < K$ is $\mathcal{O}(n_r)$, and the cost of sampling $m^c$ indices uniformly at random is $\mathcal{O}(K)$. Hence the cost of sampling the pruned indices for all columns is $\mathcal{O}(n_c(n_r+K))$, which simplifies to $\mathcal{O}(n_c n_r)$ by the fact that $K<n_r$.
    \item Initializing $\cD_{\mathrm{avail}}$ given the pruned indices $\cD_{\mathrm{prune}}$ has cost of $\mathcal{O}(n_c n_r)$.
    \item After $\cD_{\mathrm{avail}}$ is initialized, the cost of sampling the $i$th calibration group (and updating $\mathcal{D}_{\mathrm{cal}}$ and $\cD_{\mathrm{avail}}$) is $\mathcal{O}(K)$, for each $i \in [n]$. Hence sampling all $n$ calibration groups takes $\mathcal{O}(nK)$.
\end{itemize}
Therefore, Algorithm~\ref{alg:calibration-group} has time complexity of $\mathcal{O}(n_c n_r+nK)$, and it does not need to be repeatedly applied when dealing with distinct groups involving the same matrix.

\subsection{Estimation of the Sampling Weights}\label{app:missingness-estimation}

We describe here a method, inspired by \citet{gui2023conformalized},  to estimate empirically the sampling weights $\mat{w}$ for our sampling model in~\eqref{model:sampling_wo_replacement}, leveraging the available matrix observations indexed by $\cD_{\mathrm{obs}}$. 
In general, this estimation problem is made feasible by introducing the assumption that $\mat{w}$ has some lower-dimensional structure that can be summarized for example by a parametric model.
The approach suggested by \citet{gui2023conformalized} assumes that the weight matrix $\mat{w} \in \mathbb{R}^{n_r\times n_c}$ is low-rank. 
For simplicity, we follow the same approach here, although our framework could also accommodate alternative estimation techniques in situations where different modeling assumptions about $\mat{w}$ may be justified.

Suppose the sampling weights follow the parametric model
\begin{align*}
    \log\paren*{\frac{w_{r,c}}{1-w_{r,c}}}=A_{r,c},
\end{align*}
where $\mat{A} \in \mathbb{R}^{n_r\times n_c}$ is a matrix with rank $\rho$ and bounded infinity norm; i.e., $||\mat{A}||_\infty \leq \nu$, for some pre-defined constant $\nu \in \mathbb{R}$.
Then, if each matrix entry $(r,c)$ is independently observed (i.e., included in $\cD_{\mathrm{obs}}$) with probability $w_{r,c}$, i.e.,
\begin{align} \label{eq:estimation-bernoulli}
  \mathbb{I}\left[ (r,c) \in \cD_{\mathrm{obs}} \right] \overset{\text{ind.}}{\sim} \text{Bernoulli}(w_{r,c}),
\end{align}
then the log-likelihood of $\mat{A}$ can be written as
\begin{align} \label{eq:likelihood}
    \mathcal{L}_{\cD_{\mathrm{obs}}}(\mat{A})&=\sum\limits_{(r,c) \in \cD_{\mathrm{obs}}} \log(l(A_{r,c})) + \sum\limits_{(r,c) \in [n_r]\times[n_c] \setminus
     \cD_{\mathrm{obs}}} \log(1-l(A_{r,c})),
\end{align}
where $l(t) = \paren*{1+\exp(-t)}^{-1}$.
This suggests estimating  $\mat{A}$ by solving
\begin{align*}
  \hat{\mat{A}} = \argmax\limits_{\mat{A} \in \mathbb{R}^{n_r\times n_c}} & \mathcal{L}_{\cD_{\mathrm{obs}}}(\mat{A}) \\
  \text{subject to: } & \norm{\mat{A}}_* \leq \nu \sqrt{\rho n_r n_c}, \\
                                                                          & \norm{\mat{A}}_{\infty} \leq \nu,
\end{align*}
where $\norm{\cdot}_*$ is the nuclear norm. 
Finally, having obtained $\hat{\mat{A}}$, the estimated sampling weights $\hat{w}_{r,c}$ for each $(r,c) \in [n_r]\times[n_c]$ are given by
\begin{align} \label{eq:w-hat}
  \hat{w}_{r,c} = 1/(1+\exp(-\hat{A}_{r,c})).
\end{align} 
In practice, the numerical experiments described in this paper apply this estimation procedure using the default choices of the parameters $\rho$ and $\nu$ suggested by \citet{gui2023conformalized}.

It is worth remarking that the independent Bernoulli observation model~\eqref{eq:estimation-bernoulli} underlying this maximum-likelihood estimation approach differs from the weighted sampling without replacement model~\eqref{model:sampling_wo_replacement} that we utilize to calibrate our simultaneous conformal inferences.
This discrepancy, however, is both useful and unlikely to cause issues, as explained next.
On the one hand, sampling without replacement model is essential to capture the structured nature of our group-level test case $\mat{X}^*$ and of the calibration groups $\mat{X}^{\mathrm{cal}}_{1}, \dots, \mat{X}^{\mathrm{cal}}_{n}$.
On the other hand, sampling without replacement would make the likelihood function in~\eqref{eq:likelihood} intractable, unnecessarily hindering the estimation process.
Fortunately, the interpretation of the sampling weights $w_{r,c}$ remains largely consistent across the models~\eqref{model:sampling_wo_replacement} and~\eqref{eq:estimation-bernoulli}, which justifies the use of the estimated weights $\hat{w}_{r,c}$ in~\eqref{eq:w-hat} to calibrate conformal inferences under the model defined in~\eqref{model:sampling_wo_replacement}.

\FloatBarrier

\section{Additional Empirical Results}\label{app:numerical-results}

\subsection{Additional Experiments with Synthetic Data} \label{app:additional-experiments}

\subsubsection{Heterogeneous Test Sampling Weights} \label{sec:worst-slab-experiments}

This section describes experiments in which the test group $\mat{X}^*$ is sampled according to a model~\eqref{model:weighted_test_generation} with heterogeneous weights $\mat{w}^*$.
As explained in Section~\ref{sec:preliminaries}, the heterogeneous nature of these weights makes it feasible to ensure valid coverage conditional on interesting features of $\mat{X}^*$. Therefore, the following experiments demonstrate the ability of our method to smoothly interpolate between marginal and conditional coverage guarantees, giving practitioners flexibility to up-weight or down-weight different types of test cases, as needed.

The ground-truth matrix $\mat{M} \in \mathbb{R}^{400 \times 400}$ is generated according to the random factorization model described in Section~\ref{sec:exp-synthetic-uniform}, with rank $l = 4$.
We observe $n_{\mathrm{obs}}=48,000$ entries of this matrix, sampled based on the model in~\eqref{model:sampling_wo_replacement} with uniform weights~$\mat{w}$; these are indexed by $\cD_{\mathrm{obs}}$, whose complement is $\cD_{\mathrm{miss}} = [n_r] \times [n_c] \setminus \cD_{\mathrm{obs}}$.
Algorithm~\ref{alg:simultaneous-confidence-region} is then applied as in the previous experiments, using $n = \min \cbrac{2000, \lfloor \xi_{\mathrm{obs}}/2 \rfloor}$ calibration groups and allocating the remaining $n_{\mathrm{train}}=n_{\mathrm{obs}}-Kn$ observations for training.
For the latter purpose, we rely on the usual alternating least square approach, with hypothesized rank $4$, and thus obtain a point estimate $\hat{\mat{M}}$ and its corresponding factor matrices $\hat{\mat{U}} \in \mathbb{R}^{n_r \times l}$ and $\hat{\mat{V}}\in \mathbb{R}^{n_c \times l}$, such that $\hat{\mat{M}} = \hat{\mat{U}}(\hat{V})^{\top}$.

The weights $\mat{w}^*$ for $\mat{X}^*$ in~\eqref{model:weighted_test_generation} are based on an {\em oracle} procedure that leverages perfect knowledge of $\mat{M}$ and $\hat{\mat{M}}$ to construct a sampling process that over-represents portions of the matrix for which the point estimate is less accurate.
This process is controlled by a parameter $\delta \in (0,1]$, which determines the heterogeneity of $\mat{w}^*$.
In the special case of $\delta=1$, the test weights become $w_{r,c}^* = 1$ for all matrix entries, recovering the experimental setup considered earlier in Section~\ref{sec:exp-synthetic-uniform}. By contrast, smaller values of $\delta$ tend to increasingly over-sample portions of the matrix for which the point estimate $\hat{\mat{M}}$ is less accurate.
We refer to Appendix~\ref{app:worst-slab-experiments-details} for details about this construction of the test sampling weights, which gives rise to an interesting and particularly challenging experimental setting in which attaining high coverage is intrinsically difficult.

To highlight the importance of correctly accounting for the heterogeneous nature of the test sampling weights $\mat{w}^*$, in these experiments we compare the performance of joint confidence regions obtained with two alternative approaches.
The first approach consists of applying Algorithm~\ref{alg:simultaneous-confidence-region} based on the correct values of the data-generating weights $\mat{w}$ and $\mat{w}^*$.
The second approach consists of applying Algorithm~\ref{alg:simultaneous-confidence-region} based on the correct values of the data-generating weights $\mat{w}$ but incorrectly specified weights $w^*_{r,c} = 1$ for all $r \in [n_r], c \in [n_c]$.
In both cases, the nominal significance level is $\alpha = 10\%$, and the methods are evaluated based on 100 random test groups sampled from $\cD_{\mathrm{miss}} \setminus \cD_{\mathrm{wse}}$, according to the model in~\eqref{model:weighted_test_generation} with the weights $\mat{w}^*$ defined in Equation~\eqref{eq:wse-test-weights} within Appendix~\ref{app:worst-slab-experiments-details}.
All results are averaged over 300 independent experiments.

Figure~\ref{fig:exp_conditional_vary_k} compares the performances of the two aforementioned implementations of our method as a function of $K$, for different values of the parameter $\delta$.
The results show that our method applied with the correct weights $\mat{w}^*$ always achieves the desired 90\% simultaneous coverage, as predicted by the theory.
By contrast, using mis-specified uniform test sampling weights $\mat{w}^*$ leads to lower coverage than expected, especially for lower values of the parameter $\delta$.
Figure~\ref{fig:exp_conditional_vary_delta} provides an alternative but qualitatively consistent view of these findings, varying the parameter $\delta$ separately for different values of the group size $K$.

\begin{figure}[!htb]
    \centering
    \includegraphics[width=0.8\linewidth]{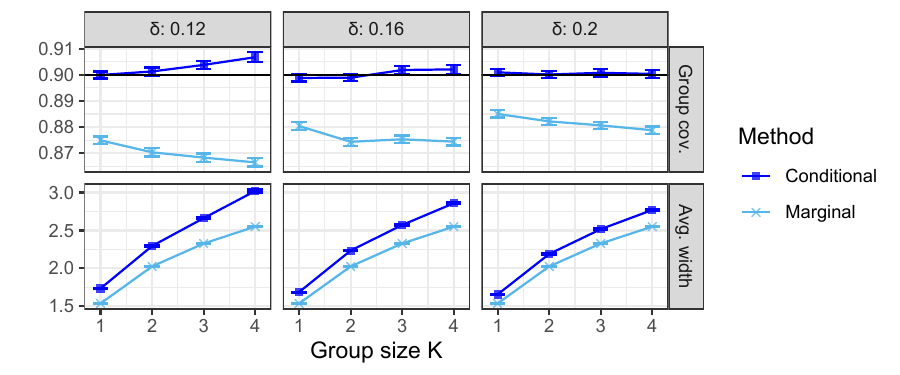}
    \caption{Performance of SCMC on simulated data with heterogeneous test weights.
Two alternative implementations of SCMC are compared: one based on the correct heterogeneous test sampling weights, and one based on mis-specified homogeneous weights.
      The results are shown as a function of the group size $K$, for different values of the parameter $\delta$ which controls the heterogeneity of the sampling weights.
      The performance is measured in terms of the empirical simultaneous coverage (top) and the average width of the confidence regions (bottom). The nominal coverage level is 90\%.
    }
    \label{fig:exp_conditional_vary_k}
\end{figure}

\begin{figure}[!htb]
    \centering
    \includegraphics[width=0.8\linewidth]{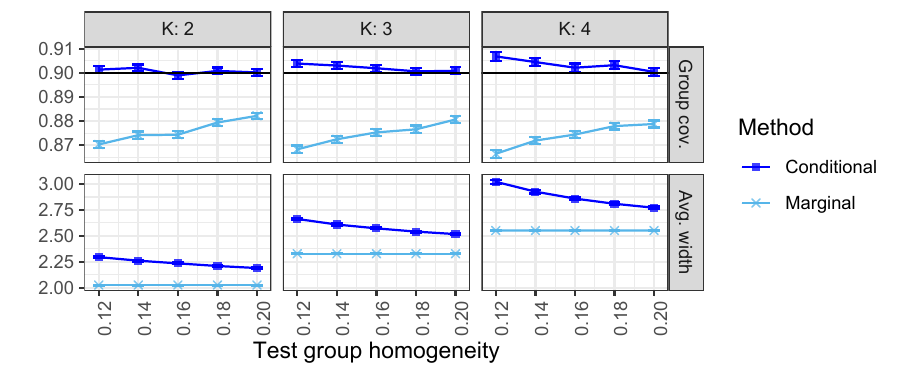}
    \caption{Performance of SCMC on simulated data with heterogeneous test weights.
      The results are shown as a function of the heterogeneity parameter $\delta$, for different group sizes $K$.
      Other details are as in Figure~\ref{fig:exp_conditional_vary_k}.
    }
    \label{fig:exp_conditional_vary_delta}
\end{figure}

It is interesting to note from Figures~\ref{fig:exp_conditional_vary_k} and~\ref{fig:exp_conditional_vary_delta} that our method is sometimes slightly over-conservative when applied with highly heterogeneous test sampling weights $\mat{w}^*$ (corresponding to small values of the parameter $\delta$).
This phenomenon is due to the unavoidable challenge of constructing valid confidence regions in the presence of strong distribution shifts, and it can be understood more precisely as follows.
Smaller values of $\delta$ result in a stronger distribution shift between the observed data in $\mathcal{D}_{\mathrm{obs}}$ and $\mat{X}^*$, increasing the likelihood that the weighted empirical quantile $\hat{\tau}_{\alpha, K}$ defined in~\eqref{eq:weighted-quantile} might become infinite, leading to trivially wide confidence regions.
In those (relatively rare) cases in which $\hat{\tau}_{\alpha, K}$ diverges, to avoid numerical issues we simply set $\hat{\tau}_{\alpha,K}$ equal to $S_{(n)}$, the highest calibration conformity score.
Fortunately, as shown explicitly in Figure~\ref{fig:exp_conditional_vary_k_full}, this issue is not very common (it is observed in fewer than 2.5\% of the cases), which explains why our method appears to be only slightly over-conservative in Figures~\ref{fig:exp_conditional_vary_k} and~\ref{fig:exp_conditional_vary_delta}.

\begin{figure}[!htb]
    \centering
    \includegraphics[width=0.8\linewidth]{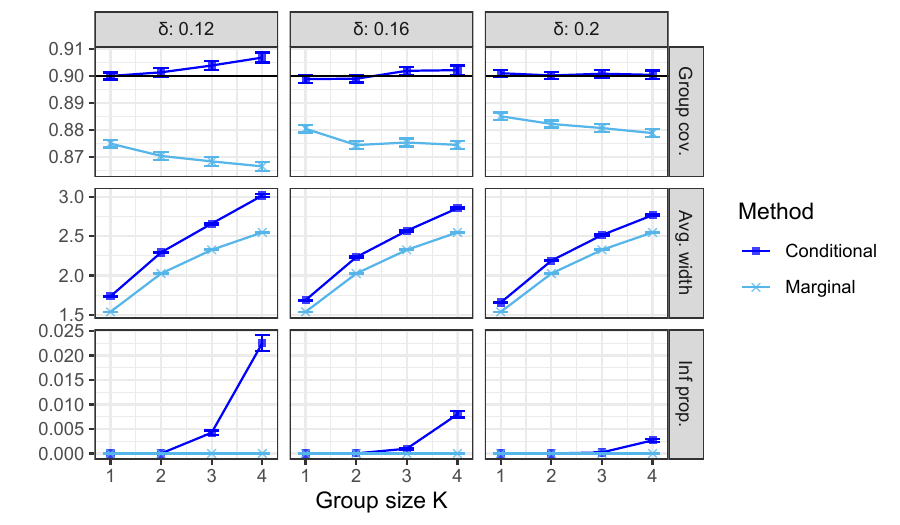}
    \caption{Performance of SCMC on simulated data with heterogeneous sampling weights for the test group.
      The bottom panel reports the empirical probability that our method produces trivially wide confidence regions for a random test group.
      Other details are as in Figure~\ref{fig:exp_conditional_vary_k}.
      }
    \label{fig:exp_conditional_vary_k_full}
\end{figure}


\FloatBarrier

\subsubsection{Additional Details for Section~\ref{sec:worst-slab-experiments}} \label{app:worst-slab-experiments-details}

The sampling weights $\mat{w}^*$ for $\mat{X}^*$ utilized in the experiments of Section~\ref{sec:worst-slab-experiments} are defined based on the following {\em oracle} procedure, which leverages perfect knowledge of $\mat{M}$ and $\hat{\mat{M}}$ to construct a sampling process that over-represents portions of the matrix for which the point estimate is less accurate.
This gives rise a particularly challenging experimental setting.
For each entry $(r,c) \in [n_r]\times[n_c]$, define the {\em latent feature} vector $\mat{y}_{r,c} \coloneqq (\hat{\mat{U}}_{r\circ}, \hat{\mat{V}}_{c\circ}) \in \mathbb{R}^{2l}$, where $\hat{\mat{U}}_{r\circ}$ and $\hat{\mat{V}}_{c\circ}$ are the $r$-th row of $\hat{\mat{U}}$ and the $c$-th row of $\hat{\mat{V}}_{c\circ}$, respectively.
Let also $\cD_{\mathrm{wse}} \subset [n_r]\times[n_c]$ denote a subset containing $25\%$ of the matrix indices in $\cD_{\mathrm{miss}}$, chosen uniformly at random.

The values of $\mat{M}$ and $\hat{\mat{M}}$ indexed by $\cD_{\mathrm{wse}}$ are utilized by the oracle to construct $\mat{w}^*$ with an approach inspired by  \citet{cauchois2020knowing} and \citet{romano2020classification}.
For any fixed $\delta \in (0,1]$, define the {\em worst-slab estimation error},
\begin{align}\label{eq:worst-slice-error}
    \mathrm{WSE}(\hat{\mat{M}}; \delta, \cD_{\mathrm{wse}}) = \sup\limits_{\mat{v}\in \mathbb{R}^{2l}, a < b \in \mathbb{R}} \cbrac*{\frac{\sum_{(r,c) \in S_{\mat{v},a,b}}\abs{\hat{M}_{r,c}-M_{r,c}}}{\abs{S_{\mat{v},a,b}}} \mathrm{ s.t. } \frac{\abs{S_{\mat{v},a,b}}}{\abs{\cD_{\mathrm{wse}}}} \geq \delta},
\end{align}
where, for any $\mat{v} \in \mathbb{R}^{2l}$ and $a < b \in \mathbb{R}$, the subset $S_{\mat{v},a,b} \subset \cD_{\mathrm{wse}}$ is defined as
\begin{align}\label{eq:worst-slab}
    S_{\mat{v},a,b} = \cbrac{(r,c) \in \cD_{\mathrm{wse}}: a \leq \mat{v}^{\top}\mat{y}_{r,c} \leq b }.
\end{align}
Intuitively, $S_{\mat{v},a,b}$ is a subset (or {\em slab}) of the matrix entries in $\cD_{\mathrm{wse}}$ characterized by a direction $\mat{v}$ in the latent feature space and two scalar thresholds $a < b$.
Accordingly, $\mathrm{WSE}(\hat{\mat{M}}; \delta, \cD_{\mathrm{wse}})$ is the average absolute residual between $\mat{M}$ and $\hat{\mat{M}}$ evaluated for the entries within $S_{\mat{v},a,b}$, after selecting the worst-case subset $S_{\mat{v},a,b}$ containing at least a fraction $\delta$ of the observations within $\cD_{\mathrm{wse}}$.

In practice, the optimal (worst-case) choice of $\mat{v}$ in~\eqref{eq:worst-slice-error} is approximated by fitting an ordinary least square regression model to predict the absolute residuals $\cbrac{\abs{\hat{M}_{r,c}-M_{r,c}}}_{(r,c) \in \cD_{\mathrm{wse}}}$ as a linear function of the latent features $\cbrac{\mat{y}_{r,c}}_{(r,c) \in \cD_{\mathrm{wse}}}$. Then, the corresponding optimal values of $a^*, b^*$ in \eqref{eq:worst-slice-error} are approximated through a grid search, for a fixed value of the parameter $\delta$.

Finally, the test sampling weights $\mat{w}^* = \cbrac*{w^*_{r,c}}_{(r,c)\in[n_r]\times[n_c]}$ are given by
\begin{align}\label{eq:wse-test-weights}
    w^*_{r,c}=\begin{cases}
\cfrac{\mathrm{normpdf}(\mat{v}^{*\top}\mat{y}_{r,c},a^*,\sigma^2)}{\mathrm{normpdf}(a^*, a^*,\sigma^2)}, & \mat{v}^{*\top}\mat{y}_{r,c} < a^*\\
  1,  & (r,c) \in  S_{\mat{v}^*,a^*,b^*},\\
\cfrac{\mathrm{normpdf}(\mat{v}^{*\top}\mat{y}_{r,c},b^*,\sigma^2)}{\mathrm{normpdf}(b^*, b^*,\sigma^2)}, & \mat{v}^{*\top}\mat{y}_{r,c} > b^*,
\end{cases}
\end{align}
where $\mathrm{normpdf}(\cdot, a, \sigma^2)$ denotes the density function of the Gaussian distribution with mean $a$ and variance $\sigma^2$.
This density function is introduced for smoothing purposes, setting $\sigma = (b^*-a^*)/5$.
These sampling weights allow us to select test groups from indices that predominantly fall within the worst-slab region for which $\hat{\mat{M}}$ estimates $\mat{M}$ least accurately.
Intuitively, attaining valid coverage for this portion of the matrix should be especially challenging.

\FloatBarrier

\subsection{Additional Experiments with MovieLens Data} \label{app:movielens-conditional}

Having demonstrated in Appendix~\ref{app:additional-experiments} that SCMC can achieve approximate conditional validity in synthetic settings, we now extend this analysis to the MovieLens dataset~\citep{movielens100k} by constructing conditional confidence regions for specific movie genres.
This is particularly relevant for streaming platforms, where recommendations are often categorized by genre.
For instance, to ensure better coverage for drama films, we can assign higher test weights ($\mat{w}^*$) to dramas while downweighting unrelated genres.

Appendix~\ref{app:additional-experiments} demonstrates how SCMC can achieve approximate conditional validity using synthetic data. In this section, we extend this concept to the MovieLens dataset~\citep{movielens100k}, illustrating the construction of conditional confidence regions for specific movie genres. This approach is particularly relevant in real-world applications, such as streaming platforms, where recommendations are often organized into categories based on broad genres. For instance, when recommending dramas, we might want to prioritize joint coverage for ratings specifically within the drama genre. To achieve this, we could set $\mat{w}^*$  to larger values for dramas and smaller values for movies with unrelated themes. 

The MovieLens dataset, beyond the partially observed rating matrix, also provides additional movie information, including genre labels. Among the 1000 randomly selected movies in Section~\ref{sec:experiments-data}, there are 19 distinct genres. For each movie $c \in [n_c]$, we are given a binary vector $\mat{v}_{c} = (v_{c,1} \dots, v_{c,19}) \in \{0,1\}^{19}$ where for $g \in [19]$, $v_{c,g}=1$ if movie $c$ is labeled as genre $g$.

In this experiment, we focus on four genres: $\mathcal{G}=\cbrac{\text{Children's, Crime, Drama, Romance}}$. For each $g \in \mathcal{G}$ and $(r,c) \in [n_r]\times[n_c]$, we set $w^*_{r,c} = v_{c,g}+\epsilon$, where $\epsilon>0$ is a constant noise. 
With this construction, $\mat{w}^*$ is a matrix with constant columns, valued 1 for movies of genre $g$ and 0 otherwise, plus the noise term. Fixing $\epsilon = 1e^{-2}$, we randomly sample $50$ test queries from $\mathcal{D}_{\text{miss}}$ according to the model in \eqref{model:weighted_test_generation} with weights $\mat{w}^*$. We use $n = \min \cbrac{2000, \lfloor \xi_{\text{obs}}/2 \rfloor}$ calibration groups and vary the group size $K$ as a control parameter. Figure~\ref{fig:movielens-conditional} compares the performance of SCMC implemented with the correct heterogeneous weights (Conditional) versus incorrect homogeneous weights (Marginal). As in Figure~\ref{fig:movielens-preview}, we compare only the average width of confidence regions, because ground-truth ratings for $\mathcal{D}_{\text{miss}}$ are unknown. Although empirical verification is not possible due to this limitation, Theorem~\ref{thm:coverage-lower} guarantees that SCMC implemented with correct weights $\mat{w}^*$ would achieve the valid nominal coverage. 

Figure~\ref{fig:movielens-conditional} reveals that conditional confidence regions are narrower for Children's and Crime movies but wider for Drama and Romance movies compared to marginal confidence regions. This disparity may be attributed to the inherent predictability of different genres. Drama and Romance ratings might be noisier and harder to predict, causing marginal SCMC to potentially fail in providing conditional coverage for these genres. By using test sampling weights $\mat{w}^*$, conditional SCMC adaptively enlarges confidence regions to ensure valid conditional coverage. Conversely, Children's and Crime movies may be easier to predict, possibly due to a more focused audience. In these cases, conditional SCMC can provide narrower, more informative confidence regions by prioritizing calibration queries from the same target genres.

While our experiment utilized genre labeling to focus on specific movie subsets, practitioners have the freedom to tailor these weights to their particular needs. For instance, $\mat{w}^*$ could be derived from a smooth function of additional user-item covariates, encompassing not only movie characteristics but also user demographic information. This approach allows for more nuanced conditioning to capture finer-grained patterns in user preferences or item attributes. Such versatility in weight construction enables SCMC to adapt to a wide range of recommendation scenarios, from broad genre-based suggestions to highly personalized content curation.

\begin{figure}[!htb]
\centering
\includegraphics[width=0.8\textwidth]{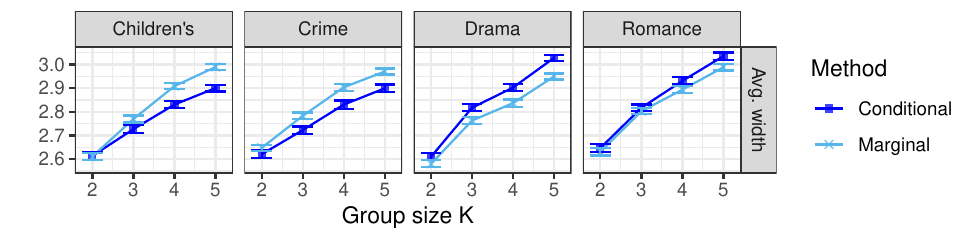}
\caption{Performance of SCMC on MovieLens data with heterogeneous test weights. Two
alternative implementations of SCMC are compared: one based on the correct heterogeneous test sampling weights, and one based on mis-specified homogeneous weights. The
results are shown as a function of the group size K, for four different movie genres. Results are averaged over 300 independent experiments.}
\label{fig:movielens-conditional}
\end{figure}

\subsection{Additional Result with Different Completion Algorithms}\label{app:exp-mc-algorithms}
The validity of SCMC is independent of the matrix completion algorithms used. However, the size of the confidence interval can be affected by the algorithm's effectiveness. Intuitively, if the matrix completion algorithm performs poorly, the corresponding conformal confidence interval would also be large, as it is built upon the estimation residuals. In this section, we conduct additional experiments to investigate the performance of SCMC and benchmarks with two popular alternative low-rank matrix algorithms: nuclear norm minimization (NNM)~\citep{candes_exact_2008} and singular value thresholding (SVT)~\citep{svt_cai_2010}.

NNM solves the convex optimization problem
$\hat{\mat{M}}(\cD_{\mathrm{train}}) = \underset{\mat{Z} \in \mathbb{R}^{n_r \times n_c}}{\arg\min} \norm{\mat{Z}}_*$ subject to $\mathcal{P}_{\cD_{\mathrm{train}}}(\mat{Z}) = \mathcal{P}_{\cD_{\mathrm{train}}}(\mat{M})$,
where $\norm{\cdot}_*$ denotes the nuclear norm and $\mathcal{P}_{\cD_{\mathrm{train}}}(\mat{M})$ is the orthogonal projection of $\mat{M}$ onto the subspace of matrices that vanish outside $\cD_{\mathrm{train}}$. On the other hand, SVT is an efficient iterative algorithm that approximates the optimal convex optimization solution by computing a sequence of estimation matrices and applying soft-thresholding to their singular values. 

Figure~\ref{fig:exp-solver-uniform} shows the results by applying SCMC and benchmarks with the two matrix completion algorithms. The experiment setup follows from that in Section~\ref{sec:exp-synthetic-uniform} where the observed indices are sampled with uniform weights while the ground truth $\mat{M}$ shows a column-wise dependency controlled by parameter $\mu$, here we fix $\mu=15$. Figure~\ref{fig:exp-solver-biased} illustrates the comparison under the experimental conditions detailed in Section~\ref{sec:exp-synthetic-heterogeneous}. In this setting, the ground truth matrix follows a simple low-rank structure, but the observation sampling weights demonstrate column-wise heterogeneity controlled by the parameter $s$, which we fix at $s=0.14$ in Figure~\ref{fig:exp-solver-biased}.

Both plots demonstrate that the behavior and relative performance of SCMC and benchmark methods remain consistent across different matrix completion algorithms. However, more robust matrix completion algorithms lead to more informative confidence intervals for all methods. Among the three completion algorithms, Alternating Least Squares (ALS) exhibits comparable performance to the others while often being the most time-efficient option~\citep{nguyen-2019-mcsurvey}. Given this balance of performance and efficiency, we choose to present the numerical results using ALS in the main text.

\begin{figure}[!htb]
\centering
\includegraphics[width=0.8\textwidth]{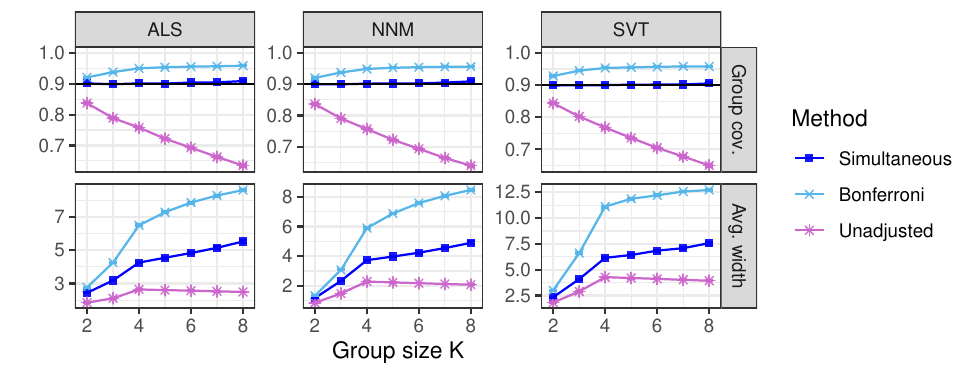}
\caption{Performance of SCMC and two baselines applied with three different matrix completion algorithms, ALS, NNM and SVT. All implementation details follow from Figure~\ref{sec:exp-synthetic-uniform} with parameter $\mu=15$. Results are averaged over 300 independent experiments.}
\label{fig:exp-solver-uniform}
\end{figure}

\begin{figure}[!htb]
\centering
\includegraphics[width=0.8\textwidth]{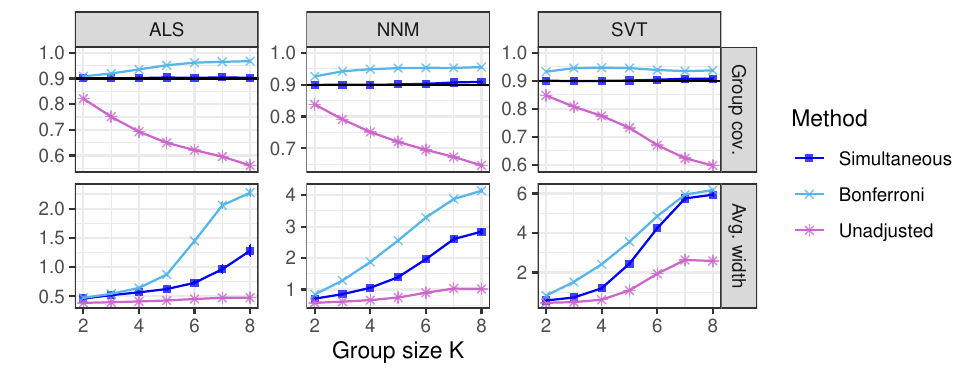}
\caption{Performance of SCMC and two baselines applied with three different matrix completion algorithms, ALS, NNM and SVT. All implementation details follow from Section~\ref{sec:exp-synthetic-heterogeneous} with parameter $s=0.14$. Results are averaged over 300 independent experiments.}
\label{fig:exp-solver-biased}
\end{figure}

\FloatBarrier

\subsection{Additional Results With Estimated Sampling Weights}\label{app:exp-estimated-missingness}
The main text presents synthetic experiments using oracle sampling weights $\mat{w}$ for both SCMC and benchmark methods, highlighting methodological differences without estimation error confounds. However, in practice, $\mat{w}$ is often unknown. We address this aspect by conducting additional experiments with estimated sampling weights, using the method detailed in Appendix~\ref{app:missingness-estimation}, to assess our method's robustness to weight estimation error.

The estimation method assumes $\mat{w}$ follows a model parametrized by a low-rank matrix, then the sampling weights are estimated by maximum likelihood estimation. We choose the guessed rank from $\cbrac{3,5,7}$. Figure~\ref{fig:exp-est-uniform} compares the performance of SCMC with oracle and estimated weights as a function of $K$ under uniform $\mat{w}$. In this simple case, estimation is relatively easy, yielding very close performance between oracle and estimated SCMC. 
Since the weight matrix $\mat{w}$ is rank-1 for uniform $\mat{w}$, 
higher guessed ranks may result in overfitting, which leads to a performance gap for larger group sizes when the error accumulates. 

Figure~\ref{fig:exp-est-biased} shows the comparison under heterogeneous sampling weights. As detailed in Appendix~\ref{app:exp-synthetic-heterogeneous}, heterogeneity is controlled by parameter $s$, with smaller $s$ indicating greater heterogeneity and estimation difficulty. 
The performance gap between oracle and estimated SCMC is largest at $s=0.1$, and the gap decreases as $s$ grows.
Similar to the uniform setting, higher guessed ranks tend to produce more conservative estimates. 

Both Figures~\ref{fig:exp-est-uniform} and~\ref{fig:exp-est-biased} show that the coverage level of SCMC with estimated missingness remains close to nominal levels even under challenging conditions. Therefore, SCMC demonstrates relative robustness to estimation error, particularly when the underlying sampling weights do not exhibit strong heterogeneity.

\begin{figure}[!htb]
\centering
\includegraphics[width=0.8\textwidth]{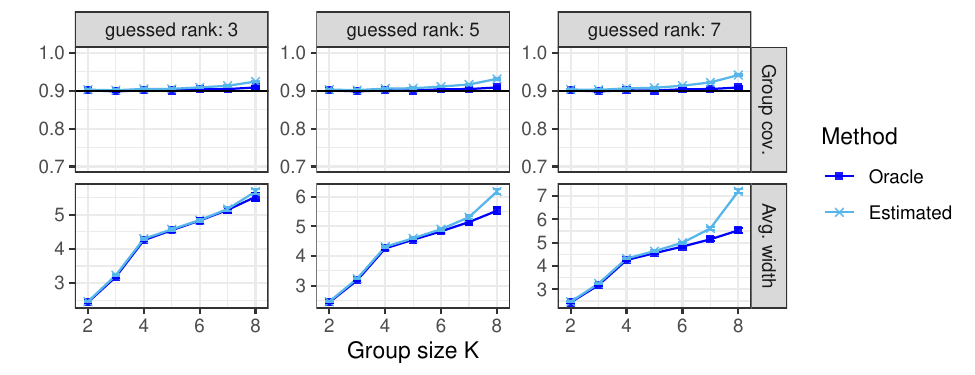}
\caption{SCMC performance comparison using oracle vs. estimated sampling weights with varying guessed ranks. Uniform sampling weights applied; implementation details as in Section~\ref{sec:exp-synthetic-uniform} ($\mu=15$). The result is averaged over 300 independent experiments.}
\label{fig:exp-est-uniform}
\end{figure}

\begin{figure}[!htb]
\centering
\includegraphics[width=0.8\textwidth]{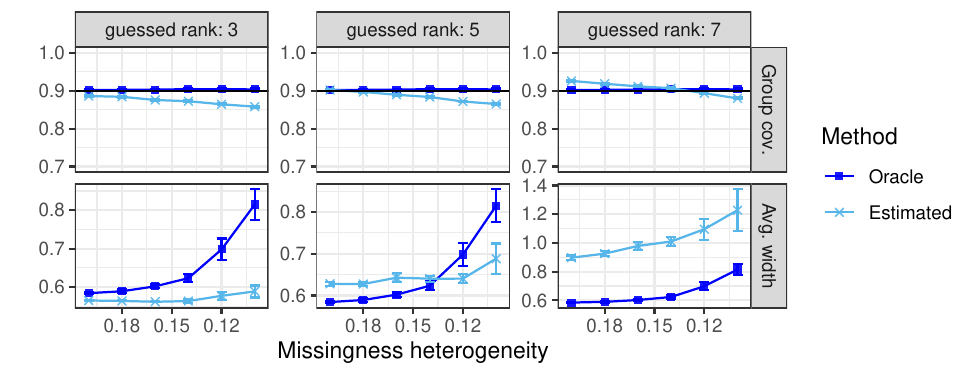}
\caption{SCMC performance with oracle vs. estimated sampling weights under heterogeneous conditions. Heterogeneity parameter $s$  ranges from $0.2$ to $0.1$. Smaller $s$ indicates stronger heterogeneity. Implementation details as in Appendix~\ref{app:exp-synthetic-heterogeneous} ($k=5$). The result is averaged over 300 independent experiments.}
\label{fig:exp-est-biased}
\end{figure}
\FloatBarrier

\subsection{Investigation of the Coverage Upper Bound} \label{sec:upper-bound-numerical}

In this section, we investigate numerically the upper coverage bound for our method established by Theorem~\ref{thm:upper_bound}, which is equal to $1-\alpha + \mathbb{E}[\max_{i \in [n+1]} p_i(\mathbf{X}^{\mathrm{cal}}_{1}, \dots, \mathbf{X}^{\mathrm{cal}}_{n}, \mathbf{X}^*)]$.
Ideally, a small expected value in this equation would guarantee that our conformal inferences are not too conservative.
However, given that it would be unfeasible to evaluate this expected value analytically, we rely on a Monte Carlo numerical study.

We begin by focusing on groups of size $K=2$ and consider for simplicity matrices with an equal number of rows and columns; i.e., $n_r = n_c = 200$.
We simulate the observation process by sampling $n_{\mathrm{obs}} = 0.2 \cdot n_r  n_c$ matrix entries without replacement according to the model defined in~\eqref{model:sampling_wo_replacement}, with  
\begin{align*}
    w_{r,c} = (n_r(c-1)+r)^{s}, \qquad \forall r \in [n_r],  \; c \in [n_c],
\end{align*}
with a scaling parameter $s=2$. 
For simplicity, the test group $X^*$ is sampled from the model defined in~\eqref{model:weighted_test_generation} using weights $\mat{w}^*=\mat{w}$.
Then, the conformalization weights $p_i$ for all $i \in [n+1]$ are computed by applying Algorithm~\ref{alg:simultaneous-confidence-region}, and varying the number $n$ of calibration groups as a control parameter.
Finally, we estimate $\mathbb{E}[\max_{i \in [n+1]} p_i(\mathbf{X}^{\mathrm{cal}}_{1}, \dots, \mathbf{X}^{\mathrm{cal}}_{n}, \mathbf{X}^*)]$ by taking the empirical average
over 10 independent experiments.

\begin{figure}[!htb]
\centering
\includegraphics[width=0.4\textwidth]{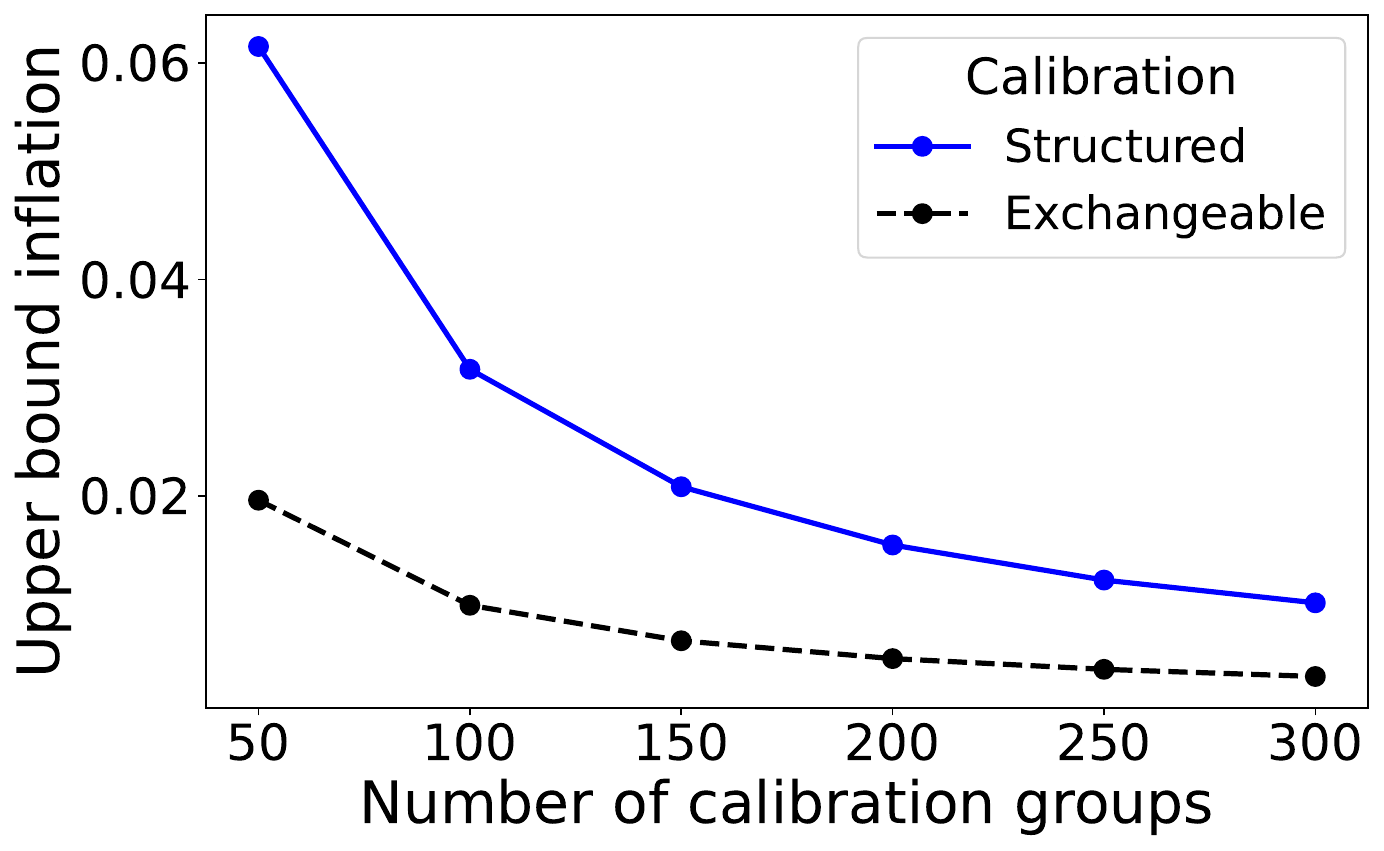} \hspace{1cm}
\includegraphics[width=0.4\textwidth]{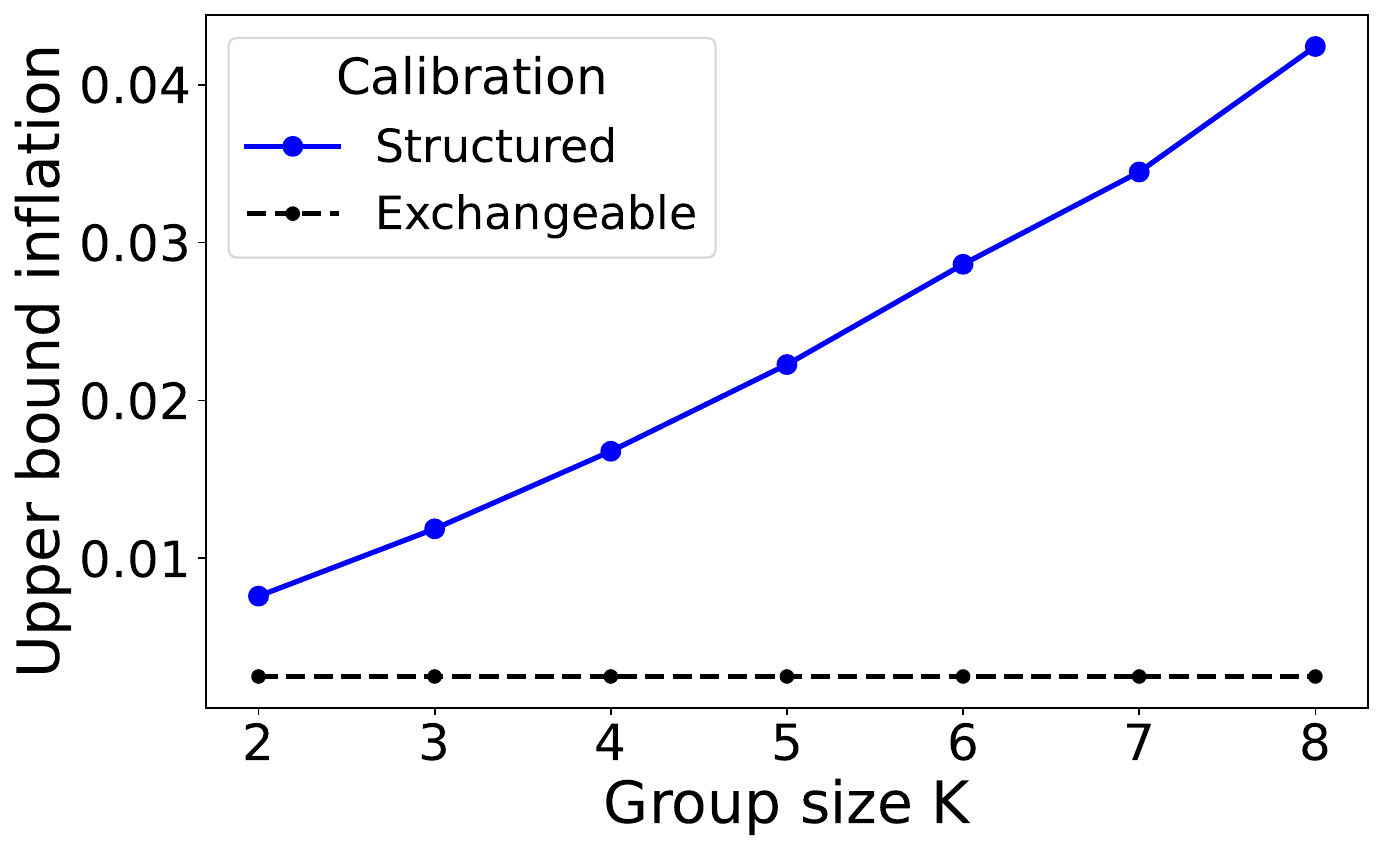}
\caption{Monte Carlo investigation of the theoretical coverage upper bound for the structured confidence regions computed by Algorithm~\ref{alg:simultaneous-confidence-region}.
We plot the expectation of the maximum of the conformalization weights, which is the difference between the upper bound established by Theorem~\ref{thm:upper_bound} and the desired coverage level $1-\alpha$.
The results are compared to the curve $1/n$, which corresponds to the standard upper bound inflation factor corresponding to conformal inferences under exchangeability.
Left: the results are shown as a function of the number of calibration groups $n$, for $K=2$.
Right: the results are shown as a function of the calibration group size $K$, for a structured calibration set of cardinality $n=400$.
}
\label{fig:weighted_upper_bound}
\end{figure}

Figure~\ref{fig:weighted_upper_bound} [left] reports on the results of these experiments as a function of $n$.
The results show that our coverage upper bound approaches $1-\alpha$ roughly at rate $1/n$, as one would generally expect in the case of standard conformal inferences based on exchangeable data \citep{vovk2005algorithmic}.
This is consistent with our empirical observations that Algorithm~\ref{alg:simultaneous-confidence-region} is typically not too conservative in practice.
Figure~\ref{fig:weighted_upper_bound} [right] reports on the results of additional experiments in which the group size $K$ is varied, while keeping the number of calibration groups fixed to $n=400$. The results shows that the coverage upper bound tends to become more conservative as the group size increases, reflecting the intrinsic higher difficulty of producing valid simultaneous conformal inferences for larger groups.

\FloatBarrier

\subsection{Additional Results for Section~\ref{sec:exp-synthetic-uniform}} \label{app:exp-synthetic-uniform}

In the experiments of Section~\ref{sec:exp-synthetic-uniform}, the ground truth matrix $\mat{M}$ is obtained as $\mat{M} = 0.5 \cdot \bar{\mat{M}} + 0.5 \cdot \mat{N}$, where $\bar{\mat{M}}  \in \mathbb{R}^{n_r \times n_c}$ is low-rank while $\mat{N}  \in \mathbb{R}^{n_r \times n_c}$ is a noise matrix exhibiting column-wise dependencies whose strength can be tuned as a control parameter.
Specifically, $\bar{\mat{M}}  \in \mathbb{R}^{n_r \times n_c}$ is given by a \textit{random factorization model} with rank $l=5$; i.e., $\bar{\mat{M}}=\bar{\mat{U}}(\bar{\mat{V}})^\top$, where $\bar{\mat{U}} = (U_{r,c})_{r \in [n_r], c \in [l]}$ and $\bar{\mat{V}} = (V_{r',c'})_{r' \in [n_c], c' \in [l]}$ 
      satisfy $U_{r,c}~\overset{\mathrm{i.i.d.}}{\sim}~\mathcal{N}(0,1)$ and $V_{r',c'}~\overset{\mathrm{i.i.d.}}{\sim}~\mathcal{N}(0,1)$.
Meanwhile, $\mat{N} = 0.1 \cdot \mat{\epsilon} + 0.9 \cdot \mat{1} \tilde{\mat{\epsilon}}^{\top}$,
    where $\mat{1} \in \mathbb{R}^{n_r \times 1}$ is a vector of ones, $\mat{\epsilon} \in \mathbb{R}^{n_r \times n_c}$ has i.i.d.~standard normal components, and $\tilde{\mat{\epsilon}} \in \mathbb{R}^{n_c \times 1}$ is such that, for all $c \in [n_c]$,
        $\tilde{\epsilon}_{c} ~\overset{\mathrm{i.i.d.}}{\sim}~\paren*{1-\gamma} \cdot \mathcal{N}(0,1) + \gamma \cdot \mathcal{N}(\mu,0.1^2)$,
    for suitable parameters $\gamma \in (0,1)$ and $\mu \in \mathbb{R}$.
    Thus, $\mat{1}\tilde{\mat{\epsilon}}^\top \in \mathbb{R}^{n_r \times n_c}$ has constant columns, and a larger $\mu \in \mathbb{R}$ results in stronger column-wise dependencies. Here, we vary $\mu$ as a control parameter, while fixing $\gamma = \alpha/2$.

\begin{figure}[!htb]
    \centering
    \includegraphics[width=0.8\linewidth]{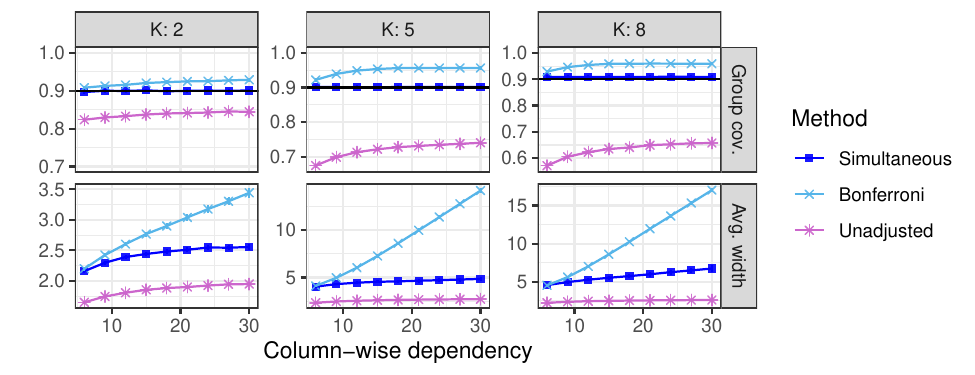}
    \caption{Performance of SCMC and two baselines on simulated data, as a function of the column-wise noise parameter $\mu$ and of the group size $K$. Other details are as in~Figure~\ref{fig:exp_uniform_vary_k}.}
    \label{fig:exp_uniform_vary_mu}
\end{figure}

\FloatBarrier

\subsection{Additional Results for Section~\ref{sec:exp-synthetic-heterogeneous}}
\label{app:exp-synthetic-heterogeneous}
In the experiments of Section~\ref{sec:exp-synthetic-heterogeneous}, we construct the ground truth matrix $\mat{M}$ with a ``signal plus noise'' structure $\mat{M} = \bar{\mat{M}} + \mat{N}$. Here $\bar{\mat{M}}  \in \mathbb{R}^{n_r \times n_c}$  is derived from the random factorization model described in Appendix~\ref{app:exp-synthetic-uniform} with rank $l=8$. The noise matrix $\mat{N}  \in \mathbb{R}^{n_r \times n_c}$ has elements $N_{r,c} ~\overset{\mathrm{i.i.d.}}{\sim}~\mathcal{N}(\mu,0.1^2) \text{ for all } (r,c) \in [N_r]\times[N_c]$. 

The sampling bias $\mat{w}^*$ is constructed such that some columns are less likely to be observed. Specifically, fixing a suitable parameter $\gamma \in [0,1]$, let $I_c~\overset{\mathrm{i.i.d.}}{\sim}~\text{Bernoulli}(\gamma), \text{ for all } c \in [n_c]$  indicate whether a column is sparsely observed. Let $\mathcal{I} = \cbrac{I_c \mid I_c = 1, c \in [n_c]}$ be the index set of sparsely observed columns. Then, for all $(r,c) \in [n_r]\times[n_c]$, define
\begin{align*}
    w_{r,c} = 
    \begin{cases} s, & \text{if $c \in \mathcal{I}$} \\ 1, & \text{otherwise},
    \end{cases}
\end{align*}
where $s\in (0,1)$ controls the heterogeneity of the sampling weights.  A smaller value of $s$ indicates greater heterogeneity, while as $s$ approaches 1, $\mat{w}$ becomes closer to uniform weights.
In this setup, $\mat{w}$  has constant columns, with columns from set $\mathcal{I}$ receiving small sampling weights. We vary $s$ as a control parameter while fixing $\gamma = \alpha/2$. 
\begin{figure}[!htb]
    \centering
    \includegraphics[width=0.8\linewidth]{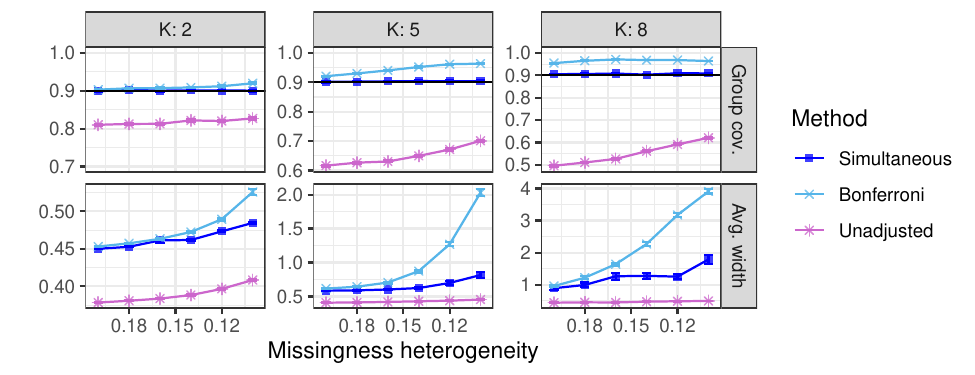}
    \caption{Performance of simultaneous conformal inference and benchmarks on simulated data, as a function of scaling parameter $s$, for different the group size $K$. Other details are as in Figure~\ref{fig:exp_biased_obs_vary_k}.}
    \label{fig:exp_biased_obs_vary_scale}
\end{figure}

\FloatBarrier

\subsection{Additional Results for Section~\ref{sec:experiments-data}} \label{app:additional-exp-movielens}
\begin{figure}[!htb]
    \centering
    \includegraphics[width=0.8\linewidth]{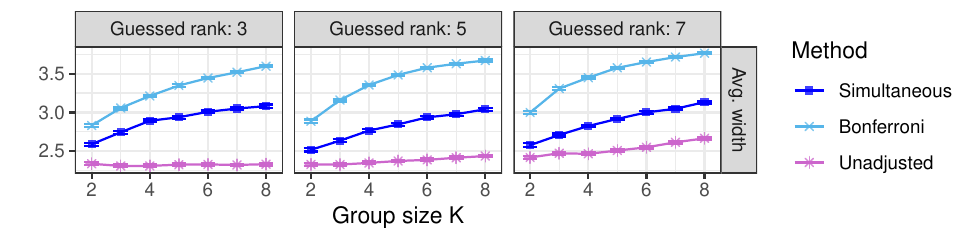}
    \caption{Performance of simultaneous conformal inference and benchmarks on the MovieLens 100K data.
The {\em unadjusted} baseline is designed to achieve valid coverage for a single user at a time, and is generally invalid for a group of $K \geq 2$ users. The output confidence regions are assessed in terms of their average width (lower is better), as a function of the number of users $K$ in the group. Other details are as in Figure~\ref{fig:movielens-preview}.}
    \label{fig:movielens-fullmiss}
\end{figure}

\FloatBarrier

\section{Mathematical Proofs} \label{app:proofs}

\subsection{A General Quantile Inflation Lemma} \label{app:general-quantile}

\begin{proof}[Proof of Lemma~\ref{lem:general_quant}]
The proof follows the same strategy as that of \citet{tibshirani-covariate-shift-2019}.
Let $E_z$ denote the event that $\{Z_1,\ldots,Z_{n+1}\}=\{z_1,\ldots,z_{n+1}\}$, for some possible realization $z = (z_1,\ldots,z_{n+1})$ of $Z_1,\ldots,Z_{n+1}$,
and let $v_i=\calS(z_i, z_{-i} )$ for all $i \in [n+1]$.
By the definition of conditional probability, for each $i \in [n+1]$,
$$
\mathbb{P}\{V_{n+1} = v_i \mid E_z\} = \mathbb{P}\{Z_{n+1} = z_i \mid E_z\}
= \frac{\sum_{\sigma : \sigma(n+1)=i} f(z_{\sigma(1)},\ldots, z_{\sigma(n+1)})}
{\sum_\sigma f(z_{\sigma(1)},\ldots, z_{\sigma(n+1)})}
= p^f_i(z_1,\ldots,z_{n+1}),
$$
where $\sigma$ is a permutation of $[n+1]$.
In other words,
$$
V_{n+1} \mid E_z \sim \sum_{i=1}^{n+1} p^f_i(z_1,\ldots,z_{n+1})\delta_{v_i},
$$
where $\delta_{v_i}$ denotes a point mass at $v_i$.
This  implies that
$$
\mathbb{P}\bigg\{ V_{n+1} \leq \mathrm{Quantile}\bigg(\beta; \, \sum_{i=1}^{n+1}
p^f_i(z_1,\ldots,z_{n+1})\delta_{v_i}\bigg) \mid E_z\bigg\}
\geq \beta,
$$
which is equivalent to
$$
\mathbb{P}\bigg\{ V_{n+1} \leq \mathrm{Quantile}\bigg(\beta; \, \sum_{i=1}^{n+1}
p^f_i(Z_1,\ldots,Z_{n+1}) \delta_{V_i}\bigg) \mid E_z\bigg\}
\geq \beta.
$$
Finally, marginalizing over $E_z$ leads to
$$
\mathbb{P}\bigg\{ V_{n+1} \leq \mathrm{Quantile}\bigg(\beta; \, \sum_{i=1}^{n+1}
p^f_i(Z_1,\ldots,Z_{n+1}) \delta_{V_i}\bigg) \bigg\} \geq \beta.
$$
This is equivalent to the desired result because, by Lemma~\ref{lemma:quantile-infty},
\begin{align*}
  & V_{n+1} \leq \mathrm{Quantile}\bigg(\beta; \, \sum_{i=1}^{n} p^f_i(Z_1,\ldots,Z_{n+1}) \delta_{V_i} + p^f_{n+1}(Z_1,\ldots,Z_{n+1}) \delta_{V_{n+1}}\bigg) \\
  & \qquad \Longleftrightarrow \\
  & V_{n+1} \leq \mathrm{Quantile}\bigg(\beta; \, \sum_{i=1}^{n} p^f_i(Z_1,\ldots,Z_{n+1}) \delta_{V_i} + p^f_{n+1}(Z_1,\ldots,Z_{n+1}) \delta_{\infty}\bigg).
\end{align*}

\end{proof}

\begin{lemma}[also appearing implicitly in \citet{tibshirani-covariate-shift-2019}] \label{lemma:quantile-infty}
Consider $n+1$ random variables $V_1,\ldots,V_{n+1}$ and some weights $p_1, \ldots, p_{n+1}$ such that $p_i >0$ and $\sum_{i=1}^{n+1} p_i=1$.
Then, for any $\beta \in (0,1)$,
\begin{align*}
V_{n+1} \leq Q \big(\beta; \sum_{i=1}^{n+1} p_i \delta_{V_i} \big) \iff V_{n+1} \leq Q \big(\beta; \sum_{i=1}^{n} p_i \delta_{V_i} + p_{n+1}\delta_{\infty}\big).
\end{align*}

\end{lemma}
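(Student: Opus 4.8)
\textbf{Proof proposal for Lemma~\ref{lemma:quantile-infty}.}

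The plan is to argue the equivalence directly from the definition of the quantile $Q(\beta; F) = \inf\{s \in \mathbb{R} : \mathbb{P}_{S \sim F}(S \le s) \ge \beta\}$, exploiting the fact that replacing the atom $\delta_{V_{n+1}}$ by $\delta_\infty$ only changes the distribution function at values $s \ge V_{n+1}$, and in a way that cannot affect whether the event $\{V_{n+1} \le Q(\beta; \cdot)\}$ holds. Write $F_1 = \sum_{i=1}^{n+1} p_i \delta_{V_i}$ and $F_2 = \sum_{i=1}^{n} p_i \delta_{V_i} + p_{n+1}\delta_\infty$, and let $G_1, G_2$ denote their (random) CDFs. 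The key observation is that $G_1(s) = G_2(s) + p_{n+1}\mathbbm{1}\{s \ge V_{n+1}\}$ for all $s \in \mathbb{R}$, since the two distributions agree except that $F_1$ places mass $p_{n+1}$ at the finite point $V_{n+1}$ whereas $F_2$ places that mass at $+\infty$ (which contributes nothing to any CDF evaluated at a finite $s$).

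The forward direction: suppose $V_{n+1} \le Q(\beta; F_1)$. I would show $V_{n+1} \le Q(\beta; F_2)$ by contradiction. If instead $V_{n+1} > Q(\beta; F_2)$, pick any $s$ with $Q(\beta; F_2) \le s < V_{n+1}$; then $G_2(s) \ge \beta$ by definition of the quantile (using right-continuity / the infimum characterization), and since $s < V_{n+1}$ we have $G_1(s) = G_2(s) \ge \beta$, hence $Q(\beta; F_1) \le s < V_{n+1}$, contradicting the hypothesis. Conversely, suppose $V_{n+1} \le Q(\beta; F_2)$; I want $V_{n+1} \le Q(\beta; F_1)$. For any $s < V_{n+1}$ we have $G_1(s) = G_2(s)$, and since $s < V_{n+1} \le Q(\beta; F_2)$ we get $G_2(s) < \beta$, so $G_1(s) < \beta$, meaning no $s < V_{n+1}$ can satisfy $G_1(s) \ge \beta$; therefore $Q(\beta; F_1) \ge V_{n+1}$, which is the claim. (A small care point: one must handle the boundary case $s = Q(\beta;F_2)$ and ties among the $V_i$ using the left-continuity-type inequality $G(s) < \beta$ for $s < Q(\beta; F)$, which follows directly from the infimum definition.)

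The main obstacle, such as it is, is purely bookkeeping around the strict versus non-strict inequalities in the infimum definition of $Q$ and the possibility of ties among $V_1, \dots, V_{n+1}$ — one has to be careful that "$V_{n+1} \le Q(\beta; F)$" is characterized cleanly by "for all $s < V_{n+1}$, $G(s) < \beta$". Once that characterization is pinned down, both implications are immediate from the identity $G_1(s) = G_2(s)$ for $s < V_{n+1}$. I would state that characterization as a one-line sub-observation and then the two directions each take two or three lines. No heavy machinery is needed; the result is essentially the same manipulation already used implicitly by \citet{tibshirani-covariate-shift-2019}, and it is invoked at the end of the proof of Lemma~\ref{lem:general_quant} precisely to convert the atom at $V_{n+1}$ into an atom at $\infty$.
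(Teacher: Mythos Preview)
Your proposal is correct and takes essentially the same approach as the paper: both arguments hinge on the fact that the two weighted empirical distributions agree on $(-\infty, V_{n+1})$, so the event $\{V_{n+1} \le Q(\beta;\cdot)\}$ is unaffected by moving the atom at $V_{n+1}$ to $\infty$. The only cosmetic difference is that the paper handles the forward direction via the monotonicity $Q(\beta;F_2) \ge Q(\beta;F_1)$ and the reverse direction (in contrapositive form) by writing the quantile explicitly as an order statistic $V_{(j)}$, whereas you argue both directions uniformly through the CDF characterization $V_{n+1} \le Q(\beta;F) \iff G(s) < \beta$ for all $s < V_{n+1}$; your presentation is slightly more symmetric, but the mathematical content is identical.
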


\begin{proof}[Proof of Lemma~\ref{lemma:quantile-infty}]
This result was previously utilized by \citet{tibshirani-covariate-shift-2019} and a proof is included here for completeness.
It is straightforward to establish one direction of the result, namely
\begin{align*}
V_{n+1} \leq Q \big(\beta; \sum_{i=1}^{n+1} p_i \delta_{V_i} \big) \Longrightarrow V_{n+1} \leq Q \big(\beta; \sum_{i=1}^{n} p_i \delta_{V_i} + p_{n+1}\delta_{\infty}\big),
\end{align*}
because, almost surely, $V_{n+1} \leq \infty$, and hence
\begin{align*}
  Q \big(\beta; \sum_{i=1}^{n} p_i \delta_{V_i} + p_{n+1}\delta_{\infty}\big) \geq Q \big(\beta; \sum_{i=1}^{n+1} p_i \delta_{V_i} \big).
\end{align*}

To prove the other direction, suppose $V_{n+1} > Q \big(\beta; \sum_{i=1}^{n+1} p_i \delta_{V_i} \big)$.
By definition of the quantile function, we can write without loss of generality that $Q \big(\beta; \sum_{i=1}^{n+1} p_i \delta_{V_i} \big)= V_{(j)}$,
where $j \in [n+1]$ is defined such that
\begin{align*}
& p_{(1)}+\ldots+p_{(j)}\geq \beta,
& p_{(1)}+\ldots+p_{(j-1)} < \beta,
\end{align*}
where $p_{(1)} \leq \ldots p_{(n+1)}$ are the order statistics of $p_1,\ldots,p_{n+1}$.
Therefore, $V_{n+1} > V_{(j)}$, and re-assigning $V_{n+1} \to \infty$ does not change $V_{(j)}$. This means that $Q \big(\beta; \sum_{i=1}^{n+1} p_i \delta_{V_i} \big)= Q \big(\beta; \sum_{i=1}^{n} p_i \delta_{V_i} + p_{n+1}\delta_{\infty}\big)$, leading to $V_{n+1} > Q \big(\beta; \sum_{i=1}^{n} p_i \delta_{V_i} + p_{n+1} \delta_{\infty} \big)$.
Thus, we have shown that
\begin{align*}
V_{n+1} > Q \big(\beta; \sum_{i=1}^{n+1} p_i \delta_{V_i} \big) \Longrightarrow V_{n+1} > Q \big(\beta; \sum_{i=1}^{n} p_i \delta_{V_i} + p_{n+1}\delta_{\infty}\big).
\end{align*}

\end{proof}

\begin{proof}[Proof of Lemma~\ref{lem:partial_quant}]
Let $E_z$ denote the event that $\{Z_1,\ldots,Z_{n+1}\}=\{z_1,\ldots,z_{n+1}\}$, for some possible realization $z = (z_1,\ldots,z_{n+1})$ of $Z_1,\ldots,Z_{n+1}$,
and let $v_i=\calS(z_i, z_{-i} )$ for all $i \in [n+1]$.
As in the proof of Lemma~\ref{lem:general_quant}, for each $i \in [n+1]$,
$$
\mathbb{P}\{V_{n+1} = v_i \mid E_z\} = \mathbb{P}\{Z_{n+1} = z_i \mid E_z\}
= \frac{\sum_{\sigma : \sigma(n+1)=i} f(z_{\sigma(1)},\ldots, z_{\sigma(n+1)})}
{\sum_\sigma f(z_{\sigma(1)},\ldots, z_{\sigma(n+1)})}.
$$
Further, because $Z_1,\ldots,Z_{n+1}$ are also leave-one-out exchangeable,
\begin{align*}
\frac{\sum_{\sigma : \sigma(n+1)=i} f(z_{\sigma(1)},\ldots, z_{\sigma(n+1)})}
{\sum_\sigma f(z_{\sigma(1)},\ldots, z_{\sigma(n+1)})} &=
\frac{\sum_{\sigma : \sigma(n+1)=i}
  g(z_{\sigma(1)},\ldots,z_{\sigma(n+1)})
  \cdot  h (z_{\sigma(1)},\ldots,z_{\sigma(n+1)})
  }
{\sum_\sigma
  g(z_{\sigma(1)},\ldots,z_{\sigma(n+1)})
    \cdot  h (z_{\sigma(1)},\ldots,z_{\sigma(n+1)})
  }  \\
&= \frac{\sum_{\sigma : \sigma(n+1)=i}
  g(z_{1},\ldots,z_{n+1})
  \cdot \bar h(z_{-i},z_i)
  }
{\sum_\sigma
  g(z_{1},\ldots,z_{n+1})
    \cdot \bar h(z_{-\sigma(n+1)},z_{\sigma(n+1)})
  } \\
  &= \frac{\sum_{\sigma : \sigma(n+1)=i} \bar h(z_{-i},z_i)
  }
{\sum_{j=1}^{n+1} \sum_{\sigma : \sigma(n+1)=j} \bar h(z_{-j},z_{j})
  } \\
&= \frac{n! \bar h(z_{-i},z_i)
  }
{n! \sum_{j=1}^{n+1} \bar h(z_{-j},z_{j})
  }
         = p_i(z_1,\ldots,z_{n+1}),
\end{align*}
which implies
$
V_{n+1} \mid E_z \sim \sum_{i=1}^{n+1} p_i(z_1,\ldots,z_{n+1})\delta_{v_i}.
$
The rest of the proof then follows with the same approach as the proof of Lemma~\ref{lem:general_quant}.
\end{proof}

\subsection{Conformal Inference with Structured Calibration} \label{app:paired-conformalization}

\begin{proof}[Proof of Proposition~\ref{prop:paired-calibration-partial-exch}]

This result is a direct consequence of Proposition~\ref{prop:jointdist-K}, which characterizes the joint distribution of $(\mat{X}^{\mathrm{cal}}_{1}, \dots, \mat{X}^{\mathrm{cal}}_{n}, \mat{X}^*)$ conditional on $\cD_{\mathrm{prune}}$ and $\cD_{\mathrm{train}}$.
It is easy to see from~\eqref{eq:joint-dist-k} that this distribution is invariant to permutations of $\mat{X}^{\mathrm{cal}}_{1}, \dots, \mat{X}^{\mathrm{cal}}_{n}$.
\end{proof}

\begin{proposition} \label{prop:jointdist-K}
Let $\cD_{\mathrm{obs}}$ and $\cD_{\mathrm{miss}}$ be subsets of observed and missing entries, respectively, sampled according to~\eqref{model:sampling_wo_replacement}.
Consider $\mat{X}^*$ sampled according to~\eqref{model:corner-weighted_test_generation} conditional on $\cD_{\mathrm{obs}}$.
Suppose $\mat{X}^{\mathrm{cal}}_{1}, \dots, \mat{X}^{\mathrm{cal}}_{n}$ are the calibration groups output by Algorithm~\ref{alg:calibration-group}, while $\cD_{\mathrm{prune}}$ and $\cD_{\mathrm{train}}$ are the corresponding training and pruned subsets.
Let $D_{1}$ and $D_{0} \subseteq D_{1}$ denote arbitrary realizations of $\cD_{\mathrm{train}}$ and $\cD_{\mathrm{prune}}$, respectively.
Let $\mat{x}_{1}, \mat{x}_{2}, \ldots, \mat{x}_{n}, \mat{x}_{n+1}$ be any sequence of $n+1$ $K$-groups involving elements of $\cD_{\mathrm{obs}}$ such that
\begin{align*}
\P{\mat{X}^{\mathrm{cal}}_{1} = \mat{x}_{1}, \mat{X}^{\mathrm{cal}}_{2} = \mat{x}_{2}, \ldots, \mat{X}^{\mathrm{cal}}_{n} = \mat{x}_{n}, \mat{X}^* = \mat{x}_{n+1} \mid \cD_{\mathrm{prune}} = D_{0}, \cD_{\mathrm{train}} = D_{1} } > 0.
\end{align*}
Define also $D_2 = \cup_{i\in[n], k\in[K]} \cbrac{x^k_{i}}$, the unordered collection of matrix entries indexed by the groups $\mat{x}_{1}, \ldots, \mat{x}_{n}$, noting that $\cD_{\mathrm{obs}} = D_1 \cup D_2$. Recall from model~\eqref{model:corner-weighted_test_generation}, $\bar{\cD}_{\mathrm{miss}} = \cbrac{(r,c) \in \cD_{\mathrm{miss}} : n^c_{\mathrm{miss}} \geq K}$ is the pruned missing set after removing columns with fewer than $K$ entries,
where $n^c_{\mathrm{miss}} = \abs{\cbrac{(r',c') \in \cD_{\mathrm{miss}} :  c' = c}}$ is the number of missing entries in column $c$.
Further, define
    \begin{align} \label{eq:def-n-bar}
    \begin{split}
      \bar{\cD}_{\mathrm{miss}}^c & \coloneqq \cbrac{(r',c')\in \bar{\cD}_{\mathrm{miss}} \mid c'=c}, \qquad \forall c \in [n_c], \\
      \bar{n}_{\mathrm{obs}} & \coloneqq \abs*{\cD_{\mathrm{obs}} \setminus \cD_{\mathrm{prune}}}, \\
      \bar{n}^c_{\mathrm{obs}} & \coloneqq \abs{\cbrac{(r',c') \in \cD_{\mathrm{obs}} \setminus \cD_{\mathrm{prune}}: c' = c}}, \qquad \forall c \in [n_c], \\
      N_n^c & \coloneqq |\{i \in [n]: c = x_{i,1,2}=x_{i,2,2}=\ldots=x_{i,K,2}\}|, \qquad \forall c \in [n_c].
    \end{split}
    \end{align}
Intuitively, $\bar{\cD}_{\mathrm{miss}}^c$ represents the pruned missing indices in column $c$, $\bar{n}^c_{\mathrm{obs}}$ is the number of indices in $\cD_{\mathrm{obs}} \setminus \cD_{\mathrm{prune}}$ corresponding to entries in column $c$, while $N_n^c$ is the number of calibration groups in column $c$.
Note that all quantities in~\eqref{eq:def-n-bar} are uniquely determined by $D_0$ and $D_2$.
Then,
\begin{align}\label{eq:joint-dist-k}
\begin{split}
    & \mathbb{P}\paren*{\mat{X}^{\mathrm{cal}}_{1}=\mat{x}_{1}, \dots, \mat{X}^{\mathrm{cal}}_{n}=\mat{x}_{n}, \mat{X}^*=\mat{x}_{n+1} \mid  \cD_{\mathrm{prune}}=D_0, \cD_{\mathrm{train}}=D_1} \\
    & = \left[ \frac{w^*_{x_{n+1,1}}}{\sum_{(r,c) \in \bar{\cD}_{\mathrm{miss}}} w^*_{r,c}} \cdot
      \prod_{k=2}^{K} \frac{w^*_{x_{n+1,k}} }{\sum_{(r,c) \in \bar{\cD}_{\mathrm{miss}}} w^*_{r,c} \I{c = x_{n+1,1,2}} - \sum_{k'=1}^{k-1} w^*_{x_{n+1,k'}}} \right] \\
  & \quad \cdot \mathbb{P}\paren*{\cD_{\mathrm{obs}}= D_1 \cup D_2} \cdot \left[ \frac{1 }{ \mathbb{P}\paren*{ \cD_{\mathrm{prune}}=D_0, \cD_{\mathrm{train}}=D_1}} \cdot \prod_{c \in [n_c]} \frac{1}{ \mbinom{n^c_{\mathrm{obs}}}{ \bar{n}^c_{\mathrm{obs}}}} \right] \\
  & \quad \cdot \left( \prod_{i=1}^{n} \frac{1}{\bar{n}_{\mathrm{obs}}-K(i-1)} \right) \cdot \prod_{c=1}^{n_c} \prod_{j=1}^{N_n^c} \prod_{k=1}^{K-1} \frac{1}{\bar{n}^c_{\mathrm{obs}}-K(j-1)-k}.
\end{split}
\end{align}

\end{proposition}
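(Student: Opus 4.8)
\textbf{Proof plan for Proposition~\ref{prop:jointdist-K}.}
The plan is to compute the target conditional probability by expanding it as a product over the sequential sampling steps of Algorithm~\ref{alg:calibration-group} together with the test-group model~\eqref{model:corner-weighted_test_generation_sequential}, and then to reorganize the resulting expression into the three bracketed blocks displayed in~\eqref{eq:joint-dist-k}. First I would write, using the definition of conditional probability,
\begin{align*}
  & \mathbb{P}\bigl(\mat{X}^{\mathrm{cal}}_{1}=\mat{x}_{1}, \dots, \mat{X}^{\mathrm{cal}}_{n}=\mat{x}_{n}, \mat{X}^*=\mat{x}_{n+1} \mid \cD_{\mathrm{prune}}=D_0, \cD_{\mathrm{train}}=D_1\bigr) \\
  & \quad = \frac{\mathbb{P}\bigl(\cD_{\mathrm{prune}}=D_0, \cD_{\mathrm{obs}}=D_1\cup D_2, \mat{X}^{\mathrm{cal}}_{1}=\mat{x}_{1}, \dots, \mat{X}^{\mathrm{cal}}_{n}=\mat{x}_{n}, \mat{X}^*=\mat{x}_{n+1}\bigr)}{\mathbb{P}\bigl(\cD_{\mathrm{prune}}=D_0, \cD_{\mathrm{train}}=D_1\bigr)},
\end{align*}
noting that the event $\{\mat{X}^{\mathrm{cal}}_i=\mat{x}_i,\ \forall i\}$ together with $\cD_{\mathrm{prune}}=D_0$ forces $\cD_{\mathrm{obs}}=D_1\cup D_2$ and $\cD_{\mathrm{train}}=D_1$. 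I would then factor the numerator as a product of four independent stages: (i) $\mathbb{P}(\cD_{\mathrm{obs}}=D_1\cup D_2)$, the probability of observing exactly those matrix entries under model~\eqref{model:sampling_wo_replacement}; (ii) the probability that the pruning step of Algorithm~\ref{alg:calibration-group} selects exactly $D_0$, which, because the $m^c$ pruned entries in each column are chosen uniformly without replacement, equals $\prod_{c}1/\binom{n^c_{\mathrm{obs}}}{\bar n^c_{\mathrm{obs}}}$; (iii) the probability that the $n$ calibration groups are drawn in the order $\mat{x}_1,\dots,\mat{x}_n$ from $\cD_{\mathrm{avail}}=\cD_{\mathrm{obs}}\setminus\cD_{\mathrm{prune}}$ via $\Psi^{\mathrm{col}}$ with unit weights; and (iv) the probability that $\mat{X}^*=\mat{x}_{n+1}$ under~\eqref{model:corner-weighted_test_generation_sequential}, which gives the first bracket.

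The bulk of the work is stage (iii). Here I would use the sequential description in Algorithm~\ref{alg:calibration-group}, line~13: the first index of group $i$ is drawn uniformly from the currently available pool, and the remaining $K-1$ indices of group $i$ are drawn uniformly without replacement from the remaining available entries \emph{in the same column}. Drawing the first index of group $i$ contributes a factor $1/(\bar n_{\mathrm{obs}}-K(i-1))$ (the pool has shrunk by $K$ per completed group), which after multiplying over $i\in[n]$ yields $\prod_{i=1}^{n}1/(\bar n_{\mathrm{obs}}-K(i-1))$. For the within-column completion of a group that happens to be the $j$-th group drawn in column $c$, the $k$-th of the remaining $K-1$ draws is uniform over a column-$c$ pool of size $\bar n^c_{\mathrm{obs}}-K(j-1)-k$, giving the factor $\prod_{c}\prod_{j=1}^{N_n^c}\prod_{k=1}^{K-1}1/(\bar n^c_{\mathrm{obs}}-K(j-1)-k)$. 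The key bookkeeping point is that these column-$c$ denominators depend only on $c$, on the number $N_n^c$ of groups landing in column $c$, and on their \emph{order} of appearance among the $n$ draws, but the product over $j$ telescopes so that the final expression depends only on $N_n^c$ and not on which of the $n$ groups they were — this is exactly what makes~\eqref{eq:joint-dist-k} manifestly permutation-invariant in $\mat{x}_1,\dots,\mat{x}_n$. For stage (iv) I would simply read off~\eqref{model:corner-weighted_test_generation_sequential}: the first draw contributes $w^*_{x_{n+1,1}}/\sum_{(r,c)\in\bar\cD_{\mathrm{miss}}}w^*_{r,c}$, and each subsequent draw $k\ge 2$ contributes $w^*_{x_{n+1,k}}$ over the column-restricted normalizing sum with the previously drawn $w^*$-masses removed, which is the displayed first bracket.

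Assembling: the $\mathbb{P}(\cD_{\mathrm{obs}}=D_1\cup D_2)$ from stage (i) and the $1/\mathbb{P}(\cD_{\mathrm{prune}}=D_0,\cD_{\mathrm{train}}=D_1)$ from the denominator together with stage (ii)'s $\prod_c 1/\binom{n^c_{\mathrm{obs}}}{\bar n^c_{\mathrm{obs}}}$ form the second bracket of~\eqref{eq:joint-dist-k}; stage (iii)'s two products form the third bracket; stage (iv) is the first bracket. This matches~\eqref{eq:joint-dist-k} term by term. The main obstacle I anticipate is stage (iii): one has to argue carefully that the sequential uniform-without-replacement draws can be decomposed cleanly into a ``choose the column/first entry'' factor and a ``complete the group within the column'' factor, and that the resulting column-$c$ product over the $j=1,\dots,N_n^c$ groups in that column is insensitive to the interleaving of columns across the $n$ steps — i.e., that the size of the column-$c$ pool when its $j$-th group is drawn is always $\bar n^c_{\mathrm{obs}}-K(j-1)$ regardless of what happened in other columns in between. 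Establishing this invariance (a consequence of the fact that groups in distinct columns draw from disjoint sub-pools) is the crux; once it is in hand, the remaining algebra is routine, and the permutation-invariance claim needed for Proposition~\ref{prop:paired-calibration-partial-exch} is immediate from the final form.
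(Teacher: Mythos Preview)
Your proposal is correct and follows essentially the same approach as the paper's proof. The only organizational difference is that the paper first splits off $\mat{X}^*$ via the chain rule and then isolates your stage~(iii) as a standalone lemma (Lemma~\ref{lemma:cal-distribution}) proved by induction on $n$, whereas you write the joint numerator at once and argue stage~(iii) directly via the disjoint-column-pool observation; these are equivalent presentations of the same computation.
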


\begin{proof}[Proof of Proposition~\ref{prop:jointdist-K}]
First, note that
\begin{align}
\begin{split} \label{eq:paired-partial-exch-1}
    & \mathbb{P}\paren*{\mat{X}^{\mathrm{cal}}_{1}=\mat{x}_{1}, \dots, \mat{X}^{\mathrm{cal}}_{n}=\mat{x}_{n}, \mat{X}^*=\mat{x}_{n+1} \mid  \cD_{\mathrm{prune}}=D_0, \cD_{\mathrm{train}}=D_1} \\
    &= \mathbb{P}\paren*{ \mat{X}^*=\mat{x}_{n+1} \mid \cD_{\mathrm{prune}}=D_0, \cD_{\mathrm{train}}=D_1, \mat{X}^{\mathrm{cal}}_{1}=\mat{x}_{1}, \dots, \mat{X}^{\mathrm{cal}}_{n}=\mat{x}_{n}} \\
    & \quad\cdot \mathbb{P}\paren*{\mat{X}^{\mathrm{cal}}_{1}=\mat{x}_{1}, \dots, \mat{X}^{\mathrm{cal}}_{n}=\mat{x}_{n} \mid  \cD_{\mathrm{prune}}=D_0, \cD_{\mathrm{train}}=D_1} \\
    &= \mathbb{P}\paren*{ \mat{X}^*=\mat{x}_{n+1} \mid \cD_{\mathrm{obs}} = D_1 \cup D_2} \\
    & \quad\cdot \mathbb{P}\paren*{\mat{X}^{\mathrm{cal}}_{1}=\mat{x}_{1}, \dots, \mat{X}^{\mathrm{cal}}_{n}=\mat{x}_{n} \mid  \cD_{\mathrm{prune}}=D_0, \cD_{\mathrm{train}}=D_1} \\
    & = \left[ \frac{w^*_{x_{n+1,1}}}{\sum_{(r,c) \in \bar{\cD}_{\mathrm{miss}}} w^*_{r,c}} \cdot
      \prod_{k=2}^{K} \frac{w^*_{x_{n+1,k}} }{\sum_{(r,c) \in \bar{\cD}_{\mathrm{miss}}} w^*_{r,c} \I{c = x_{n+1,1,2}} - \sum_{k'=1}^{k-1} w^*_{x_{n+1,k'}}} \right] \\
    & \quad \cdot  \mathbb{P}\paren*{\mat{X}^{\mathrm{cal}}_{1}=\mat{x}_{1}, \dots, \mat{X}^{\mathrm{cal}}_{n}=\mat{x}_{n} \mid  \cD_{\mathrm{prune}}=D_0, \cD_{\mathrm{train}}=D_1},
  \end{split}
\end{align}
where the first term on the right-hand-side above was written explicitly using the sequential sampling characterization of $\Psi^{\text{col}}$ in~\eqref{model:weighted_test_generation_sequential}.

Next, we focus on the second term on the right-hand-side of~\eqref{eq:paired-partial-exch-1}:
\begin{align} \label{eq:paired-partial-exch-2}
  \begin{split}
    &\mathbb{P}\paren*{\mat{X}^{\mathrm{cal}}_{1}=\mat{x}_{1}, \dots, \mat{X}^{\mathrm{cal}}_{n}=\mat{x}_{n} \mid  \cD_{\mathrm{prune}}=D_0, \cD_{\mathrm{train}}=D_1}\\
    & \quad = \frac{\mathbb{P}\paren*{\mat{X}^{\mathrm{cal}}_{1}=\mat{x}_{1}, \dots, \mat{X}^{\mathrm{cal}}_{n}=\mat{x}_{n}, \cD_{\mathrm{prune}}=D_0, \cD_{\mathrm{train}}=D_1}}{\mathbb{P}\paren*{ \cD_{\mathrm{prune}}=D_0, \cD_{\mathrm{train}}=D_1}}\\
        &= \frac{\mathbb{P}\paren*{\mat{X}^{\mathrm{cal}}_{1}=\mat{x}_{1}, \dots, \mat{X}^{\mathrm{cal}}_{n}=\mat{x}_{n}, \cD_{\mathrm{prune}}=D_0, \cD_{\mathrm{train}}=D_1, \cD_{\mathrm{obs}}= D_1 \cup D_2}}{ \mathbb{P}\paren*{ \cD_{\mathrm{prune}}=D_0, \cD_{\mathrm{train}}=D_1}}\\
         &= \mathbb{P}\paren*{\mat{X}^{\mathrm{cal}}_{1}=\mat{x}_{1}, \dots, \mat{X}^{\mathrm{cal}}_{n}=\mat{x}_{n}, \cD_{\mathrm{train}}=D_1 \mid \cD_{\mathrm{prune}} = D_0,  \cD_{\mathrm{obs}}= D_1 \cup D_2} \\
         & \textcolor{white}{=} \cdot \frac{\mathbb{P}\paren*{ \cD_{\mathrm{prune}} = D_0, \cD_{\mathrm{obs}}= D_1 \cup D_2 }}{ \mathbb{P}\paren*{ \cD_{\mathrm{prune}}=D_0, \cD_{\mathrm{train}}=D_1}} \\
         &= \mathbb{P}\paren*{\mat{X}^{\mathrm{cal}}_{1}=\mat{x}_{1}, \dots, \mat{X}^{\mathrm{cal}}_{n}=\mat{x}_{n} \mid \cD_{\mathrm{prune}} = D_0,  \cD_{\mathrm{obs}}= D_1 \cup D_2 } \\
         & \textcolor{white}{=} \cdot \mathbb{P}\paren*{\cD_{\mathrm{train}}=D_1 \mid \mat{X}^{\mathrm{cal}}_{1}=\mat{x}_{1}, \dots, \mat{X}^{\mathrm{cal}}_{n}=\mat{x}_{n}, \cD_{\mathrm{prune}} = D_0,  \cD_{\mathrm{obs}}= D_1 \cup D_2} \\
         & \textcolor{white}{=} \cdot \frac{\mathbb{P}\paren*{ \cD_{\mathrm{prune}} = D_0, \cD_{\mathrm{obs}}= D_1 \cup D_2 }}{ \mathbb{P}\paren*{ \cD_{\mathrm{prune}}=D_0, \cD_{\mathrm{train}}=D_1}} \\
         &= \mathbb{P}\paren*{\mat{X}^{\mathrm{cal}}_{1}=\mat{x}_{1}, \dots, \mat{X}^{\mathrm{cal}}_{n}=\mat{x}_{n} \mid \cD_{\mathrm{prune}} = D_0,  \cD_{\mathrm{obs}}= D_1 \cup D_2 } \\
         & \textcolor{white}{=} \cdot \frac{\mathbb{P}\paren*{ \cD_{\mathrm{prune}} = D_0, \cD_{\mathrm{obs}}= D_1 \cup D_2}}{ \mathbb{P}\paren*{ \cD_{\mathrm{prune}}=D_0, \cD_{\mathrm{train}}=D_1}},
  \end{split}
\end{align}
    where the last equality above follows from the fact that $\cD_{\mathrm{train}}$ is uniquely determined by $\mat{X}_{1},\ldots,\mat{X}_{n}$,  $\cD_{\mathrm{prune}}$ and $\cD_{\mathrm{obs}}$.

    The first term on the right-hand-side of~\eqref{eq:paired-partial-exch-2} is given by Lemma~\ref{lemma:cal-distribution}:
    \begin{align} \label{eq:paired-partial-exch-3}
    \begin{split}
    &\mathbb{P}\paren*{\mat{X}^{\mathrm{cal}}_{1}=\mat{x}_{1}, \dots, \mat{X}^{\mathrm{cal}}_{n}=\mat{x}_{n} \mid \cD_{\mathrm{prune}}, \cD_{\mathrm{obs}}}\\
    & \quad = \left( \prod_{i=1}^{n} \frac{1}{\bar{n}_{\mathrm{obs}}-K(i-1)} \right) \cdot \prod_{c=1}^{n_c} \prod_{j=1}^{N_n^c} \prod_{k=1}^{K-1} \frac{1}{\bar{n}^c_{\mathrm{obs}}-K(j-1)-k}.
       \end{split}
    \end{align}
    Note that~\eqref{eq:paired-partial-exch-3} implies that, conditional on $\cD_{\mathrm{obs}}$ and $\cD_{\mathrm{prune}}$, the distribution of $\mat{X}^{\mathrm{cal}}_{1}, \ldots, \mat{X}^{\mathrm{cal}}_{n}$ does not depend on the order of these calibration groups.

    Next, we focus on the second term on the right-hand-side of~\eqref{eq:paired-partial-exch-2}, namely
    \begin{align} \label{eq:paired-partial-exch-4}
      \frac{\mathbb{P}\paren*{ \cD_{\mathrm{prune}} = D_0, \cD_{\mathrm{obs}}= D_1 \cup D_2}}{ \mathbb{P}\paren*{ \cD_{\mathrm{prune}}=D_0, \cD_{\mathrm{train}}=D_1}}.
    \end{align}
    The numerator of~\eqref{eq:paired-partial-exch-4} is
    \begin{align}  \label{eq:paired-partial-exch-4-num}
    \begin{split}
        &\mathbb{P}\paren*{ \cD_{\mathrm{prune}} = D_0, \cD_{\mathrm{obs}}= D_1 \cup D_2} \\
        &= \mathbb{P}\paren*{\cD_{\mathrm{obs}}= D_1 \cup D_2} \cdot \mathbb{P}\paren*{ \cD_{\mathrm{prune}} = D_0 \mid \cD_{\mathrm{obs}}= D_1 \cup D_2} \\
        &= \mathbb{P}\paren*{\cD_{\mathrm{obs}}= D_1 \cup D_2} \cdot \prod_{c \in [n_c]} \frac{1}{ \mbinom{n^c_{\mathrm{obs}}}{ m^c}} \\
        &= \mathbb{P}\paren*{\cD_{\mathrm{obs}}= D_1 \cup D_2} \cdot \prod_{c \in [n_c]} \frac{1}{ \mbinom{n^c_{\mathrm{obs}}}{ \bar{n}^c_{\mathrm{obs}}}}.
    \end{split}
    \end{align}
    where $m^c := n^c_{\mathrm{obs}} \mod K$ denotes the remainder of the integer division $n^c_{\mathrm{obs}} /K$ , and $\bar{n}^c_{\mathrm{obs}} = \lfloor n^c_{\mathrm{obs}} / K\rfloor = n^c_{\mathrm{obs}} - m^c$.
    Above, the denominator does not need to be simplified because it only depends on $D_0$ and $D_1$.

    Finally, combining~\eqref{eq:paired-partial-exch-1}, \eqref{eq:paired-partial-exch-2}, \eqref{eq:paired-partial-exch-3}, \eqref{eq:paired-partial-exch-4}, and \eqref{eq:paired-partial-exch-4-num}, we arrive at:
\begin{align*}
    & \mathbb{P}\paren*{\mat{X}^{\mathrm{cal}}_{1}=\mat{x}_{1}, \dots, \mat{X}^{\mathrm{cal}}_{n}=\mat{x}_{n}, \mat{X}^*=\mat{x}_{n+1} \mid  \cD_{\mathrm{prune}}=D_0, \cD_{\mathrm{train}}=D_1} \\
    & = \left[ \frac{w^*_{x_{n+1,1}}}{\sum_{(r,c) \in \bar{\cD}_{\mathrm{miss}}} w^*_{r,c}} \cdot
      \prod_{k=2}^{K} \frac{w^*_{x_{n+1,k}} }{\sum_{(r,c) \in \bar{\cD}_{\mathrm{miss}}} w^*_{r,c} \I{c = x_{n+1,1,2}} - \sum_{k'=1}^{k-1} w^*_{x_{n+1,k'}}} \right] \\
  & \quad \cdot \mathbb{P}\paren*{\cD_{\mathrm{obs}}= D_1 \cup D_2} \cdot \left[ \frac{1 }{ \mathbb{P}\paren*{ \cD_{\mathrm{prune}}=D_0, \cD_{\mathrm{train}}=D_1}} \cdot \prod_{c \in [n_c]} \frac{1}{ \mbinom{n^c_{\mathrm{obs}}}{\bar{n}^c_{\mathrm{obs}}}} \right] \\
  & \quad \cdot \left( \prod_{i=1}^{n} \frac{1}{\bar{n}_{\mathrm{obs}}-K(i-1)} \right) \cdot \prod_{c=1}^{n_c} \prod_{j=1}^{N_n^c} \prod_{k=1}^{K-1} \frac{1}{\bar{n}^c_{\mathrm{obs}}-K(j-1)-k}.
\end{align*}

\end{proof}

\begin{lemma} \label{lemma:cal-distribution}
Under the same setup as in Proposition~\ref{prop:jointdist-K},
  \begin{align} \label{eq:cal_distribution}
\begin{split}
    &\mathbb{P}\paren*{\mat{X}^{\mathrm{cal}}_{1}=\mat{x}_{1}, \dots, \mat{X}^{\mathrm{cal}}_{n}=\mat{x}_{n} \mid \cD_{\mathrm{prune}}, \cD_{\mathrm{obs}}}\\
    & \quad = \left( \prod_{i=1}^{n} \frac{1}{\bar{n}_{\mathrm{obs}}-K(i-1)} \right) \cdot \prod_{c=1}^{n_c} \prod_{j=1}^{N_n^c} \prod_{k=1}^{K-1} \frac{1}{\bar{n}^c_{\mathrm{obs}}-K(j-1)-k}.
  \end{split}
  \end{align}
\end{lemma}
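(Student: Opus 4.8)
The plan is to evaluate the probability in~\eqref{eq:cal_distribution} directly, by unrolling the for-loop of Algorithm~\ref{alg:calibration-group} and applying the chain rule. Conditioning on $\cD_{\mathrm{obs}}$ and $\cD_{\mathrm{prune}}$ pins down the set $\cD_{\mathrm{avail}} = \cD_{\mathrm{obs}} \setminus \cD_{\mathrm{prune}}$ from which the calibration groups are carved out, and the second loop of Algorithm~\ref{alg:calibration-group} (which samples the groups) uses internal randomness independent of the pruning step, so this conditional distribution is exactly the one induced by applying that loop to the fixed set $\cD_{\mathrm{avail}}$. By construction of the pruning step, $\cD_{\mathrm{prune}}$ holds exactly $m^c = n^c_{\mathrm{obs}} \bmod K$ entries of column $c$, hence $\cD_{\mathrm{avail}}$ holds $\bar{n}^c_{\mathrm{obs}} = n^c_{\mathrm{obs}} - m^c$ entries in column $c$ and $\bar{n}_{\mathrm{obs}} = \sum_c \bar{n}^c_{\mathrm{obs}}$ entries overall. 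Fixing a target sequence $\mat{x}_1, \dots, \mat{x}_n$ with positive probability (as in Proposition~\ref{prop:jointdist-K}) and writing $c_i$ for the common column of the entries of $\mat{x}_i$, I would track the state deterministically: setting $\cD_{\mathrm{avail}}^{(i)} := \cD_{\mathrm{avail}} \setminus \bigcup_{i' < i} \{x_{i',1}, \dots, x_{i',K}\}$ for the set available just before the $i$-th draw, each earlier draw removed exactly $K$ entries, all from a single column, so $\abs{\cD_{\mathrm{avail}}^{(i)}} = \bar{n}_{\mathrm{obs}} - K(i-1)$ and the number of entries of $\cD_{\mathrm{avail}}^{(i)}$ in any column $c$ equals $\bar{n}^c_{\mathrm{obs}} - K\,\abs{\{i' < i : c_{i'} = c\}}$.

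Next I would compute the conditional probability of the $i$-th step. By the sequential description of $\Psi^{\text{col}}$ with unit weights, $X^{\mathrm{cal}}_{i,1}$ is uniform on $\cD_{\mathrm{avail}}^{(i)}$, and given its column $c_i$ the remaining coordinates $X^{\mathrm{cal}}_{i,2}, \dots, X^{\mathrm{cal}}_{i,K}$ are drawn uniformly without replacement among the entries of $\cD_{\mathrm{avail}}^{(i)}$ lying in column $c_i$. Letting $j_i := \abs{\{i' \le i : c_{i'} = c_i\}}$ be the position of group $i$ among the calibration groups in column $c_i$ (so that $j_i - 1$ such groups precede it), the per-column count at that moment is $\bar{n}^{c_i}_{\mathrm{obs}} - K(j_i - 1)$, whence
\begin{align*}
  & \mathbb{P}\paren*{ \mat{X}^{\mathrm{cal}}_i = \mat{x}_i \mid \cD_{\mathrm{prune}}, \cD_{\mathrm{obs}}, \mat{X}^{\mathrm{cal}}_1 = \mat{x}_1, \dots, \mat{X}^{\mathrm{cal}}_{i-1} = \mat{x}_{i-1} } \\
  & \qquad = \frac{1}{\bar{n}_{\mathrm{obs}} - K(i-1)} \cdot \prod_{k=1}^{K-1} \frac{1}{\bar{n}^{c_i}_{\mathrm{obs}} - K(j_i - 1) - k},
\end{align*}
where the positivity hypothesis of Proposition~\ref{prop:jointdist-K} guarantees no denominator vanishes (equivalently, $K N_n^c \le \bar{n}^c_{\mathrm{obs}}$ for every $c$, which also holds automatically since groups are carved from $\cD_{\mathrm{avail}}$ $K$ entries at a time). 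Multiplying over $i = 1, \dots, n$ by the chain rule gives
\begin{align*}
  & \mathbb{P}\paren*{\mat{X}^{\mathrm{cal}}_1 = \mat{x}_1, \dots, \mat{X}^{\mathrm{cal}}_n = \mat{x}_n \mid \cD_{\mathrm{prune}}, \cD_{\mathrm{obs}}} \\
  & \qquad = \paren*{\prod_{i=1}^n \frac{1}{\bar{n}_{\mathrm{obs}} - K(i-1)}} \cdot \prod_{i=1}^n \prod_{k=1}^{K-1} \frac{1}{\bar{n}^{c_i}_{\mathrm{obs}} - K(j_i - 1) - k}.
\end{align*}

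Finally I would regroup the double product over $(i,k)$ according to the column: for a fixed column $c$, as $i$ ranges over $\{i \in [n]: c_i = c\}$ the within-column position $j_i$ runs through $\{1, \dots, N_n^c\}$ bijectively (by drawing order), so the contribution of those indices equals $\prod_{j=1}^{N_n^c}\prod_{k=1}^{K-1} (\bar{n}^c_{\mathrm{obs}} - K(j-1) - k)^{-1}$, and columns receiving no calibration group contribute an empty product; this reproduces exactly the right-hand side of~\eqref{eq:cal_distribution}. I would also note that the resulting expression is invariant to permutations of $\mat{x}_1,\dots,\mat{x}_n$ (it only involves the multiset of columns and their multiplicities), which is the consequence invoked in the proof of Proposition~\ref{prop:paired-calibration-partial-exch}. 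I do not expect a genuine obstacle here; the only point requiring care is the combinatorial bookkeeping that identifies $\abs{\cD_{\mathrm{avail}}^{(i)}}$ and the per-column counts along a fixed sequence with $\bar{n}_{\mathrm{obs}} - K(i-1)$ and $\bar{n}^{c_i}_{\mathrm{obs}} - K(j_i-1)$, and the subsequent reindexing of the product from groups to (column, within-column rank) pairs.
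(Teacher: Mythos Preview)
Your proposal is correct and follows essentially the same route as the paper: both compute the per-step conditional probability of drawing $\mat{X}^{\mathrm{cal}}_i$ from $\cD_{\mathrm{avail}}^{(i)}$ via the sequential description of $\Psi^{\text{col}}$ with unit weights, and then assemble the product. The only cosmetic difference is that the paper packages this as an induction on $n$ (base case $n=2$, then append one more group), whereas you write out the chain rule in one shot and regroup the resulting product by column at the end; the per-step factor and the bookkeeping of the column counts are identical in both arguments.
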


\begin{proof}[Proof of Lemma~\ref{lemma:cal-distribution}]
    We prove this result by induction on the number of calibration groups, $n$.
    For ease of notation, we will denote the column of the $i$-th calibration group as $c_i$, for any $i \in [n]$; that is, $c_{i} = x_{i,k,2}$ for all $k \in [K]$.
    In the base case where $n=2$,
    \begin{align*}
        &\mathbb{P}\paren*{\mat{X}^{\mathrm{cal}}_{1}=\mat{x}_{1}, \mat{X}^{\mathrm{cal}}_{2}=\mat{x}_{2} \mid \cD_{\mathrm{prune}} = D_0, \cD_{\mathrm{obs}}= D_1}\\
        &\quad= \frac{1}{\bar{n}_{\mathrm{obs}}}\cdot\frac{1}{\bar{n}^{c_1}_{\mathrm{obs}}-1} \cdot \ldots \cdot \frac{1}{\bar{n}^{c_1}_{\mathrm{obs}}-K+1}\cdot \frac{1}{\bar{n}_{\mathrm{obs}}-K} \\
      & \quad \quad \cdot \left[ \left( \frac{1}{\bar{n}^{c_2}_{\mathrm{obs}}-1} \cdot \ldots \cdot \frac{1}{\bar{n}^{c_2}_{\mathrm{obs}}-K+1} \right) \mathbbm{1}\cbrac*{c_1 \neq c_2} + \left( \frac{1}{\bar{n}^{c_2}_{\mathrm{obs}}-K-1} \cdot \ldots \cdot \frac{1}{\bar{n}^{c_2}_{\mathrm{obs}}-2K+1} \right) \mathbbm{1}\cbrac*{c_1 = c_2} \right]\\
        & \quad= \left[ \prod_{i=1}^{2} \frac{1}{\bar{n}_{\mathrm{obs}}-K(i-1)} \right] \cdot \prod_{c=1}^{n_c} \prod_{j=1}^{N_n^c} \prod_{k=1}^{K-1} \frac{1}{\bar{n}^c_{\mathrm{obs}}-K(j-1)-k}.
    \end{align*}
    Now, for the induction step, suppose Equation~\eqref{eq:cal_distribution} holds for $n-1$.
    Then,
    \begin{align*}
        &\mathbb{P}\paren*{\mat{X}^{\mathrm{cal}}_{1}=\mat{x}_{1}, \dots, \mat{X}^{\mathrm{cal}}_{n}=\mat{x}_{n} \mid \cD_{\mathrm{prune}} = D_0, \cD_{\mathrm{obs}}= D_1}\\
        & \quad = \mathbb{P}\paren*{\mat{X}^{\mathrm{cal}}_{1}=\mat{x}_{1}, \dots, \mat{X}^{\mathrm{cal}}_{n-1}=\mat{x}_{n-1} \mid \cD_{\mathrm{prune}} = D_0, \cD_{\mathrm{obs}}= D_1} \\
        & \quad \quad \cdot \frac{1}{\bar{n}_{\mathrm{obs}}-K(n-1)} \cdot \frac{1}{{\bar{n}^{c_n}_{\mathrm{obs}}-K (N_n^{c_{n}}-1)-1}} \cdot \ldots \cdot \frac{1}{{\bar{n}^{c_n}_{\mathrm{obs}}-K N^n_{c_{n}} + 1}} \\
      & \quad = \left[ \prod_{i=1}^{n-1} \frac{1}{\bar{n}_{\mathrm{obs}}-K(i-1)} \right] \cdot \prod_{c=1}^{n_c} \prod_{j=1}^{N_{n-1}^c} \prod_{k=1}^{K-1} \frac{1}{\bar{n}^c_{\mathrm{obs}}-K(j-1)-k} \\
        & \quad \quad \cdot \frac{1}{\bar{n}_{\mathrm{obs}}-K(n-1)} \cdot \prod_{k=1}^{K-1} \frac{1}{{\bar{n}^{c_n}_{\mathrm{obs}}-K (N_{n}^{c_{n}}-1)-k}} \\
      & \quad = \left[ \prod_{i=1}^{n} \frac{1}{\bar{n}_{\mathrm{obs}}-K(i-1)} \right] \cdot \prod_{c=1}^{n_c} \prod_{j=1}^{N_{n}^c} \prod_{k=1}^{K-1} \frac{1}{\bar{n}^c_{\mathrm{obs}}-K(j-1)-k}.
    \end{align*}
    where the last equality above follows because $N_{n-1}^c=N_n^c$ for all $c \neq c_{n}$, while $N_{n}^{c_n}=N_{n-1}^{c_n}+1$.
\end{proof}

\subsection{Characterization of the Conformalization Weights} \label{app:proofs-weights}

\begin{proof}[Proof of Lemma~\ref{lemma:conformalization-weights-explicit}]

Note that Lemma~\ref{lemma:conformalization-weights-explicit} assumes the test group $\mat{X}^*$ is sampled from model~\eqref{model:weighted_test_generation}, assuming all columns has much more missing entries than $K$, namely,
\begin{align}\label{eq:more-than-K-missing}
    \abs{\cbrac{(r',c') \in \cD_{\mathrm{miss}}\mid c'=c}} \gg K, \text{for all } c \in [n_c]
\end{align}
We first show the results for $\mat{X}^*$ sampled from a general model~\eqref{model:corner-weighted_test_generation}, without assuming~\eqref{eq:more-than-K-missing}. Then, we demonstrate that the conformalization weights reduce to the form given in Lemma~\ref{lemma:conformalization-weights-explicit} under~\eqref{eq:more-than-K-missing}.

First, recall from Proposition~\ref{prop:jointdist-K} that
\begin{align*}
    &\mathbb{P}\paren*{\mat{X}^{\mathrm{cal}}_{1}=\mat{x}_{1}, \dots, \mat{X}^{\mathrm{cal}}_{n}=\mat{x}_{n}, \mat{X}^{*}=\mat{x}_{n+1} \mid  \cD_{\mathrm{prune}}=D_0, \cD_{\mathrm{train}}=D_1}\\
    &\quad = g(\{\mat{x}_1,\ldots,\mat{x}_{n+1}\}) \cdot \bar{h}(\{\mat{x}_1,\ldots,\mat{x}_{n}\}, \mat{x}_{n+1}),
\end{align*}
for some permutation-invariant function $g$ and
\begin{align} \label{eq:gwpm-weights-k-def-h}
\begin{split}
  \bar{h}(\{\mat{x}_1,\ldots,\mat{x}_{n}\}, \mat{x}_{n+1})
     & = \mathbb{P}\paren*{\cD_{\mathrm{obs}}= D_1 \cup {D}_2} \cdot \left[ \tilde{w}^*_{x_{n+1,1}} \cdot
      \prod_{k=2}^{K} \tilde{w}^*_{x_{n+1,k}} \right]  \\
    &\quad \cdot \left[ \prod_{c \in [n_c]} \frac{1}{ \mbinom{n^c_{\mathrm{obs}}}{\bar{n}^c_{\mathrm{obs}}}} \right]
   \cdot \prod_{c=1}^{n_c} \prod_{j=1}^{N_n^c} \prod_{k=1}^{K-1} \frac{1}{\bar{n}^c_{\mathrm{obs}}-K(j-1)-k},
 \end{split}
\end{align}
with
\begin{align*}
    \tilde{w}^*_{x_{n+1,1}} = \cfrac{w^*_{x_{n+1,1}}} {\sum_{(r,c)\in \bar{D}_{\mathrm{miss}}}w^{*}_{r,c}},
\end{align*}
and, for all $k \in \{2, \dots, K\}.$,
\begin{align*}
\tilde{w}^*_{x_{n+1, k}} = \frac{w^*_{x_{n+1,k}}}{\sum_{(r,c) \in \bar{D}_{\mathrm{miss}}} w^*_{r,c} \I{c = x_{n+1,1,2}} - \sum_{k'=1}^{k-1} w^*_{x_{n+1,k'}}}.
\end{align*}
Therefore, Lemma~\ref{lem:partial_quant} can be applied, with weights proportional to
\begin{align} \label{eq:gwpm-weights-k-tmp-1}
  p_i(\mat{x}_1,\ldots,\mat{x}_{n+1})
  & \propto \bar h( \{ \mat{x}_1,\ldots,\mat{x}_{n+1}\} \setminus \{\mat{x}_{i}\}, \mat{x}_{i}).
\end{align}

In order to compute the right-hand-side of~\eqref{eq:gwpm-weights-k-tmp-1}, one must understand how~\eqref{eq:gwpm-weights-k-def-h} changes when $\mat{x}_{n+1}$ is swapped with $\mat{x}_i$, for any fixed $i \in [n]$.
This can be done easily, one piece at a time.

To begin, it is immediate to see that swapping $\mat{x}_{n+1}$ with $\mat{x}_i$ results in $\mathbb{P}\paren*{\cD_{\mathrm{obs}}= D_1 \cup {D}_2}$ being replaced by $\mathbb{P}_{\mat{w}} \paren*{\cD_{\mathrm{obs}}= D_{\mathrm{obs};i}}$.
Similarly, $\tilde{w}^*_{x_{n+1,1}}, \ldots, \tilde{w}^*_{x_{n+1,K}}$ are replaced by $\tilde{w}^*_{x_{i,1}}, \ldots, \tilde{w}^*_{x_{i,K}}$, defined as
\begin{align*}
    \tilde{w}^*_{x_{i,1}}
  = \frac{w^*_{x_{i,1}}} {\sum_{(r,c)\in \bar{D}_{\mathrm{miss};i}}w^{*}_{r,c}},
\end{align*}
and, for all $k \in \{2,\ldots, K\}$,
\begin{align*}
  \tilde{w}^*_{x_{i,k}} &= \frac{w^*_{x_{n+1,k}}}{\sum_{(r,c) \in \bar{D}_{\mathrm{miss};i}} w^*_{r,c} \I{c = x_{i,1,2}} - \sum_{k'=1}^{k-1} w^*_{x_{i,k'}}}\\
  &=\frac{w^*_{x_{i,k}}}{\sum_{(r,c)\in \bar{D}^{c_{i}}_{\mathrm{miss};i}}  w^*_{r,c}-\sum^{k-1}_{k'=1}w^*_{x_{i;k'}}},
\end{align*}
here, for any $i \in [n+1]$, $c_i$ denotes the column to which $\mat{x}_{i}$ belongs; i.e., $c_i \coloneqq x_{i,k,2}, \forall k \in [K]$,
where $x_{i,k,2}$ is the column of the $k$th entry in $\mat{x}_i$.

To understand the notation in the equations above, recall that  $\bar{D}_{\mathrm{miss}} = \cbrac{(r,c) \in D_{\mathrm{miss}} \mid n^c_{\mathrm{miss}} \geq K}$ is a realization of the pruned missing set $\bar{\cD}_{\mathrm{miss}}$, and $n^c_{\mathrm{miss}} = \abs{\cbrac{(r',c') \in \cD_{\mathrm{miss}} \mid  c' = c}}$ is the number of missing entries in column $c$. In the parallel universe where $\mat{x}_{n+1}$ is swapped with $\mat{x}_i$, the realization of the missing indices is denoted as $D_{\mathrm{miss};i}$, and the realization of the pruned missing set is $\bar{D}_{\mathrm{miss};i} \coloneqq \cbrac{(r,c) \in D_{\mathrm{miss};i} \mid n^c_{\mathrm{miss};i} \geq K}$, where $ n^c_{\mathrm{miss};i} \coloneqq \abs{\cbrac{(r',c') \in \cD_{\mathrm{miss};i}}: c'=c}$. Similarly, $\bar{D}^{c}_{\mathrm{miss};i} \coloneqq \cbrac{(r',c') \in \bar{D}_{\mathrm{miss};i}: c'=c}$ denotes entries belonging to column $c$ in the imaginary pruned missing set. Thus, $\tilde{w}^*_{x_{i,1}}$ and $\tilde{w}^*_{x_{i,k}}$ can be interpreted as normalized sampling weights for the imaginary test group $\mat{x}_i$.

Next, let $n_{\mathrm{obs}}^c$ and $n^c_{\mathrm{obs};i}$ denote the numbers of observations in column $c$ from the sets $D_{\mathrm{obs}}$ and $D_{\mathrm{obs};i}$, respectively.
Define also $\bar{n}^c_{\mathrm{obs}} = \lfloor n^c_{\mathrm{obs}} / K\rfloor$ and $\bar{n}^c_{\mathrm{obs};i} = \lfloor n^c_{\mathrm{obs};i} / K\rfloor$, the corresponding numbers of observations remaining in column $c$ after the random pruning step of Algorithm~\ref{alg:calibration-group}. Let $N^c_{n} \coloneqq \abs{\cbrac*{i \in [n]: c_i =c}}$ denote the number of calibration groups in column $c \in [n_c]$.
Similarly, let $N^c_{n;i}$ denote the corresponding imaginary quantity obtained by swapping the calibration group $\mat{X}_i^{\mathrm{cal}}$ with the test group $\mat{X}^*$; i.e.,
\begin{align*}
    N^c_{n;i} \coloneqq \abs{\cbrac*{j \in [n+1]\setminus\cbrac{i}: c_j =c}}
\end{align*}

Further, swapping $\mat{x}_{n+1}$ with $\mat{x}_i$ results in $n^c_{\mathrm{obs}}$, $\bar{n}^c_{\mathrm{obs}}$, and $N^c_{n}$ being replaced by $n^c_{\mathrm{obs}; i}$, $\bar{n}^c_{\mathrm{obs}; i}$, and $N^c_{n; i}$, respectively.
Therefore,
\begin{align}\label{eq:gwpm-weights-k-complicated}
\begin{split}
  & p_i(\mat{x}_{1}, \dots, \mat{x}_{n}, \mat{x}_{n+1})
    \propto \bar h( \{ \mat{x}_1,\ldots,\mat{x}_{n+1}\} \setminus \{\mat{x}_{i}\}, \mat{x}_{i}) \\
    & \quad \propto \paren*{ \tilde{w}^*_{x_{i,1}} \prod\limits_{k=2}^{K}\tilde{w}^*_{x_{i,k}} } \paren*{ \prod\limits_{c=1}^{n_c} \mbinom{n^c_{\mathrm{obs};i}}{\bar{n}^c_{\mathrm{obs};i}}^{-1} }  \paren*{ \prod\limits_{c=1}^{n_c}\prod\limits_{j=1}^{N^c_{n;i}} \prod\limits_{k=1}^{K-1} \frac{1}{\bar{n}^c_{\mathrm{obs};i}-K(j-1)-k} } \cdot \mathbb{P}_{\mat{w}} \paren*{\cD_{\mathrm{obs}}= D_{\mathrm{obs};i}}.
  \end{split}
\end{align}

Now, we will further simplify the expression in Equation~\eqref{eq:gwpm-weights-k-complicated} to facilitate the practical computation of these weights.

Consider the first term on the right-hand-side of Equation \eqref{eq:gwpm-weights-k-complicated}, namely
$$
\paren*{ \tilde{w}^*_{x_{i,1}} \prod\limits_{k=2}^{K}\tilde{w}^*_{x_{i,k}} }.
$$
This quantity depends on the pruned set of missing indices $\bar{D}_{\mathrm{miss};i}$ and, by definition,
\begin{align*}
    \tilde{w}^*_{x_{i,1}}
  &= \frac{w^*_{x_{i,1}}} {\sum_{(r,c)\in \bar{D}_{\mathrm{miss};i}}w^{*}_{r,c}}
  = \frac{w^*_{x_{i,1}}} {\sum_{(r,c)\in \bar{D}_{\mathrm{miss}}}w^{*}_{r,c}-\sum\limits_{k=1}^{K} \paren*{w^{*}_{x_{n+1,k}}-w^{*}_{x_{i,k}}} + u^*_{x_{i,1}}},
\end{align*}
where
\begin{align*}
    u^*_{x_{i,1}}  =  \I{c_{i} \neq c_{n+1}} \paren*{\I{n^{c_{i}}_{\mathrm{miss}} <K} \paren*{\sum\limits_{(r,c)\in D^{c_{i}}_{\mathrm{miss}}} w^*_{r,c}} -\I{n^{c_{n+1}}_{\mathrm{miss}} <2K} \paren*{\sum\limits_{(r,c)\in D^{c_{n+1}}_{\mathrm{miss}} \setminus \mat{x}_{n+1}} w^*_{r,c}}},
\end{align*}
while, for all $k \in \{2,\ldots, K\}$,
\begin{align*}
  \tilde{w}^*_{x_{i,k}}
  & = \frac{w^*_{x_{i,k}}}{\sum_{(r,c)\in \bar{D}^{c_{i}}_{\mathrm{miss};i}}  w^*_{r,c}-\sum^{k-1}_{k'=1}w^*_{x_{i;k'}}} \\
  & = \frac{w^*_{x_{i,k}}}{\sum\limits_{(r,c)\in \bar{D}^{c_{i}}_{\mathrm{miss}}}  w^*_{r,c} + \sum\limits_{k'=k}^K w^{*}_{x_{i,k'}} -\I{c_i=c_{n+1}}\paren*{\sum\limits_{k'=1}^K w^{*}_{x_{n+1,k'}}} + \I{n^{c_{i}}_{\mathrm{miss}} <K} \paren*{\sum\limits_{(r,c)\in D^{c_{i}}_{\mathrm{miss}}} w^*_{r,c}}} \\
  &= \frac{w^*_{x_{i,k}}}{\sum\limits_{(r,c)\in D^{c_{i}}_{\mathrm{miss}}}  w^*_{r,c} + \sum\limits_{k'=k}^K w^{*}_{x_{i,k'}} -\I{c_i=c_{n+1}}\paren*{\sum\limits_{k'=1}^K w^{*}_{x_{n+1,k'}}}}.
\end{align*}
In the equations above, with a slight abuse of notation, we denoted the set of missing indices in column $c_{n+1}$ excluding those in the group $\mat{x}_{n+1}$ as $D^{c_{n+1}}_{\mathrm{miss}}\setminus \mat{x}_{n+1} \coloneqq D^{c_{n+1}}_{\mathrm{miss}}\setminus \cbrac{x_{n+1,k}}_{k=1}^K$.

Note that under the mild technical assumption~\eqref{eq:more-than-K-missing}, $\bar{D}_{\mathrm{miss}} = D_{\mathrm{miss}}$, and the term $u^*_{x_{i,1}}$ vanishes, leading to the simplified form in Lemma~\ref{lemma:conformalization-weights-explicit} of the main paper.

Next, let us consider the second term on the right-hand-side of Equation \eqref{eq:gwpm-weights-k-complicated}, namely
$$\paren*{ \prod\limits_{c=1}^{n_c} \mbinom{n^c_{\mathrm{obs};i}}{\bar{n}^c_{\mathrm{obs};i}}^{-1} }.$$
This evaluates the probability of observing a particular realization of $\calD_{\mathrm{prune}}$.
Since the pruned indices are chosen uniformly at random, this quantity only depends on the number of observations within each column before and after pruning, namely, $n^c_{\mathrm{obs};i}$ and $\bar{n}^c_{\mathrm{obs};i}$. By definition, we have
\begin{align} \label{eq:relation-n-obs}
    n_{\mathrm{obs};i}^c = \begin{cases}
  n_{\mathrm{obs}}^c -K\I{c_i \neq c_{n+1}}, & c=c_i, \\
  n_{\mathrm{obs}}^c +K\I{c_i \neq c_{n+1}}, & c=c_{n+1}, \\
  n_{\mathrm{obs}}^c, & \mathrm{otherwise}. \\
\end{cases}
\end{align}
The above equivalence from the fact that swapping $\mat{x}_i$ with $\mat{x}_{n+1}$ only affects the observed indices in column $c_i$ and $c_{n+1}$, while all other indices remain the same.
In particular, upon swapping, column $c_i$ will contain $K$ fewer observations, because $\mat{x}_{i}$ is treated as the unobserved test group, and column $c_{n+1}$ will contain $K$ more observations, because $\mat{x}_{n+1}$ is treated as the calibration group.
Similarly,
\begin{align}\label{eq:relation-n-pruned}
    \bar{n}_{\mathrm{obs};i}^c = \begin{cases}
  \bar{n}_{\mathrm{obs}}^c -K\I{c_i \neq c_{n+1}}, & c=c_i, \\
  \bar{n}_{\mathrm{obs}}^c +K\I{c_i \neq c_{n+1}}, & c=c_{n+1}, \\
  \bar{n}_{\mathrm{obs}}^c, & \mathrm{otherwise}. \\
\end{cases}
\end{align}
Combining \eqref{eq:relation-n-obs} and \eqref{eq:relation-n-pruned}, we can rewrite the second term in \eqref{eq:gwpm-weights-k-complicated} as:
\begin{align} \label{eq:rewrite-prune}
\begin{split}
   \prod\limits_{c=1}^{n_c} \mbinom{n^c_{\mathrm{obs};i}}{\bar{n}^c_{\mathrm{obs};i}}^{-1} &=  \brac*{\prod\limits_{c=1}^{n_c} \mbinom{n^c_{\mathrm{obs}}}{\bar{n}^c_{\mathrm{obs}}}^{-1}}
   \cdot \cfrac{\mbinom{n^{c_i}_{\mathrm{obs};i}}{\bar{n}^{c_i}_{\mathrm{obs};i}}^{-1}}{\mbinom{n^{c_i}_{\mathrm{obs}} }{\bar{n}^{c_i}_{\mathrm{obs}}}^{-1}} \cdot  \cfrac{\mbinom{n^{c_{n+1}}_{\mathrm{obs};i}}{\bar{n}^{c_{n+1}}_{\mathrm{obs};i}}^{-1} }{\mbinom{n^{c_{n+1}}_{\mathrm{obs}} }{\bar{n}^{c_{n+1}}_{\mathrm{obs}}}^{-1} } \\
   &=\brac*{\prod\limits_{c=1}^{n_c} \mbinom{n^c_{\mathrm{obs}}}{\bar{n}^c_{\mathrm{obs}}}^{-1}} \cdot \cbrac*{\cfrac{\mbinom{n^{c_i}_{\mathrm{obs}} }{\bar{n}^{c_i}_{\mathrm{obs}}}}{\mbinom{n^{c_i}_{\mathrm{obs}} -K}{\bar{n}^{c_i}_{\mathrm{obs}} -K}} \cdot  \cfrac{\mbinom{n^{c_{n+1}}_{\mathrm{obs}} }{\bar{n}^{c_{n+1}}_{\mathrm{obs}}}}{\mbinom{n^{c_{n+1}}_{\mathrm{obs}} +K}{\bar{n}^{c_{n+1}}_{\mathrm{obs}}+K}}}^{\I{c_i \neq c_{n+1}}}.
\end{split}
\end{align}

Then, the third term on the right-hand-side of \eqref{eq:gwpm-weights-k-complicated} is
$$\paren*{ \prod\limits_{c=1}^{n_c}\prod\limits_{j=1}^{N^c_{n;i}} \prod\limits_{k=1}^{K-1} \frac{1}{\bar{n}^c_{\mathrm{obs};i}-K(j-1)-k} }.$$
This relates to the probability of choosing a specific realization of the calibration groups given the observed indices remaining after random pruning.
To aid the simplification of this term, we point out the following relation between $N^c_{n;i}$ and $N^c_{n}$, i.e., the number of calibration groups from each column in the imagined observed set and original observed set respectively:
\begin{align}\label{eq:relation-batch}
    N^c_{n;i} = \begin{cases}
  N^c_{n} -\I{c_i \neq c_{n+1}}. & c=c_i, \\
  N^c_{n} +\I{c_i \neq c_{n+1}}, & c=c_{n+1}, \\
 N^c_{n} & \mathrm{otherwise}. \\
\end{cases}
\end{align}
Then, using \eqref{eq:relation-n-pruned} and \eqref{eq:relation-batch}, we can write:
\begin{align} \label{eq:rewrite-calib}
\begin{split}
    &\textcolor{white}{=}\prod\limits_{c=1}^{n_c}\prod\limits_{j=1}^{N^c_{n;i}} \prod\limits_{k=1}^{K-1} \frac{1}{\bar{n}^c_{\mathrm{obs};i}-K(j-1)-k}\\
    & \qquad =\brac*{\prod\limits_{c=1}^{n_c}\prod\limits_{j=1}^{N^c_{n}} \prod\limits_{k=1}^{K-1} \frac{1}{\bar{n}^c_{\mathrm{obs}}-K(j-1)-k}} \\
    &\qquad \textcolor{white}{=} \cdot \paren*{
    \cfrac{\prod\limits_{j=1}^{N^{c_i}_{n;i}} \prod\limits_{k=1}^{K-1} \frac{1}{\bar{n}^{c_i}_{\mathrm{obs};i}-K(j-1)-k}}{\prod\limits_{j=1}^{N^{c_i}_{n}} \prod\limits_{k=1}^{K-1} \frac{1}{\bar{n}^{c_i}_{\mathrm{obs}}-K(j-1)-k}} \cdot
    \cfrac{\prod\limits_{j=1}^{N^{c_{n+1}}_{n;i}} \prod\limits_{k=1}^{K-1} \frac{1}{\bar{n}^{c_{n+1}}_{\mathrm{obs};i}-K(j-1)-k}}{\prod\limits_{j=1}^{N^{c_{n+1}}_{n}} \prod\limits_{k=1}^{K-1} \frac{1}{\bar{n}^{c_{n+1}}_{\mathrm{obs}}-K(j-1)-k}}
    }^{\I{c_i \neq c_{n+1}}}\\
    &\qquad =\brac*{\prod\limits_{c=1}^{n_c}\prod\limits_{j=1}^{N^c_{n}} \prod\limits_{k=1}^{K-1} \frac{1}{\bar{n}^c_{\mathrm{obs}}-K(j-1)-k}}
    \cdot \paren*{\prod\limits_{k=1}^{K-1} \frac{\bar{n}^{c_i}_{\mathrm{obs}}-k}{\bar{n}^{c_{n+1}}_{\mathrm{obs}}+K-k}}^{\I{c_i \neq c_{n+1}}}.
\end{split}
\end{align}
Above, the last equality follows from the following simplification based on \eqref{eq:relation-batch} and \eqref{eq:relation-n-pruned}, assuming that $c_i \neq c_{n+1}$:
\begin{align*}
\begin{split}
    &\textcolor{white}{=}
    \cfrac{\prod\limits_{j=1}^{N^{c_i}_{n;i}} \prod\limits_{k=1}^{K-1} \frac{1}{\bar{n}^{c_i}_{\mathrm{obs};i}-K(j-1)-k}}{\prod\limits_{j=1}^{N^{c_i}_{n}} \prod\limits_{k=1}^{K-1} \frac{1}{\bar{n}^{c_i}_{\mathrm{obs}}-K(j-1)-k}} \cdot
    \cfrac{\prod\limits_{j=1}^{N^{c_{n+1}}_{n;i}} \prod\limits_{k=1}^{K-1} \frac{1}{\bar{n}^{c_{n+1}}_{\mathrm{obs};i}-K(j-1)-k}}{\prod\limits_{j=1}^{N^{c_{n+1}}_{n}} \prod\limits_{k=1}^{K-1} \frac{1}{\bar{n}^{c_{n+1}}_{\mathrm{obs}}-K(j-1)-k}}\\
    &=\frac{ \prod\limits_{j=1}^{N^{c_i}_{n}} \prod\limits_{k=1}^{K-1} \bar{n}^{c_i}_{\mathrm{obs}}-K(j-1)-k} { \prod\limits_{j=1}^{N^{c_i}_{n}-1} \prod\limits_{k=1}^{K-1} \paren*{\bar{n}^{c_i}_{\mathrm{obs}}-K}-K(j-1)-k} \cdot
    \frac{\prod\limits_{j=1}^{N^{c_{n+1}}_{n}} \prod\limits_{k=1}^{K-1}\bar{n}^{c_{n+1}}_{\mathrm{obs}}-K(j-1)-k}{\prod\limits_{j=1}^{N^{c_{n+1}}_{n}+1} \prod\limits_{k=1}^{K-1} \paren*{\bar{n}^{c_{n+1}}_{\mathrm{obs}}+K}-K(j-1)-k}\\
    &=\frac{ \prod\limits_{j=1}^{N^{c_i}_{n}} \prod\limits_{k=1}^{K-1} \bar{n}^{c_i}_{\mathrm{obs}}-K(j-1)-k} { \prod\limits_{j=1}^{N^{c_i}_{n}-1} \prod\limits_{k=1}^{K-1} \bar{n}^{c_i}_{\mathrm{obs}}-Kj-k} \cdot
    \frac{\prod\limits_{j=1}^{N^{c_{n+1}}_{n}} \prod\limits_{k=1}^{K-1}\bar{n}^{c_{n+1}}_{\mathrm{obs}}-K(j-1)-k}{\prod\limits_{j=1}^{N^{c_{n+1}}_{n}+1} \prod\limits_{k=1}^{K-1} \bar{n}^{c_{n+1}}_{\mathrm{obs}}-K(j-2)-k}\\
    &=\prod\limits_{k=1}^{K-1} \frac{\bar{n}^{c_i}_{\mathrm{obs}}-k}{\bar{n}^{c_{n+1}}_{\mathrm{obs}}+K-k}.
\end{split}
\end{align*}

Finally, combining \eqref{eq:gwpm-weights-k-complicated} with \eqref{eq:rewrite-prune} and \eqref{eq:rewrite-calib}, we arrive at
\begin{align*}
\begin{split}
  & p_i(\mat{x}_{1}, \dots, \mat{x}_{n}, \mat{x}_{n+1})  \\
  & \quad \propto \mathbb{P}_{\mat{w}} \paren*{\cD_{\mathrm{obs}}= D_{\mathrm{obs};i}} \cdot \paren*{ \tilde{w}^*_{x_{i,1}} \prod\limits_{k=2}^{K}\tilde{w}^*_{x_{i,k}} } \\
  & \quad \quad \cdot \brac*{\prod\limits_{c=1}^{n_c} \mbinom{n^c_{\mathrm{obs}}}{\bar{n}^c_{\mathrm{obs}}}^{-1}} \cdot
  \left[ \cfrac{\mbinom{n^{c_i}_{\mathrm{obs}} }{\bar{n}^{c_i}_{\mathrm{obs}}}}{\mbinom{n^{c_i}_{\mathrm{obs}} -K}{\bar{n}^{c_i}_{\mathrm{obs}} -K}} \cdot  \cfrac{\mbinom{n^{c_{n+1}}_{\mathrm{obs}} }{\bar{n}^{c_{n+1}}_{\mathrm{obs}}}}{\mbinom{n^{c_{n+1}}_{\mathrm{obs}} +K}{\bar{n}^{c_{n+1}}_{\mathrm{obs}}+K}} \right]^{\I{c_i \neq c_{n+1}}} \\
  & \quad \quad \cdot \brac*{\prod\limits_{c=1}^{n_c}\prod\limits_{j=1}^{N^c_{n}} \prod\limits_{k=1}^{K-1} \frac{1}{\bar{n}^c_{\mathrm{obs}}-K(j-1)-k}}
    \cdot \paren*{\prod\limits_{k=1}^{K-1} \frac{\bar{n}^{c_i}_{\mathrm{obs}}-k}{\bar{n}^{c_{n+1}}_{\mathrm{obs}}+K-k}}^{\I{c_i \neq c_{n+1}}} \\
  & \quad \propto \mathbb{P}_{\mat{w}} \paren*{\cD_{\mathrm{obs}}= D_{\mathrm{obs};i}} \cdot \paren*{ \tilde{w}^*_{x_{i,1}} \prod\limits_{k=2}^{K}\tilde{w}^*_{x_{i,k}} } \\
  & \quad \quad \cdot \left[ \cfrac{\mbinom{n^{c_i}_{\mathrm{obs}} }{\bar{n}^{c_i}_{\mathrm{obs}}}}{\mbinom{n^{c_i}_{\mathrm{obs}} -K}{\bar{n}^{c_i}_{\mathrm{obs}} -K}} \cdot  \cfrac{\mbinom{n^{c_{n+1}}_{\mathrm{obs}} }{\bar{n}^{c_{n+1}}_{\mathrm{obs}}}}{\mbinom{n^{c_{n+1}}_{\mathrm{obs}} +K}{\bar{n}^{c_{n+1}}_{\mathrm{obs}}+K}} \cdot \prod\limits_{k=1}^{K-1} \frac{\bar{n}^{c_i}_{\mathrm{obs}}-k}{\bar{n}^{c_{n+1}}_{\mathrm{obs}}+K-k} \right]^{\I{c_i \neq c_{n+1}}}.
  \end{split}
\end{align*}

\end{proof}

\subsection{Finite-Sample Coverage Bounds} \label{app:proofs-bounds}

\begin{proof}[Proof of Theorem~\ref{thm:coverage-lower}]
Recall that, by construction,
    \begin{align*}
      \mat{M}_{\mat{X}^*} \in \mathcal{C}(\mat{X}^*, \hat{\tau}_{\alpha,K}, \hat{\mat{M}})
      \Longleftrightarrow
      S^* \leq \hat{\tau}_{\alpha,K} = Q\paren[\Big]{1-\alpha; \sum_{i=1}^n p_i\delta_{S_i} + p_{n+1}\delta_{\infty}}.
    \end{align*}
Therefore, Theorem~\ref{thm:coverage-lower} follows directly by combining Proposition~\ref{prop:paired-calibration-partial-exch}, Lemma~\ref{lem:partial_quant}, and the characterization of the conformalization weights given by Equation~\eqref{eq:gwpm-weights-k}.

\end{proof}

\begin{proof}[Proof of Theorem~\ref{thm:upper_bound}]
Recall that, by construction,
    \begin{align*}
      \mat{M}_{\mat{X}^*} \in \mathcal{C}(\mat{X}^*, \hat{\tau}_{\alpha,K}, \hat{\mat{M}})
      \Longleftrightarrow
      S^* \leq \hat{\tau}_{\alpha,K} = Q\paren[\Big]{1-\alpha; \sum_{i=1}^n p_i\delta_{S_i} + p_{n+1}\delta_{\infty}}.
    \end{align*}
Therefore, applying Lemma~\ref{lemma:quantile-infty}, we see that it suffices to prove
\begin{align}
  \P{ S^* \leq Q\paren[\Big]{1-\alpha; \sum_{i=1}^n p_i\delta_{S_i} + p_{n+1}\delta_{S^*}} } \leq 1-\alpha+\EE{\max_{i \in [n+1]} p_i(\mat{X}^{\mathrm{cal}}_{1}, \dots, \mat{X}^{\mathrm{cal}}_{n}, \mat{X}^*)}.
\end{align}

Let $E_x$ denote the event that $\cbrac{\mat{X}^{\mathrm{cal}}_{1}, \dots, \mat{X}^{\mathrm{cal}}_{n}, \mat{X}^{*}} = \cbrac{\mat{x}_{1}, \dots, \mat{x}_{n+1}}$, $\cD_{\mathrm{drop}}=D_0$, and $\cD_{\mathrm{train}}=D_1$, for some possible realizations $\mat{x} = (\mat{x}_1,\ldots,\mat{x}_{n+1})$ of $\mat{X}^{\mathrm{cal}}_{1}, \dots, \mat{X}^{\mathrm{cal}}_{n}, \mat{X}^{*}$, $D_0$ of $\cD_{\mathrm{drop}}$, and $D_1$ of $\cD_{\mathrm{train}}$.
Let also $\{v_1,\ldots,v_{n+1}\}$ indicate the realization of $\{S_1,\ldots,S_n,S^*\}$ corresponding to the event $E_x$, for all $i \in [n+1]$.
Applying the definition of conditional probability, as in the proof of Lemma~\ref{lem:partial_quant}, we can see that
$ S^* \mid E_x \sim \sum_{i=1}^{n+1} p_i(\mat{x}_{1}, \dots, \mat{x}_{n+1})\delta_{v_i}$,
with the weights $p_i$ given by~\eqref{eq:gwpm-weights-k}.
This implies that
$$
\P{ S^* \leq Q\paren[\Big]{1-\alpha; \sum_{i=1}^{n+1} p_i\delta_{v_i} } \mid E_x }
\leq 1-\alpha+\max_{i \in [n+1]} p_i(\mat{x}_{1}, \dots, \mat{x}_{n+1}),
$$
and further, by taking an expectation with respect to the randomness in $E_x$,
$$
\P{ S^* \leq Q\paren[\Big]{1-\alpha; \sum_{i=1}^{n} p_i\delta_{S_i} + p_{n+1}\delta_{S^*} } }
\leq 1-\alpha + \E{\max_{i \in [n+1]} p_i(\mat{X}^{\mathrm{cal}}_{1}, \dots, \mat{X}^{\mathrm{cal}}_{n}, \mat{X}^*)}.
$$
\end{proof}

\begin{proof}[Proof of~Theorem~\ref{thm:coverage-lower-empirical}]
    By definition, we know that
\[
\hat{\tau}_{\alpha,K}^{\mathrm{est}} = Q\left(1-\alpha; \sum_{i=1}^n \widehat{p}_i\delta_{S_i} + \widehat{p}_{n+1}\delta_{\infty}\right)
\]

The total variation distance between the two distributions satisfies:
\[
\mathrm{d}_{\mathrm{TV}}\left(\sum_{i=1}^{n+1} \widehat{p}_i \delta_{S_i}, \sum_{i=1}^{n+1} p_i \delta_{S_i}\right) \leq \frac{1}{2} \sum_{i=1}^{n+1}\left|\widehat{p}_i - p_i\right| = \Delta
\]

Therefore, we have
\begin{align*}
    Q\left(1-\alpha; \sum_{i=1}^n \widehat{p}_i\delta_{S_i} + \widehat{p}_{n+1}\delta_{\infty}\right) &\geq Q\left(1-\alpha; \sum_{i=1}^{n+1} \widehat{p}_i \delta_{S_i}\right) \\
    & \geq Q\left(1-\alpha-\Delta; \sum_{i=1}^{n+1} {p}_i \delta_{S_i}\right)
\end{align*}

Combining the lemmas, we have the following:
\[
\P{ \mat{M}_{\mat{X}^{*}} \in \hat{\mat{C}}^{\mathrm{est}}(\mat{X}^*; \mat{M}_{\mat{X}_{\mathrm{obs}}}, \alpha) \mid \cD_{\mathrm{train}}, \cD_{\mathrm{prune}}} \geq 1-\alpha-\mathbb{E}[\Delta].
\]

This inequality completes the proof of Theorem \ref{thm:coverage-lower-empirical}. We have shown we can obtain the desired coverage level adjusted by the estimation gap.

\end{proof}

\subsection{Efficient Evaluation of the Conformalization Weights} \label{app:proofs-weights-fast}

\begin{proof}[Proof of~Proposition~\ref{prop:integral}]

We begin by focusing on the special case of $i =n+1$.
In that case, Equation~\eqref{eq:prob-obs-approx-i} becomes a special case of the results derived for the multivariate Wallenius’ noncentral hypergeometric distribution \citep{wallenius1963biased, chesson1976non, fog2007wnchypg}. While the original problem addresses biased sampling without replacement from an urn containing colored balls, our model in \eqref{model:sampling_wo_replacement} can be equivalently interpreted as drawing samples without replacement from an urn comprising $n_{r}n_{c}$ balls. Each ball is uniquely labeled with a color represented by $(r,c) \in [n_r] \times [n_c]$, and it is drawn with a probability proportional to $w_{r,c}$; e.g., see Section~\ref{sec:preliminaries}.
Therefore, from Equation (19) in \citet{fog2007wnchypg}:
\begin{align}\label{eq:prob-obs-approx}
\begin{split}
   \mathbb{P}_{\mat{w}}(\calD_{\mathrm{obs}}=D_{\mathrm{obs}}) &= \int_{0}^{1} \Phi(\tau; h) \,d\tau.
\end{split}
\end{align}

Next, we turn to proving Equation~\eqref{eq:prob-obs-approx-i} for a general $i \in [n+1]$.

For any $i \in [n+1]$, imagine an alternative world in which $\mat{x}_{n+1}$ is swapped with $\mat{x}_i$.
Let $ \delta_i:=\sum_{(r,c)\in D_{\mathrm{miss};i}}w_{r,c}$ indicate the cumulative weight of all missing entries analogous to $\delta$ in the aforementioned imaginary world.
It is easy to see that $\delta_i = \delta + d_i$, where $d_i := \sum_{k=1}^K(w_{x_{i,k}}-w_{x_{n+1,k}})$. Therefore, we can express the probability using Equation~\eqref{eq:prob-obs-approx} in the imaginary world:
\begin{align*}
\begin{split}
    \mathbb{P}_{\mat{w}}(\calD_{\mathrm{obs}}=D_{\mathrm{obs};i}) &=  \int_{0}^{1} h\delta_{i}\tau^{h\delta_{i}-1}\prod_{(r,c)\in D_{\mathrm{obs};i}} \paren*{1-\tau^{hw_{r,c}}} \,d\tau \\
    &= \int_{0}^{1} h\paren*{\delta + d_i}\tau^{h\paren*{\delta + d_i}-1}\prod_{(r,c) \in D_{\mathrm{obs}}} \paren*{1-\tau^{hw_{r,c}}} \cdot
    \prod\limits_{k=1}^K \paren*{\frac{1-\tau^{hw_{n+1,k}}}{1-\tau^{hw_{i,k}}}}\,d\tau\\
    &= \int_{0}^{1} h\delta\tau^{h\delta-1} \paren*{\prod_{(r,c) \in D_{\mathrm{obs}}} 1-\tau^{hw_{r,c}}} \cdot \frac{h\paren*{\delta + d_i}\tau^{h\paren*{\delta + d_i}-1}}{h\delta\tau^{h\delta-1}} \cdot
    \paren*{\prod\limits_{k=1}^K \frac{1-\tau^{hw_{n+1,k}}}{1-\tau^{hw_{i,k}}}}\,d\tau\\
    &= \int_{0}^{1} \Phi(\tau; h) \cdot  \eta_i(\tau; h)\,d\tau.
\end{split}
\end{align*}
where, for any $\tau \in (0,1)$,
\begin{align}
  & \eta_i(\tau; h) := \frac{\tau^{hd_i}(\delta + d_i)}{\delta} \cdot \paren*{\prod\limits_{k=1}^K \frac{1-\tau^{hw_{n+1,k}}}{1-\tau^{hw_{i,k}}}}.
\end{align}
\end{proof}

\begin{proof}[Proof of Lemma~\ref{lemma:phi-max}]
Recall that the logarithm of $\Phi(\tau;h)$ takes the form
\begin{align} \label{eq:def-phi}
    \phi(\tau; h) := \log \Phi(\tau; h) = \log(h\delta) + (h\delta-1)\log(\tau) + \sum_{(r,c) \in D_{\mathrm{obs}}} \log(1-\tau^{h w_{r,c}}),
\end{align}
while its first derivative with respect to $\tau$ is:
\begin{align*}
    \phi'(\tau; h) & = \frac{h\delta-1}{\tau}-\sum_{(r,c) \in D_{\mathrm{obs}}}\frac{h w_{r,c}\tau^{h w_{r,c}-1}}{1-\tau^{h w_{r,c}}}.
\end{align*}

Consider the function $ \tau \phi'(\tau; h)$, 
\begin{align*}
    \tau \phi'(\tau; h) & = h\delta-1 -\sum_{(r,c) \in D_{\mathrm{obs}}}\frac{h w_{r,c}\tau^{h w_{r,c}}}{1-\tau^{h w_{r,c}}},
\end{align*}
which is strictly decreasing in $\tau$ for all $\tau \in (0,1)$, because $w_{r,c}>0$ for all $(r,c)$.
Further, if $h > 1/\delta$,
\begin{align*}
& \lim_{\tau \to 0^+} \tau \phi'(\tau; h) = h\delta - 1 > 0,
& \lim_{\tau \to 1^-} \tau \phi'(\tau; h) = -\infty.
\end{align*}
Then, by the intermediate value theorem, $\tau \phi'(\tau; h)$ must have exactly one zero for $\tau \in (0,1)$, as long as $h > 1/\delta$.
In turn, this implies that $\phi'(\tau; h)$ has exactly one zero for $\tau \in (0,1)$, as long as $h > 1/\delta$.
Further, the unique zero of $\phi'(\tau; h)$ on $\tau \in (0,1)$ must be the unique maximum of $\phi(\tau; h)$, because, under $h > 1/\delta$, 
\begin{align*}
& \lim_{\tau \to 0^+} \phi(\tau; h) = -\infty,
& \lim_{\tau \to 1^-} \phi(\tau; h) = -\infty.
\end{align*}
\end{proof}

\subsection{Consistency of the Generalized Laplace Approximation}\label{app:laplace-theory}

\begin{proof}[Proof of Theorem~\ref{thm:laplace-independent}]

The preliminary part of this result is proved in Appendix~\ref{app:scaling-parameter}, where we show that selecting the scaling parameter $h_n$ as the unique root of the function in~\eqref{eq:laplace-hn-root} leads to $\tau_n^* \coloneqq \argmax_{\tau \in [0,1]} \Phi_n(\tau) = \frac{1}{2}$.

Our main objective is to approximate the integral
\begin{align*}
    \int_0^1 f_n(\tau) \Phi_n(\tau) \, d\tau = \int_0^1 f_n(\tau) e^{\phi_n(\tau)} \, d\tau,
\end{align*}
leveraging a suitable extension of the classical Laplace method reviewed in Appendix~\ref{app:laplace-review}.
To this end, we begin by applying a Taylor series expansion around $\tau_n^*$, including Lagrange remainder terms; this leads to:
\begin{align*}
    f_n(\tau) &= f_n\left(\frac{1}{2}\right) + f_n'(\xi_1)\left(\tau - \frac{1}{2}\right), \\
    \phi_n(\tau) &= \phi_n\left(\frac{1}{2}\right) + \phi_n'\left(\frac{1}{2}\right)\left(\tau - \frac{1}{2}\right) + \frac{\phi_n''\left(\frac{1}{2}\right)}{2}\left(\tau - \frac{1}{2}\right)^2 + \frac{\phi_n'''(\xi_2)}{6}\left(\tau - \frac{1}{2}\right)^3,
\end{align*}
for some real numbers $\xi_1, \xi_2 \in [1/2,\tau]$, and
   \begin{align*}
    \phi_n(\tau) &= \log(h_n) + \log(\delta_n) + (h_n\delta_n-1)\log(\tau) + \sum_{i=1}^n x_i \log\left(1-\tau^{h_n w_i}\right) \\
    \phi_n'(\tau) &= \frac{h_n\delta_n-1}{\tau} - \sum_{i=1}^n x_i \frac{h_n w_i \tau^{h_n w_i-1}}{\left(1-\tau^{h_n w_i}\right)} \\
    \phi_n''(\tau) &= -\frac{h_n\delta_n-1}{\tau^2} - \sum_{i=1}^n x_i \frac{h_n w_i \tau^{h_n w_i-2}(\tau^{h_n w_i}+h_n w_i -1)}{\left(1-\tau^{h_n w_i}\right)^2} \\
    \phi_n'''(\tau) &= 2\frac{h_n\delta_n-1}{\tau^3} \\
    & + \sum_{i=1}^n x_i \frac{(h_n w_i) \tau^{( h_n w_i -3)} \left(3 (h_n w_i) ( \tau^{h_n w_i} -1 ) + 2 (\tau^{h_n w_i} -1)^2 + (h_n w_i)^2 ( \tau^{h_n w_i} +1)\right)}{(\tau^{h_n w_i} -1 )^3} .
    \end{align*}

By definition of $h_n$, we know that $\phi_n'\left(\frac{1}{2}\right) = 0$.
Next, we need to establish a suitable bound for $\phi_n''$. This task is complicated by the fact that we do not have an explicit expression for $h_n$.
Fortunately, however, it is possible to obtain sufficiently tight lower and upper bounds for $h_n$.

\begin{lemma} \label{lemma:laplace-independent-bound-hn}
In the setting of Theorem~\ref{thm:laplace-independent}, for any $n>1$,
\begin{align} \label{eq:hn-bound-fs}
  \left(\frac{\frac{2n}{\delta_n}}{2^{\frac{2n}{\delta_n}} - 1} \frac{s_n}{n} + \frac{1}{n} \right) \frac{1}{\delta_n}
  \leq h_n
  \leq \left(1 + \frac{n}{\log{2}}\right)\frac{1}{\delta_n}.
\end{align}
Further, in the limit of $n \to \infty$, it holds almost-surely that
\begin{align} \label{eq:hn-bound}
\frac{J}{L_2} \leq h_n \leq \frac{1}{\log(2)L_2},
\end{align}
where
\begin{align*}
  & J := L_1 \cdot \frac{2}{L_2 }\cdot \frac{1}{2^{\frac{2}{L_2}}-1},
  & L_1 :=  \mathbb{E}[x],
  && L_2 := \mathbb{E}[(1-x)w].
\end{align*}

\end{lemma}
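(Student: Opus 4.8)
The plan is to treat the finite-sample bound \eqref{eq:hn-bound-fs} as an exercise in bracketing the root of a monotone function, and then to obtain the almost-sure limits \eqref{eq:hn-bound} by the strong law of large numbers. Recall from Appendix~\ref{app:scaling-parameter} that $z(h) = \delta_n - \tfrac1h - \sum_{i=1}^n \tfrac{x_i w_i}{2^{h w_i}-1}$ is strictly increasing in $h$ and has a unique root $h_n$; consequently it suffices to exhibit explicit values $h_- \le h_+$ with $z(h_-) \le 0 \le z(h_+)$, which forces $h_- \le h_n \le h_+$. Throughout I would use only elementary facts: $2^x - 1 \ge x\log 2$ for $x \ge 0$; the map $x \mapsto x/(2^x-1)$ is strictly decreasing on $(0,\infty)$ (a short computation); and $0 < \delta_n = \sum_i (1-x_i)w_i \le \sum_i w_i \le n$ while $s_n := \sum_i x_i \le n$.

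For the upper bound I would take $h_+ = (1 + n/\log 2)/\delta_n$. Bounding each summand by $\tfrac{x_i w_i}{2^{h_+ w_i}-1} \le \tfrac{x_i}{h_+ \log 2}$ and using $s_n \le n$ gives $\sum_i \tfrac{x_i w_i}{2^{h_+ w_i}-1} \le \tfrac{n}{h_+ \log 2}$; substituting $1/h_+ = \delta_n/(1 + n/\log 2)$ and simplifying shows $z(h_+) \ge 0$, hence $h_n \le h_+$.

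The lower bound is the delicate part, precisely because the summand $\tfrac{x_i w_i}{2^{h w_i}-1}$ itself depends on $h$, so a crude bound on the sum would be circular. The resolution I would use is a self-consistent choice $h_- = (1 + A_n s_n)/\delta_n$ with $A_n := \tfrac{2n/\delta_n}{2^{2n/\delta_n}-1}$, whose defining equation is engineered to close exactly. Since $A_n \le 1$ (as $2^x - 1 \ge x$ for $x \ge 1$ and $2n/\delta_n \ge 2$) and $s_n \le n$, one gets the a priori estimate $h_- \le (1+n)/\delta_n \le 2n/\delta_n$, hence $h_- w_i \le 2n/\delta_n$ for every $i$; monotonicity of $x\mapsto x/(2^x-1)$ then yields $\tfrac{h_- w_i}{2^{h_- w_i}-1} \ge A_n$, i.e.\ $\tfrac{x_i w_i}{2^{h_- w_i}-1} \ge \tfrac{x_i A_n}{h_-}$. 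Summing over $i$ gives $\tfrac1{h_-} + \sum_i \tfrac{x_i w_i}{2^{h_- w_i}-1} \ge \tfrac{1 + A_n s_n}{h_-} = \delta_n$, so $z(h_-) \le 0$ and $h_n \ge h_-$. Rearranging the expressions for $h_\pm$ then yields the explicit two-sided finite-$n$ bound \eqref{eq:hn-bound-fs}. I expect this self-consistency step, together with verifying the a priori estimate that makes it legitimate, to be the main obstacle.

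Finally, for \eqref{eq:hn-bound} I would note that $\{(1-x_i)w_i\}_{i\ge1}$ and $\{x_i\}_{i\ge1}$ are i.i.d.\ sequences, so the strong law of large numbers gives $\delta_n/n \to \mathbb{E}[(1-x)w] = L_2$ and $s_n/n \to \mathbb{E}[x] = L_1$ almost surely, with $\delta_n > 0$ for all large $n$. Writing $h_+ = \tfrac{1/n + 1/\log 2}{\delta_n/n}$ and $h_- = \tfrac{1/n + A_n (s_n/n)}{\delta_n/n}$, and using $2n/\delta_n \to 2/L_2$ so that $A_n \to \tfrac{2/L_2}{2^{2/L_2}-1}$, one passes to the limit to get $h_+ \to \tfrac{1}{\log 2\, L_2}$ and $h_- \to J/L_2$. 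Combining with $h_- \le h_n \le h_+$ gives $J/L_2 \le \liminf_n h_n \le \limsup_n h_n \le \tfrac{1}{\log 2\, L_2}$ almost surely, which is \eqref{eq:hn-bound}.
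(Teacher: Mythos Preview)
Your proposal is correct and uses the same elementary ingredients as the paper's proof: the inequality $2^x - 1 \ge x\log 2$ for the upper bound, the monotonicity of $x \mapsto x/(2^x-1)$ together with $w_i < 1$ for the lower bound, and the strong law of large numbers for the asymptotic statement. Both arguments arrive at the same finite-sample lower bound $h_n \ge (1 + A_n s_n)/\delta_n$, which in fact is stronger than the displayed form in \eqref{eq:hn-bound-fs} by a factor of $n$ (the displayed form appears to carry a typo) and is exactly what is needed to pass to the limit $J/L_2$.

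There is one organizational difference worth noting. The paper works directly with the root equation $z(h_n)=0$: it first establishes the upper bound $h_n \le (1+n/\log 2)/\delta_n$, then rewrites $h_n\delta_n - 1 = \sum_i x_i h_n w_i/(2^{h_n w_i}-1) \ge s_n\, h_n/(2^{h_n}-1)$, and finally feeds the already-proved upper bound back into the decreasing function $h \mapsto h/(2^h-1)$ to obtain the lower bound. Your bracketing argument instead guesses $h_-$ and verifies $z(h_-)\le 0$ via the self-contained a priori estimate $h_- \le 2n/\delta_n$ (from $A_n \le 1$), so the lower bound does not logically depend on the upper bound. This decoupling is a small but pleasant gain in clarity; otherwise the two arguments are essentially the same.
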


The bounds on $h_n$ provided by Lemma~\ref{lemma:laplace-independent-bound-hn} in turns allow us to bound $\phi_n''$ away from 0 almost surely for large $n$.
This gives us the necessary ingredients to tackle the approximation of the integral.
To this end, note that
\begin{align*}
    \int_0^1 f_n(\tau) e^{\phi_n(\tau)} \, d\tau &= \int_0^1  \left\{f_n\left(\frac{1}{2}\right) + f_n'(\xi_1(\tau))\left(\tau - \frac{1}{2}\right)\right\} e^{\phi_n\left(\frac{1}{2}\right) + \frac{\phi_n''\left(\frac{1}{2}\right)}{2}\left(\tau - \frac{1}{2}\right)^2 + \frac{\phi_n'''(\xi_2(\tau))}{6}\left(\tau - \frac{1}{2}\right)^3} \, d\tau,
\end{align*}
since $\phi_n'(\frac{1}{2})=0$.
Applying the change of variables
$u = \sqrt{-\phi_n''\left(\frac{1}{2}\right)}(\tau-\frac{1}{2})$
 and defining $K_n=\sqrt{-\phi_n''(1/2)}$, the integral becomes
\begin{align} \label{eq:integral-taylor}
    & \int_0^1 f_n(\tau) e^{\phi_n(\tau)} \, d\tau \\
    & \quad = \frac{e^{\phi_n\left(\frac{1}{2}\right)}}{K_n} \int_{-K_n/2}^{K_n/2} \left(f_n\left(\frac{1}{2}\right) +\frac{u}{K_n} f_n'(\xi_1(u))\right) e^{-\frac{u^2}{2} + \frac{\phi_n'''(\xi_2(u)) u^3}{6 K_n^3}}  \, du \nonumber \\
    & \quad = \frac{e^{\phi_n\left(\frac{1}{2}\right)}}{K_n} \left\{
    f_n\left(\frac{1}{2}\right) \int_{-K_n/2}^{K_n/2} e^{-\frac{u^2}{2}}\, du
    + f_n\left(\frac{1}{2}\right) \int_{-K_n/2}^{K_n/2} e^{-\frac{u^2}{2}}\left(e^{\frac{\phi_n'''(\xi_2(u)) u^3}{6 K_n^3}}-1  \right)\, du \right. \nonumber \\
    &\quad  \qquad \left. + \int_{-K_n/2}^{K_n/2} \frac{u}{K_n} f_n'(\xi_1(u))e^{-\frac{u^2}{2}} \, du
    + \int_{-K_n/2}^{K_n/2} \frac{u}{K_n} f_n'(\xi_1(u)) e^{-\frac{u^2}{2}}\left(e^{\frac{\phi_n'''(\xi_2(u)) u^3}{6 K_n^3}}-1  \right) \, du \right\}. \label{eq:taylor-error}
\end{align}
Note that now $\xi_1$ and $\xi_2$ depend implicitly on $u$ due to the change of variables (they previously depended on $\tau$).
We will now separately analyze each term in~\eqref{eq:integral-taylor}.

The limit of the first integral on the right-hand-side of~\eqref{eq:integral-taylor} can be found by leveraging the following lower bound for $K_n$:
\begin{align*}
    K_n = \sqrt{-\phi_n''\left(\frac{1}{2}\right)} = \sqrt{n}\sqrt{\frac{-\phi_n''\left(\frac{1}{2}\right)}{n}} \geq \sqrt{n}\sqrt{4\frac{h_n\delta_n - 1}{n}} \geq 2\sqrt{n}\sqrt{J},
\end{align*}
which leads to
\begin{align} \label{eq:integral-taylor-1}
\lim_{n \to \infty} \int_{-K_n/2}^{K_n/2} e^{-\frac{u^2}{2}} \, du = \sqrt{2\pi}.
\end{align}

By contrast, the third integral on the right-hand-side of~\eqref{eq:integral-taylor} is asymptotically negligible.
Recall that, by assumption, $f_n$ satisfies $|f_n'(x)| \leq M$ for all $x \in (0,1)$ and $n$; therefore,
\begin{align} \label{eq:integral-taylor-3}
    \left|\int_{-K_n/2}^{K_n/2} \frac{u}{K_n} f_n'(\xi_1(u))e^{-\frac{u^2}{2}} \, du \right| \leq  \frac{M}{K_n}\int_{-K_n/2}^{K_n/2} |u| e^{-\frac{u^2}{2}}\, du \to 0 \quad \text{ as } n \to \infty.
\end{align}

We continue with the analysis of the second and fourth terms on the right-hand-side of~\eqref{eq:taylor-error}, which are more involved.
The remainder of the second-order Taylor expansion, previously denoted as $\phi_n'''(\xi_2(u)) u^3 / (6 K_n^3)$, can be expanded back into an infinite power series, given the smoothness of $\phi_n$ within the interval $(0,1)$. This expansion is expressed as:
\begin{align}
    y_n(u) \coloneqq \frac{\phi_n'''(\xi_2(u)) u^3}{6 K_n^3} = \sum_{j=3}^{\infty} \frac{\phi_n^{(j)}\left(\frac{1}{2}\right)}{j!K_n^j} u^j,
\end{align}
where $\phi_n^{(j)}\left(\frac{1}{2}\right)$ represents the $j$-th derivative of $\phi_n$ evaluated at $\frac{1}{2}$, and this series converge for all $n$ and all $u \in (-K_n/2,K_n/2)$.

Note that the $j$-th derivative of $\phi_n$ at $1/2$ can be written as:
\begin{align} \label{eq:phi-n-0.5}
    \frac{\phi_n^{(j)}\left(\frac{1}{2}\right)}{n} &= \frac{h_n\delta_n - 1}{n} \frac{d^j}{d\tau^j}(\log \tau)\Big|_{\tau=\frac{1}{2}} + \frac{1}{n} \sum_{i=1}^{n} x_i \frac{d^j}{d\tau^j} \log\left(1 - \tau^{h_nw_i}\right)\Big|_{\tau=\frac{1}{2}}.
\end{align}
To control $y_n(u)$, we will show that~\eqref{eq:phi-n-0.5} is bounded for large $n$.
Lemma~\ref{lemma:laplace-independent-bound-hn} tells us that the first term in~\eqref{eq:phi-n-0.5} is bounded by constants almost surely.
For the second term, define:
\begin{align}
    g(h,x,w) \coloneqq x \frac{d^j}{d\tau^j} \log\left(1 - \tau^{hw}\right)\Big|_{\tau=\frac{1}{2}},
\end{align}
which is continuous for $h > 0$ and $w > 0$. Given that the interval $C \coloneqq [J, 1/\log(2)]$ is compact, we can use the maximum value in $h$ to bound the function $g$.
Thus, we obtain:
\begin{align}
    \frac{1}{n} \sum_{i=1}^{n} g(h_n,x_i,w_i) \leq \frac{1}{n} \sum_{i=1}^{n} \max_{h\in C} g(h,x_i,w_i) \quad \text{almost surely as } n \to \infty.
\end{align}
Then, by the strong law of large numbers,
\begin{align}
    \frac{1}{n} \sum_{i=1}^{n} \max_{h\in C} g(h,x_i,w_i) \xrightarrow{\mathrm{a.s.}} \mathbb{E}\left[\max_{h\in C} g(h,x,w)\right],
\end{align}
leading to:
\begin{align}
    \frac{1}{n} \sum_{i=1}^{n} g(h_n,x_i,w_i) \leq  \mathbb{E}\left[\max_{h\in C} g(h,x,w)\right] \quad \text{almost surely.}
\end{align}
Similarly, we can also show:
\begin{align}
    \frac{1}{n} \sum_{i=1}^{n} g(h_n,x_i,w_i) \geq  \mathbb{E}\left[\min_{h\in C} g(h,x,w)\right] \quad \text{almost surely.}
\end{align}
Therefore, the second term in \eqref{eq:phi-n-0.5} is also bounded by constants almost surely.
As a result, the whole expression in~\eqref{eq:phi-n-0.5} is bounded by constants almost surely for large $n$.

Combining the above results, we conclude that, for all $j \geq 3$,
\begin{align} \label{eq:phi-bound-big-O}
    \left| \frac{\phi_n^{(j)}\left(\frac{1}{2}\right)}{j!K_n^j}  \right|
    \leq \left| \frac{n \frac{\phi_n^{(j)}\left(\frac{1}{2}\right)}{n}}{j! K_n^j}  \right|
    \leq \left| \frac{n \frac{\phi_n^{(j)}\left(\frac{1}{2}\right)}{n}}{j! (2\sqrt{J})^j n^{\frac{j}{2}}} \right|
    = \mathcal{O}\left(n^{-\left(\frac{j}{2}-1\right)}\right) \text{ almost surely.}
\end{align}

We can now proceed to analyze the integral involving the exponential of the remainder term $y_n(u)$, which appears in the second term on the right-hand-side of~\eqref{eq:integral-taylor}. Specifically,
\begin{align} \label{eq:integral-power-normal}
  \int_{-\infty}^{\infty} e^{-\frac{u^2}{2}}\left(e^{y_n(u)}-1\right) \, du 
  &= \sqrt{2\pi}\mathbb{E}\left[e^{y_n(Z)}-1\right]
    = \sqrt{2\pi} \sum_{k=1}^{\infty} \frac{ \mathbb{E}\left[ (y_n(Z))^k \right]}{k!} ,
\end{align}
where $Z \sim \mathcal{N}(0,1)$. Below, we show that all terms on the right-hand-side of~\eqref{eq:integral-power-normal} are finite and converge to 0 as $n$ increases.
To this end, let us start from $k=1$, noting that
\begin{align*}
    \mathbb{E}[y_n(Z)] 
  = \mathbb{E}\left[\sum_{j=3}^{\infty} \frac{\phi_n^{(j)}\left(\frac{1}{2}\right)}{j!K_n^j} Z^j\right]
    = \sum_{\substack{j\geq 4, \\ j \text{ is even}}} \frac{\phi_n^{(j)}\left(\frac{1}{2}\right)}{j!K_n^j} \frac{j!}{2^{\frac{j}{2}}(\frac{j}{2})!}
    = \sum_{\substack{j\geq 4, \\ j \text{ is even}}} \frac{\phi_n^{(j)}\left(\frac{1}{2}\right)}{K_n^j 2^{\frac{j}{2}}(\frac{j}{2})!},
\end{align*}
where the simplification arises because all odd moments of a standard normal are zero, and the even moments follow from the moment generating function. Given~\eqref{eq:phi-bound-big-O}, it follows that $\mathbb{E}[y_n(Z)] \to 0$ as $n \to \infty$.
Similarly, it also can be shown that all higher moments of $y_n(Z)$ are finite and converge to 0 as $n$ increases.
Consequently, we conclude that
\begin{align*} 
    \mathbb{E}\left[e^{y_n(Z)}-1\right] \to 0 \quad \text{as } n \to \infty.
\end{align*}

Therefore, the limit of the second term in the Taylor error expansion~\eqref{eq:taylor-error} is
\begin{align} \label{eq:integral-taylor-2}
    \lim_{n \to \infty} \int_{-\frac{K_n}{2}}^{\frac{K_n}{2}} e^{-\frac{u^2}{2}}\left(e^{\frac{\phi_n'''(\xi_2(u)) u^3}{6 K_n^3}}-1\right) \, du &= 0 \text{ almost surely.}
\end{align}
The fourth term in the Taylor error expansion~\eqref{eq:taylor-error} vanishes similarly because $f_n'$ is bounded; i.e.,
\begin{align} \label{eq:integral-taylor-4}
    \lim_{n \to \infty} \int_{-\frac{K_n}{2}}^{\frac{K_n}{2}} \frac{u}{K_n} f_n'(\xi_1(u)) e^{-\frac{u^2}{2}}\left(e^{\frac{\phi_n'''(\xi_2(u)) u^3}{6 K_n^3}}-1\right) \, du &= 0 \text{ almost surely.}
\end{align}

Combining \eqref{eq:integral-taylor} with~\eqref{eq:integral-taylor-1}, \eqref{eq:integral-taylor-3}, \eqref{eq:integral-taylor-2}, and \eqref{eq:integral-taylor-4}, we arrive at the desired result.

\end{proof}

\begin{proof}[Proof of Lemma~\ref{lemma:laplace-independent-bound-hn}]
It is immediate from~\eqref{eq:laplace-hn-root} that $h_n > 1/\delta_n$.
To obtain an upper bound, recall that $z(h_n) = 0$, for the function $z$ in~\eqref{eq:laplace-hn-root}.
This implies that
\begin{align*}
    \delta_n &= \frac{1}{h_n} + \sum_{i=1}^n \frac{x_i w_{i}}{2^{h_n w_{i}} - 1} \\
    &\leq \frac{1}{h_n} + \sum_{i=1}^n x_i \frac{w_i}{\log(2) h_n w_i} \quad \text{(since } 2^x - 1 > \log(2)x \quad \text{ for all } x > 0) \\
    &= \left(1 + \frac{s_n}{\log(2)}\right)\frac{1}{h_n}
      \leq \left(1 + \frac{n}{\log(2)}\right)\frac{1}{h_n},
\end{align*}
where $s_n = \sum_{i=1}^n x_i$ denotes the number of successful Bernoulli trials.
Therefore,
\begin{align} \label{eq:hn-upper-bound}
  h_n \leq \left(1 + \frac{n}{\log(2)}\right)\frac{1}{\delta_n}.
\end{align}
The upper bound in~\eqref{eq:hn-upper-bound} also allows us to find a tighter lower bound.
Note that $\frac{x}{2^x-1}$ is a decreasing function of $x > 0$, and $w_i < 1$ for all $i$. Therefore, $z(h_n) = 0$ implies
\begin{align*}
  h_n\delta_n - 1
  = \sum_{i=1}^n \frac{x_i h_n w_{i}}{2^{h_n w_{i}} - 1}
  \geq \sum_{i=1}^n \frac{x_i h_n}{2^{h_n} - 1}
  = \frac{h_n }{2^{h_n} - 1} s_n.
\end{align*}
Combining this result with~\eqref{eq:hn-upper-bound}, we obtain the following lower bound:
\begin{align} \label{eq:hn-lower-bound}
    \frac{h_n\delta_n - 1}{n} \geq \frac{h_n }{2^{h_n} - 1} \frac{s_n}{n} \geq \frac{\left(1 + \frac{n}{\log(2)}\right)\frac{1}{\delta_n}}{2^{\left(1 + \frac{n}{\log(2)}\right)\frac{1}{\delta_n}} - 1} \frac{s_n}{n} \geq \frac{\frac{2n}{\delta_n}}{2^{\frac{2n}{\delta_n}} - 1} \frac{s_n}{n}.
\end{align}
This completes the proof of~\eqref{eq:hn-bound-fs}.

To prove the second part, we apply the strong law of large numbers, by which
$ \frac{s_n}{n} = \frac{\sum_{i=1}^n x_i}{n} \xrightarrow{\mathrm{a.s.}} \mathbb{E}[x]$ as $n \to \infty$,
and $\frac{\delta_n}{n} =  \frac{\sum_{i=1}^n (1-x_i)w_i}{n} \xrightarrow{\mathrm{a.s.}} \mathbb{E}[(1-x)w]$ as $n \to \infty$.
Therefore, by the continuous mapping theorem,
\begin{align*}
    \frac{h_n\delta_n - 1}{n} \geq \frac{h_n }{2^{h_n} - 1} \frac{s_n}{n} \geq L_1 \cdot \frac{2}{L_2 }\cdot \frac{1}{2^{\frac{2}{L_2}}-1} := J \quad \text{ almost surely as } n \to \infty ,
\end{align*}
where the expected values $L_1 =  \mathbb{E}[x]$ and $L_2 = \mathbb{E}[(1-x)w]$ do not depend on $n$.
Therefore, in the limit of $n \to \infty$, it holds almost-surely that
\begin{align*}
    (nJ+1)\frac{1}{\delta_n} \leq h_n \leq \left(1 + \frac{n}{\log{2}}\right)\frac{1}{\delta_n},
\end{align*}
Finally, recall that $\delta_n/n \xrightarrow{\mathrm{a.s.}} L_2$ as $n \to \infty$. Consequently, we have, almost surely, that:
    $J/L_2 \leq h_n \leq 1/[\log(2)L_2]$ as $n \to \infty$.
\end{proof}


\end{document}